\documentclass[12pt,a4paper]{article}
\usepackage[truedimen,margin=30mm]{geometry} 

\usepackage{mathrsfs}
\usepackage{amssymb}
\usepackage{amsmath}
\usepackage{ascmac}
\usepackage{amsthm}
\usepackage[pdftex]{graphicx}
\usepackage{natbib}
\usepackage{setspace}
\usepackage{url}

\usepackage{color}
\usepackage{times}   %
\usepackage{comment}
\usepackage{bm}
\usepackage{titlesec}
\titleformat*{\section}{\large\bfseries}
\titleformat*{\subsection}{\it}
\titleformat*{\subsubsection}{\it}

\newtheorem{thm}{Theorem}
\newtheorem{lem}{Lemma}

\newtheorem{remark}{Remark}


\def\ep{{\varepsilon}}
\def\la{{\lambda}}
\def\si{{\sigma}}

\def\Ga{\Gamma}

\def\la{{\lambda}}
\def\non{{\nonumber}}

\def\alt{{\tilde \al}}
\def\bet{{\tilde \be}}
\def\etat{{\tilde \eta}}

\def\pit{{\tilde \pi}}
\def\gat{{\tilde \ga}}

\def\Kc{{\cal K}}
\def\Lc{{\cal L}}

\def\pit{{\tilde \pi}}

\def\pd{\partial}

\def\al{{\alpha}}
\def\be{{\beta}}
\def\ga{{\gamma}}
\def\de{{\delta}}
\def\ep{{\varepsilon}}
\def\la{{\lambda}}
\def\si{{\sigma}}
\def\om{{\omega}}

\def\ka{{\kappa}}

\def\bbe{{\text{\boldmath $\beta$}}}

\def\bphi{{\text{\boldmath $\phi$}}}

\def\bmu{{\text{\boldmath $\mu$}}}

\def\alt{{\tilde \al}}
\def\bet{{\tilde \be}}
\def\etat{{\tilde \eta}}

\def\bbet{{\widetilde \bbe}}

\def\De{{\Delta}}

\def\Ga{{\Gamma}}

\def\bPsi{{\text{\boldmath $\Psi$}}}

\def\b{{\text{\boldmath $b$}}}

\def\e{{\text{\boldmath $e$}}}

\def\x{{\text{\boldmath $x$}}}

\def\A{{\text{\boldmath $A$}}}

\def\O{{\text{\boldmath $O$}}}

\def\Jh{{\widehat J}}
%

%

%

%

\title{{\bf 
Robust Bayesian Inference for Censored Survival Models
}}
\date{}

\begin{document}

\maketitle
\doublespacing

\vspace{-1.5cm}
\begin{center}
{\large 
Yasuyuki Hamura$^{1}$, Takahiro Onizuka$^{2}$, Shintaro Hashimoto$^{3}$ and Shonosuke Sugasawa$^{4}$
}
\end{center}

\medskip
\noindent
$^{1}$Graduate School of Economics, Kyoto University\\
$^{2}$Graduate School of Social Sciences, Chiba University\\
$^{3}$Department of Mathematics, Hiroshima University\\
$^{4}$Faculty of Economics, Keio University

\medskip
\medskip
\medskip
\begin{center}
{\bf \large Abstract}
\end{center}

This paper proposes a robust Bayesian accelerated failure time model for censored survival data. We develop a new family of life-time distributions using a scale mixture of the generalized gamma distributions, where we propose a novel super heavy-tailed distribution as a mixing density. 
We theoretically show that, under some conditions, the proposed method satisfies the full posterior robustness, which guarantees robustness of point estimation as well as  uncertainty quantification. 
For posterior computation, we employ an integral expression of the proposed heavy-tailed distribution to develop an efficient posterior computation algorithm based on the Markov chain Monte Carlo. 
The performance of the proposed method is illustrated through numerical experiments and real data example. 

\vspace{-0cm}

\bigskip\noindent
{\bf Keywords}: Accelerated failure time models; Posterior robustness; scale mixture of generalized gamma distributions; Markov chain Monte Carlo; Survival analysis

\newpage
%
\section{Introduction}
\label{sec:1}

Censored survival data arise in various fields such as medical research, reliability engineering, and economics. When covariate information is available, the Cox proportional hazards model \citep{cox1972regression} is a widely used semi-parametric approach for modeling survival outcomes. However, unobserved heterogeneity related to influential observations can lead to biased parameter inference. For instance, the proportional hazards assumption may be violated in the presence of unobserved confounders \citep{Omori1993}.
As an alternative, the accelerated failure time (AFT) model provides a parametric framework for analyzing censored survival data. 
The AFT model does not rely on the proportional hazards assumption and can be fitted by the maximum likelihood method, enabling a complete description of the hazard function. 
However, standard AFT models often assume light-tailed distributions such as the gamma, Weibull, or log-normal, making parameter estimation sensitive to outliers \citep[e.g.][]{klein2014handbook}.

For robust parameter estimation in AFT models, it is useful to consider infinite mixtures of lifetime distributions by introducing local parameters or random effects \citep[e.g.][]{Marshall2007}. 
Several Bayesian approaches have been proposed in this context. For example, \cite{VallejosSteel2015} introduced a shape mixture of log-normal distributions, \cite{VallejosSteel2017} adopted a rate mixture of Weibull distributions, and \cite{Lachos2017} employed a scale mixture of log-Birnbaum-Saunders distributions.
For outlier detection, \cite{VallejosSteel2015} proposed a method based on the Bayes factor. However, quantifying the uncertainty associated with the presence or absence of outliers remains challenging. Moreover, despite the variety of Bayesian approaches available, a theoretically established framework for ensuring robustness in AFT models has yet to be developed.

To address the limitations of existing robust models for censored survival data, we propose a novel approach by introducing a local parameter into the AFT model, where the local parameter follows an extremely heavy-tailed distribution. Specifically, we focus on the AFT model based on the generalized gamma distribution, a flexible family that includes typical lifetime distributions such as the exponential, gamma, Weibull, and log-normal distributions as special cases \citep[see, e.g.,][]{stacy1962generalization, Marshall2007}. In the proposed model, the local parameter follows a two-component mixture density with a spike-and-slab structure, where the slab component is modeled using a heavy-tailed loglog-Pareto-like distribution. For Bayesian inference, we develop an efficient posterior computation algorithm using data augmentation and appropriate reparameterization. Notably, the two-component structure allows us to quantify the uncertainty regarding the presence of outliers for each observation through a latent binary variable.

A key theoretical contribution of this work is establishing rigorous posterior robustness for the proposed model, where information from outliers is automatically downweighted in the posterior distribution when the outliers are sufficiently large \citep[e.g.,][]{Gagnon2020, Hamura2022log}. While posterior robustness has been extensively studied in the context of linear regression models \citep{Gagnon2020, gagnon2023theoretical, Hamura2022log, hamura2024posterior} and, more recently, in count regression models \citep{hamura2024robust}, no theoretical results have been established for censored survival outcomes. We demonstrate that the posterior distribution of the proposed model satisfies posterior robustness under mild conditions, filling this gap in the literature. Therefore, the proposed method is appealing not only for its computational efficiency but also for its theoretical justification.

The remainder of the paper is structured as follows. In Section \ref{sec:2}, we introduce the generalized gamma distribution with a local scale parameter and propose an accelerated failure time model based on the scale mixtures of the generalized gamma distributions. The posterior computation algorithm of the proposed method is also provided. In Section \ref{sec:3}, the main theorem of this paper is presented. We show that the proposed model satisfies the property of theoretical robustness under some conditions. In Section \ref{sec:sim}, we illustrate the performance of the proposed method through numerical experiments. In Section \ref{sec:analysis}, we apply the proposed methods to real data analysis. 
R code implementing the proposed methods is available at Github repository (\url{https://github.com/sshonosuke/robust_survival})

\section{Robust Censored Survival Models }
\label{sec:2}

\subsection{Survival models with generalized gamma distributions}
Let $y_1,\ldots,y_n$ be the censored survival time.
Furthermore, let $\delta_1,\ldots,\delta_n$ be a censoring indicator such that $\delta_i=1$ for $i=1,\ldots,n$, when the observed survival time $y_i$ is not censored and $\delta_i=0$ otherwise. 
When $\delta_i=0$, there exists an unobserved survival time $t_i$ such that $y_i=\min(C_i, t_i)$, where $C_i$ is a censoring time.  
If $\de _i = 1$, then $y_i = t_i$. 
For $t_i$, we consider the following generalized gamma (GG) distribution: 
\begin{equation}\label{GG}
p_{\mathrm{GG}}(t_i| \alpha , \gamma , \theta_i ) 
= \frac{\gamma\alpha^\alpha}{\Gamma(\alpha)}\theta_i^\alpha t_i^{\alpha\gamma-1}\exp(-\alpha\theta_i t_i^\gamma), \ \ \ \ \ \ t_i>0.
\end{equation}
The GG distribution is recognized as a flexible family of distribution for positive valued data, and has been adopted in modeling survival outcomes. 
For example, \cite{cox2007parametric} provided a useful classification of the hazard function for a kind of generalized gamma distribution and \cite{shukla2023bayes} proposed a Bayesian AFT model based on the generalized gamma distribution.
In (\ref{GG}), $\theta_i$ is an individual-specific parameter controlling scale of the distribution, and $\alpha$ and $\gamma$ are two shape parameters. 
Representative special cases of the GG distribution are gamma ($\gamma=1$) and Weibull ($\alpha=1$) distributions.

Given an auxiliary information $\x_i$, we assume $\theta_i= \exp( \x_i^\top \bbe )/\lambda_i$, where $\lambda_i$ is a local parameter representing an individual-specific effect and $\bbe $ is an unknown vector of regression coefficients. 
Note that the expectation of the model (\ref{GG}) given $\lambda_i$ is given by
\begin{align}\label{expectation-gg}
E[t_i|\lambda_i] = (\alpha \theta_i)^{-1/\gamma} \frac{\Gamma\left(\alpha+1/\gamma\right)}{\Gamma(\alpha)}=\left( \frac{\exp( \x_i^\top \bbe )}{\lambda_i} \right)^{-1/\gamma} \alpha^{-1/\gamma} \frac{\Gamma\left(\alpha+1/\gamma\right)}{\Gamma(\alpha)},
\end{align}
and the reliability function $R_i(t|\theta_i,\alpha,\gamma)$ can be expressed as 
\begin{align}\label{reliability}
R_i(t|\theta_i,\alpha,\gamma) = \int_t^{\infty} \frac{\gamma \alpha^{\alpha}}{\Gamma(\alpha)} \theta_i^{\alpha} x^{\alpha \gamma -1} \exp(-\alpha \theta_i x^{\gamma}) dx = \frac{\Gamma_I(\alpha; \alpha \theta_i t^{\gamma})}{\Gamma(\alpha)},
\end{align}
where $\Gamma_I(\cdot; \cdot)$ denoted the upper incomplete gamma function. The corresponding hazard rate function $h_i(t|\lambda_i, \x _i )$ can be calculated by dividing \eqref{GG} by \eqref{reliability}.

The model (\ref{GG}) includes the gamma model ($\gamma=1$) and the Weibull model ($\alpha=1$), described as follows:  

\begin{itemize}
\item[-]{\bf (Gamma model)} When $\gamma=1$, the model (\ref{GG}) is equivalent to $t_i |\theta_i\sim {\rm Ga}(\phi, \phi\theta_i)$ with $\theta_i=\exp(-\x_i^\top \bbe)/\lambda_i$.
The mean and variance are 
\[
E[t_i|\la_i]=\la_i\exp( \x_i^\top \bbe ) \quad \text{and} \quad  {\rm Var}(t_i|\lambda_i)=\la_i^2\exp(2 {\x _i} ^\top \bbe )/\phi.
\]
Hence, $\lambda_i$ adjusts the mean and variance of each individual and, particularly, $\lambda_i$ can be large for an outlying individual having a long survival time that cannot be well explained by $\exp( \x_i^\top \bbe )$.

\item[-]{\bf (Weibull model)} 
When $\alpha=1$, the model (\ref{GG}) reduces to the Weibull model, described as 
\[
p(t_i|\theta,\phi)=\theta_i\phi t_i^{\phi-1}\exp(-\theta_it_i^{\phi}),
\]
with $\theta_i= \exp( \x_i^\top \bbe )/\lambda_i$.
The hazard function under the Weibull model is given by 
\[
h_i(t|\la_i, \x _i )=\frac{\exp( \x_i^\top \bbe )}{\la_i}\phi t^{\phi}.
\]
Here $\phi t^{\phi}$ is a baseline hazard function, and the other term reflects the potential heterogeneity of the individual hazard. 
When $t_i$ is an outlier having large survival time compared to $\x _i$, $\lambda_i$ can be large so that the resulting hazard is extremely small for such an individual.  
\end{itemize}

For the gamma and Weibull models, the interpretation of $\lambda_i$ is clear and represents the part of the survival time of each observation that cannot be explained by regression. If $\lambda_i$ is estimated to be 1, the survival time of individual $i$ can be well explained by the regression part as well as the standard AFT model. Many existing methods are known to be difficult to interpret as such. For example, the AFT model based on log-normal distribution by \cite{VallejosSteel2015} is defined by
\begin{align}\label{steel}
\log t_i |\lambda_i\sim \mathrm{N}(\x_i^{\top} \bbe , \sigma^2/\lambda_i), \ \ \ \ \ \lambda_i\sim  \pi(\lambda_i),
\end{align}
where $\pi(\lambda_i)$ is a mixing density. Under the model, the conditional expectation of the survival time $t_i$ given $\lambda_i$ is expressed as 
\begin{align}\label{expectation-LN}
E[t_i|\lambda_i, \bbe ,\sigma] =\exp\left( \x_i^{\top} \bbe + \frac{\sigma^2}{2 \lambda_i}\right),
\end{align}
while the role of the parameter $\lambda_i$ is unclear compared to the gamma and Weibull models because the parameter $\sigma^2$ must be estimated. The comparison between the proposed model and the model \eqref{steel} is discussed in Sections \ref{sec:analysis} and \ref{subsec:supp-logT} of the Supplementary Material. The ease of interpreting local parameters is another advantage of considering the generalized gamma distribution and its submodels.

In either case, the prior distribution of $\lambda_i$ should have a heavy tail to handle outlying observations, which makes robust inference on the other parameters such as $\bbe $. In this paper, we consider a family of scale mixture of generalized gamma distributions defined by
\begin{equation*}
f(t_i|\alpha,\bbe, \gamma) = \int_0^{\infty} p_{\mathrm{GG}}(t_i|\alpha,\gamma,\exp( \x_i^\top \bbe )/\lambda_i) \pi(\lambda_i)d \lambda_i,
\end{equation*}
where $p_{\mathrm{GG}}(t_i|\alpha,\gamma,\theta_i)$ corresponds to the density of the generalized gamma distribution defined by \eqref{GG}.

\subsection{Robust Bayesian inference via scale mixture of GG distributions}
\label{subsec:22} 

We consider the following two-component mixture local prior for $\lambda_i$: 
\begin{equation}\label{eq:local_double_log_right} 
\pi(\lambda_i | s, c)=(1-s) \cdot \delta_1 (\lambda_i) + s \cdot G_{\mathrm{DLH}}(\lambda_i | c) , 
\end{equation}
where $s \in (0, 1)$ is the probability of observing an outlying observation. 
The first component $\delta_1$ is point mass at $1$ such that the posterior distribution may be close to what would have been obtained under the usual model in the absence of outliers ($s \to 0$). 
The second component $G_{\mathrm{DLH}}$ is the doubly log-adjusted heavy-tailed (DLH) distribution with a fixed hyperparameter $c > 0$, described as 
\begin{align}
G_{\mathrm{DLH}}(\lambda_i | c) 
={c \over 1 + \la _i} {1 \over 1 + \log (1 + \la _i )} {1 \over [1 + \log \{ 1 + \log (1 + \la _i ) \} ]^{1 + c}},
\label{eq:DLH_2025} 
\end{align}
for $\la_i >0$.
The distribution (\ref{eq:DLH_2025}) is a special case of the iteratively log-adjusted distribution considered in \cite{hamura2020shrinkage}, and can be interpreted as a loglog-Pareto distribution.
A notable property is that it has an extremely heavy tail, that is, $G_{\mathrm{DLH}}(\lambda_i | c)\approx \lambda_i^{-1}$ ignoring log-factors. 
This is a key feature for accommodating outliers to achieve theoretically valid robust Bayesian inference. 
Incidentally, as seen in Section \ref{subsec:supp-nonrobust} of the Supplementary Material, the use of the iterated logarithm is important for robustness in our general setting. 
This is in contrast, for example, to the existing theoretical results on the posterior robustness under linear regression \citep{Hamura2022log} and count data \citep{hamura2024robust}, where only one log-adjustment term is included.

We introduce a latent binary variable $z_i\in \{0,1\}$ following a Bernoulli distribution, $z_i\sim {\rm Ber}(s)$.
Then, the mixture distribution (\ref{eq:local_double_log_right}) can be expressed as 
$$
\lambda_i|(z_i=1) \sim G_{\mathrm{DLH}}, \ \ \ \lambda_i|(z_i=0) \sim \delta_1.
$$
When the $i$th observation is not outlier, we can fit the standard GG model to the data, so that $\lambda_i=1$ (i.e. $z_i=0$). 
On the other hand, when $i$th observation is an outlier, $z_i$ should be $1$ and introduce heavy-tailed distribution for $\lambda_i$ to absorb the effect of outliers. 
Hence, the latent variable $z_i$ can be regarded as an indicator of whether the $i$th observation is an outlier or not, so that by computing the posterior probability of $z_i$, one can obtain the posterior probability of being outliers for each observation. 
Based on the above representation with $z_i$, we can also express $\lambda_i$ as $\lambda_i=\eta_i^{z_i}$ and $\eta_i\sim G_{\mathrm{DLH}}$, which will be used in developing a posterior computation algorithm.

\subsection{Prior and posterior distributions}
We perform Bayesian inference by assigning prior distributions for unknown parameters. 
The proposed model includes several model parameters.
The main GG distribution includes $\bbe $ (regression coefficients) and $(\alpha,\gamma)$ (two shape parameters in GG), and the local prior of $\lambda_i$ includes $s$ (mixing proportion). 
For these parameters, we employ the priors as $\bbe\sim {\rm{N}}_p ( \b _{\bbe } , \A_{\bbe}^{-1} )$, $\alpha\sim {\rm GIG} ( a_{\al } , b_{\al } , c_{\al } )$, $\gamma\sim {\rm{GIG}} ( a_{\ga } , b_{\ga } , c_{\ga } )$, and $s\sim {\rm{Beta}} ( a_s , b_s )$, where $\b _{\bbe } ,\A_{\bbe}^{-1}, a_{\al } , b_{\al } , c_{\al } , a_{\ga } , b_{\ga }, c_{\ga}$, $a_s$ and $b_s$ are fixed hyperparameters and ${\rm GIG}( a_0 , b_0 , c_0 )$ denotes the generalized inverse Gaussian (GIG) distribution with parameters $a_0 , b_0$ and $c_0$ whose density $p_{\rm GIG}(x| a_0 , b_0 , c_0 )$ is 
$$
p_{\rm GIG}(x| a_0 , b_0 , c_0 ) \propto x ^{c_0 - 1} \exp (- a_0 x - b_0 / x ).
$$
Note that $\A_{\bbe}^{-1}$ should be positive definite, and scalar parameters, $a_{\al } , b_{\al } , a_{\ga } , b_{\ga }$, $a_s$ and $b_s$ should be positive. 
A prior distribution for $( \eta _i , z_i )$ consistent with (\ref{eq:local_double_log_right}) is given by $p( \eta_i , z_i ) = G_{\mathrm{DLH}}(\eta_i;  c)  \times s^{z_i} (1 - s)^{1 - z_i}$.

Let $Z=\{z_1,\ldots,z_n\}$ and $E=\{\eta_1,\ldots,\eta_n\}$ be collections of latent variables. 
Also, we define $\Phi=\{\bbe , \alpha, \gamma, s\}$ as a collection of unknown model parameters. 
Under the prior settings described above, the joint posterior distribution of $Z, E$ and $\Phi$ given the survival time, $\mathcal{T}=\{t_1,\ldots,t_n\}$ is given by
\begin{equation}\label{eq:pos}
\begin{split}
&p(Z, E, \Phi \mid \mathcal{T}) \\
&\propto \frac{\al^{c_{\al} + n\alpha -1}}{\Gamma(\alpha)^n} \ga^{c_{\ga} + n - 1}
\exp\left\{ - a_{\al } \al - \frac{b_{\al}}{\al} - a_{\ga } \ga - \frac{b_{\ga }}{\ga}  -\frac12 ( \bbe - \b_{\bbe} )^\top \A_{\bbe} ( \bbe - \b_{\bbe} ) \right\} \\
& \ \ \ \ \ \ \ 
\times  s^{a_s - 1} (1 - s)^{b_s - 1} \prod_{i = 1}^{n}  s^{z_i} (1 - s)^{1 - z_i}\frac{t_i^{\ga \al}}{\eta_i^{\alpha z_i}}
\exp\left\{ \alpha\x_i^{\top} \bbe  - \al \exp ( \x_i^{\top} \bbe ) \frac{{t_i}^{\ga }}{{\eta _i}^{z_i}} \right\} \\
& \ \ \ \ \ \ \ 
\times G_{\rm DLH}( \eta _i | c) \text{,}
\end{split}
\end{equation}
where $G_{\rm DLH}( \eta _i | c)$ is given by (\ref{eq:DLH_2025}). 
Note that the above posterior is based on the uncensored survival time $\mathcal{T}$, not on the censored survival time $\mathcal{Y}=\{y_1,\ldots,y_n\}$.
Hence, the survival time $t_i$ under $\delta_i=0$ is missing and is treated as an unobserved latent variable. 
As described in the next section, unobserved survival time can be easily imputed within Markov Chain Monte Carlo iterations.

\subsection{Posterior computation algorithm}\label{subsec:pos} 

Here, we provide an efficient posterior computation algorithm of the joint posterior (\ref{eq:pos}).
Given the latent variables, $\eta_i$ and $z_i$, the posterior computation of (\ref{eq:pos}) reduces to exploring the posterior of the GG model. 
For the latent variables, the full conditional distribution of $\eta_i$ is not a familiar form, but we provide an efficient Gibbs sampler by employing a novel integral expression of the DLH distribution (\ref{eq:DLH_2025}). 
Specifically, the DLH distribution (\ref{eq:DLH_2025}) holds the following integral expression:  
\begin{align}
G_{\rm DLH}(\eta | c)
&= \int_{(0, \infty )^3} {u^{1 + c - 1} e^{- u} \over \Ga (1 + c)} {v^{1 + u - 1} e^{- v} \over \Ga (1 + u)} {c \over {\eta }^{1 + v}} {w^{1 + v - 1} e^{- w} \over \Ga (1 + v)} e^{- w / \eta } d(u, v, w) \non 
\end{align}
for all $\eta \in (0, \infty )$.
Then, we can introduce additional latent variables $u_i , v_i , w_i \in (0, \infty )$ so that the full conditional of $\eta _i$ becomes a familiar one, as shown below. 
When $z_i = 0$, we sample $\eta _i$ directly from the DLH distribution.

In order to decrease autocorrelation, we employ reparameterization of the model. 
Instead of $(\alpha, \beta)$, we consider $ \alt=\alpha \gamma^2$ and $\bbet=\bbe/\gamma$ and then use the independent Metropolis-Hastings algorithm based on the ideas of \cite{Miller2019}. 
We also consider the change of variable $\eta _i = {\tilde{\eta } _i}^{1 - z_i + z_i \ga }$ and  $\gat = \ga / (1 + \ga ) \in [0, 1]$ and use the piecewise linear approximation \citep[e.g.][Section 3.1.4]{luengo2020survey} for $\gat$.

Details of the posterior computation algorithm are described as follows: 

\begin{itemize}
\item
(Sampling of censored $t_i$) \ For $i = 1, \dots , n$ with $\de_i = 0$, generate $t_i$ from the truncated GG distribution proportional to 
\begin{align*}
I( t_i > C_i ) {t_i}^{\ga \al - 1} \exp\left\{- \al \la_i^{-1}\exp ( \x_i^{\top} \bbe) {t_i}^{\ga} \right\}.
\end{align*}

\item
(Sampling of $z_i$) \ For $i = 1, \dots , n$, generate the binary indicator $z_i \in \{ 0, 1 \} $ from the Bernoulli distribution with success probability being proportional to 
\begin{align*}
s^{z_i} (1 - s)^{1 - z_i} (1 - z_i + z_i \ga ) {\tilde{\eta } _i}^{ z_i(\ga-1) } G_{\rm DLH}({\tilde{\eta}_i}^{1 - z_i + z_i \ga} | c) 
p_{\rm Ga}
(\al t_i^{\gamma}| \alpha, \exp ( {\x _i}^{\top } \bbe ) / {\tilde{\eta } _i}^{z_i \ga } ),
\end{align*}
where $p_{\rm Ga}(x|a,b)$ denotes the gamma density with shape parameter $a$ and rate parameter $b$.  

\item
(Sampling of $\eta_i$) \ For $i=1,\ldots,n$, first generate the three latent variables, $u_i, v_i$ and $w_i$ as 
\begin{align*}
&u_i \sim {\rm{Ga}} ( u_i | 1 + c, 1 + \log \{ 1 + \log (1 + \eta _i ) \} ),  \ \ \ \ 
v_i \sim {\rm{Ga}} ( v_i | 1 + u_i , 1 + \log (1 + \eta _i)),\\
&w_i \sim {\rm{Ga}} ( w_i | 1 + v_i , 1 + 1 / \eta _i ). 
\end{align*}
where $\eta _i = {\etat _i}^{1 - z_i + z_i \ga }$. 
Then, 
generate $\etat _i = 1 / {e_i}^{1 / \ga }$ with $e_i \sim {\rm{Ga}} ( v_i + \alt / \ga ^2 , w_i + ( \alt / \ga ^2 ) \exp ( \x_i^{\top} \bbet \ga ) {t_i}^{\ga } )$ when $z_i=1$, and generate $\tilde{\eta}_i$ as $\exp[ \exp \{ b_i/(1-b_i) \} - 1] - 1 $ with $b_i\sim {\rm Beta}(1,c)$ when $z_i=0$.

\item 
(Sampling of $s$) \ Generate $s$ from ${\rm{Beta}}( a_s + \sum_{i = 1}^{n} z_i , b_s +n -\sum_{i = 1}^{n} z_i)$.

\item
(Sampling of $\alpha$) \ The full conditional distribution of $\alpha$ is proportional to 
\begin{align*}
&\alpha^{c_{\al } - 1} e^{- a_{\al }\alpha - b_{\al } /\alpha}  
{\alpha^{n\alpha} \over \Ga ( \alpha )^n} 
\exp \Big[ \alpha\sum_{i = 1}^{n} \Big\{ \x_i^{\top} \bbe + \log {{t_i}^{\ga } \over {\etat _i}^{z_i \ga }} - \exp ( \x_i^{\top} \bbe ) {{t_i}^{\ga } \over {\etat _i}^{z_i \ga }} \Big\} \Big]
\end{align*}
with $\alpha=\alt /\gamma^2$. 
We generate $\alt$ by using the independent Metropolis-Hastings algorithm of \cite{Miller2019}, where the details are given in the Supplementary Material.

\item
(Sampling of $\bbe$) \ The full conditional distribution of $\bbe$ is proportional to 
\begin{align*}
&\exp \Big[
- \frac12 ( \bbe-\b_{\bbe })^{\top } \A _{\bbe } ( \bbe - \b _{\bbe } )  
+
\alpha\sum_{i = 1}^{n} \Big\{ \x_i^{\top} \bbe + \log {{t_i}^{\ga } \over {\etat _i}^{z_i \ga }} - \exp ( \x_i^{\top} \bbe ) {{t_i}^{\ga } \over {\etat _i}^{z_i \ga }} \Big\}
\Big]
\end{align*}
with $\bbe =\gamma\bbet$.
We generate $\bbet$ by using the independent Metropolis-Hastings algorithm of \cite{Miller2019}, where the details are given in the Supplementary Material.

\item
(Sampling of $\ga $) \ 
The full conditional distribution of $\gat$ is proportional to $p_{\gamma}(\gat / (1 - \gat ))/(1 - \gat )^2$, where $p_{\gamma}( \ga)$ is proportional to 
\begin{align}
&{\ga ^p \over \ga ^2} \Big( {\alt \over \ga ^2} \Big) ^{c_{\al } - 1} e^{- a_{\al } ( \alt / \ga ^2 ) - b_{\al } / ( \alt / \ga ^2 )} \exp \Big\{ - {( \bbet \ga - \b _{\bbe } )^{\top } \A _{\bbe } ( \bbet \ga - \b _{\bbe } ) \over 2} \Big\} \ga ^{c_{\ga } - 1} e^{- a_{\ga } \ga - b_{\ga } / \ga } \non \\
&\times \Big( \prod_{\substack{1 \le i \le n \\ z_i = 1}} {\ga \over \tilde{\eta } _i} {{\tilde{\eta } _i}^{\ga } \over 1 + {\tilde{\eta } _i}^{\ga }} {1 \over 1 + \log (1 + {\tilde{\eta } _i}^{\ga } )} {1 \over [1 + \log \{ 1 + \log (1 + {\tilde{\eta } _i}^{\ga } ) \} ]^{1 + c}} \Big) \non \\
&\times \ga ^n \Big\{ {( \alt / \ga ^2 )^{\alt / \ga ^2} \over \Ga ( \alt / \ga ^2 )} \Big\} ^n \exp \Big[ {\alt \over \ga ^2} \sum_{i = 1}^{n} \Big\{ \x_i^{\top} \bbet \ga + \log {{t_i}^{\ga } \over {\tilde{\eta } _i}^{z_i \ga }} - \exp ( \x_i^{\top} \bbet \ga ) {{t_i}^{\ga } \over {\tilde{\eta } _i}^{z_i \ga }} \Big\} \Big].
\end{align}
We use an independent Metropolis-Hastings algorithm to generate $\gat$ and obtain $\ga = \gat / (1 - \gat )$. 
See the Supplementary Material for details. 
\end{itemize}

\section{Theoretical properties}
\label{sec:3}
We here discuss theoretical robustness of the proposed model. 
For simplicity, we assume that $\de _i = 1$ for all $i = 1, \dots , n$; otherwise, we would assume that $\de _i = 1$ whenever $i$ corresponds to an outlier. 
Suppose that $t_1 , \dots , t_n$ are survival times following the GG distribution, namely, $t_i \sim p_{\mathrm{GG}} (\alpha , \gamma , \exp ( \x_i^{\top} \bbe ) / \la _i )$. 
Further, for each $i = 1, \dots , n$, the local variable $\la _i$ is distributed as the mixture DLH distribution (\ref{eq:local_double_log_right}) 
with the first component for non-outliers, $\de _1$, replaced for mathematical tractability by an approximate bounded density $\pi _0$ on $(0, \infty )$, such as ${\rm{IG}} (100, 100)$, which has lighter tails than the super heavy-tailed second component $G_{\mathrm{DLH}}$ given by (\ref{eq:DLH_2025}). 
The prior distribution for the global parameters is decomposed as $\pi ( \al , \bbe , \ga ) = \pi ( \al , \ga ) \pi ( \bbe | \al , \ga )$ and the induced priors are given by $\pit ( \al , \bbet , \ga ) = \pi ( \al , \ga ) \pit ( \bbet | \al , \ga ) = \pi ( \al , \ga ) \ga ^p \pi ( \ga \bbet | \al , \ga )$. 

Let $\Kc , \Lc \subset \{ 1, \dots , n \} $ satisfy $\Kc \cup \Lc = \{ 1, \dots , n \} $, $\Kc \cap \Lc = \emptyset $, and $\Lc \neq \emptyset $, representing the sets of indices of non-outliers and outliers, respectively. 
Following a settings of exploring fully Bayesian robustness \citep[e.g.][]{Gagnon2020}, suppose that while $t_i$ is fixed for $i \in \Kc $, there exist $a_i \in \mathbb{R}$ and $b_i > 0$ satisfying $\log t_i = a_i + b_i \om $ for $i \in \Lc $, where $\om \to \infty $. 
We say that the posterior is robust if 
\begin{align}
\lim_{\om \to \infty } p( \al , \bbet , \ga \mid \mathcal{T} ) 
= p( \al , \bbet , \ga \mid \{ t_i | i \in \Kc \} ) \text{.} \non 
\end{align}
This definition has been adopted in the context of Bayesian posterior robustness \citep[e.g.][]{d2013, d2015,Gagnon2020}. 
This clearly means that the effects of outliers are automatically removed as $\om \to \infty $.

\begin{thm}
\label{thm:robustness_main} 
Suppose that there exist $a_{\bbet } > 0$, $b_{\bbet } > 0$, and $M > 0$ such that for all $( \al , \bbet , \ga ) \in (0, \infty ) \times \mathbb{R} ^p \times (0, \infty )$, we have 
\begin{align}
\pit ( \bbet | \al , \ga ) &\le M \prod_{k = 1}^{p} {| \bet _k |^{a_{\bbet } - 1} \over (1 + | \bet _k |)^{a_{\bbet } + b_{\bbet }}} \text{.} \non 
\end{align}
Suppose that there exists $\ka > 0$ such that 
\begin{align}
E \Big[ {1 \over \ga ^{\ka }} + \ga ^{\ka } + {1 \over \{ ( \min \{ \al , \al ^{1 / 2} \} ) \ga \} ^{\ka }} \Big] < \infty \text{.} \non 
\end{align}
Then the posterior is robust. 
\end{thm}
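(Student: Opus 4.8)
The plan is to reduce posterior robustness to a sharp asymptotic analysis of the marginal lifetime density obtained after integrating out each local parameter $\la _i$, and then to transfer this analysis to the posterior by a dominated--convergence argument. First I would form, for each $i$, the marginal
\[
f( t_i \mid \al , \bbe , \ga ) = \int_0^{\infty} p_{\mathrm{GG}}( t_i \mid \al , \ga , e^{ \x_i^{\top} \bbe } / \la ) \, \pi ( \la ) \, d \la , \qquad \pi ( \la ) = (1 - s) \pi_0 ( \la ) + s \, G_{\mathrm{DLH}}( \la \mid c ) ,
\]
and write the posterior, using $\bbe = \ga \bbet $, as
\[
p( \al , \bbet , \ga \mid \mathcal{T} ) \propto \pit ( \al , \bbet , \ga ) \Big( \prod_{i \in \Kc } f( t_i \mid \al , \ga \bbet , \ga ) \Big) \Big( \prod_{i \in \Lc } f( t_i \mid \al , \ga \bbet , \ga ) \Big) .
\]
Since the non-outlying factors are held fixed, robustness will follow once the outlying factors are shown to contribute, asymptotically, a constant independent of $( \al , \bbet , \ga )$ that cancels between the numerator and the normalising constant.

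The analytic heart is a single-observation lemma describing $f( t_i \mid \al , \ga \bbet , \ga )$ as $t_i \to \infty $. After the substitution $\la = \mu_i / r$ one obtains the representation
\[
f( t_i \mid \al , \ga \bbet , \ga ) = \frac{ \ga \, \mu_i }{ \Ga ( \al ) \, t_i } \int_0^{\infty} r^{ \al - 2 } e^{- r} \, \pi ( \mu_i / r ) \, dr , \qquad \mu_i = \al \, e^{ \ga \x_i^{\top} \bbet } t_i^{\ga } .
\]
I would then use that (i) the light-tailed part $\pi_0$ is negligible relative to the slab for large $t_i$, and (ii) the super heavy tail $G_{\mathrm{DLH}}( \la \mid c ) \sim c / \{ \la \log \la \, ( \log \log \la )^{1 + c} \}$ is, up to the iterated-logarithmic factors, scale invariant. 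Because the $\Ga $-type integral concentrates on $r = O(1)$ while $\mu_i \to \infty $, the factors $\log \la $ and $\log \log \la $ at $\la = \mu_i / r$ may be replaced by $\log \mu_i $ and $\log \log \mu_i $; carrying this out, and noting the cancellation $\Ga ( \al ) \int_0^{\infty} r^{\al - 1} e^{-r} dr = \Ga ( \al )^2$ is exact, shows
\[
f( t_i \mid \al , \ga \bbet , \ga ) = \frac{ s c }{ t_i \, \log t_i \, ( \log \log t_i )^{1 + c} } \, \bigl( 1 + o(1) \bigr) .
\]
Here crucially $\ga \log t_i / \log \mu_i \to 1$ (one logarithm absorbs the $\ga $) and $\log \log \mu_i / \log \log t_i \to 1$ (the iterated logarithm absorbs the remaining dependence on $\al $ and $\bbet $), so the leading term is free of $( \al , \bbet , \ga )$. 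This is exactly the degeneracy the double log-adjustment is designed to produce.

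With the lemma in hand, write $\bar f_i$ for the parameter-free leading term and multiply numerator and normalising integral by $\prod_{i \in \Lc } \bar f_i^{-1}$. The integrand then converges pointwise to $\pit ( \al , \bbet , \ga ) \prod_{i \in \Kc } f( t_i \mid \al , \ga \bbet , \ga )$, whose normalisation is precisely $p( \al , \bbet , \ga \mid \{ t_i \mid i \in \Kc \} )$, and it remains to pass to the limit inside the integral. The main obstacle is the dominated-convergence step: one must produce $g( \al , \bbet , \ga )$, integrable against $\pit ( \al , \bbet , \ga ) \, d( \al , \bbet , \ga )$, bounding $\prod_{i \in \Kc } f( t_i \mid \cdot ) \prod_{i \in \Lc } \{ f( t_i \mid \cdot ) / \bar f_i \}$ uniformly for all large $\om $. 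Away from the boundary the ratios are bounded uniformly in $\om $ by the lemma, but as $\al \to 0$ or $\infty $, $\ga \to 0$ or $\infty $, and $\| \bbet \| \to \infty $ the two-sided bounds on $f$ degrade and grow polynomially. The tail hypothesis $\pit ( \bbet \mid \al , \ga ) \le M \prod_{k} | \bet_k |^{ a_{\bbet } - 1 } (1 + | \bet_k |)^{- a_{\bbet } - b_{\bbet } }$ makes the $\bbet $-integral of the majorant converge, while $E[ \ga^{- \ka } + \ga^{\ka } + \{ ( \min \{ \al , \al^{1/2} \} ) \ga \}^{- \ka } ] < \infty $ controls the $( \al , \ga )$-behaviour; the mixed term $( \min \{ \al , \al^{1/2} \} \ga )^{- \ka }$ tames the regime of small $\al $ or $\ga $, where $\mu_i$ fails to be large, the heavy-tail asymptotics are weakest, and only the crudest bounds on $f$ (and a Stirling-type control of $\al^\al / \Ga ( \al )$ together with the $r$-integral, uniform in $\al $) are available. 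Establishing those uniform upper and lower bounds on $f$ across all parameter regimes, and checking that the products over $\Kc $ and $\Lc $ remain integrable, is the laborious part; the pointwise limit and the final cancellation are then immediate.
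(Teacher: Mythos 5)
Your overall skeleton coincides with the paper's: marginalize out $\la_i$, change variables to $\bbet=\bbe/\ga$, normalize each outlying factor by the parameter-free asymptotic $C/\{t_i\log t_i(\log\log t_i)^{1+c}\}$ of the marginal, prove a single-observation lemma that the normalized factor tends to $1$ (this is exactly the paper's Lemma \ref{lem:pointwise}, and your computation that the $\ga$ and $\Ga(\al)$ cancel is the right mechanism), and then pass to the limit in the normalizing integral using the two tail hypotheses. Up to that point the proposal is faithful to the paper.

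The genuine gap is in the limit-interchange step, which you compress into ``produce a single majorant $g$, integrable against $\pit$, bounding the product uniformly for all large $\om$.'' No such majorant exists over all of $\mathbb{R}^p$, and the reason is the crux of the whole proof. The pointwise limit $f\to 1$ requires $\mu_i=\al e^{\ga\x_i^{\top}\bbet}t_i^{\ga}\to\infty$, which fails on the set where $|\log t_i+\x_i^{\top}\bbet|\le\ep|\log t_i|$ for some $i\in\Lc$; there the only available bound on the normalized outlier factor is of order $\om(\log\om)^{1+c}$ (the paper's Lemma \ref{lem:K_new}). On that set one can only say that some $l\ge 1$ coordinates satisfy $|\bet_k|\ge\de\om$, so the best $\om$-uniform majorant obtainable from $1(|\bet_k|\ge\de\om)\,\om(\log\om)^{1+c}\lesssim |\bet_k|(\log|\bet_k|)^{1+c}$ combined with the prior tail $|\bet_k|^{a_{\bbet}-1}(1+|\bet_k|)^{-a_{\bbet}-b_{\bbet}}$ behaves like $|\bet_k|^{-b_{\bbet}}(\log|\bet_k|)^{1+c}$ at infinity, which is integrable only for $b_{\bbet}>1$ --- contradicting the theorem's point that $b_{\bbet}$ may be arbitrarily small. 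The paper avoids this by not using dominated convergence there: it invokes the geometric partition of $\mathbb{R}^p$ from Lemma 1 of \cite{hamura2024posterior}, which guarantees that the number of large coordinates matches the number of uncontrolled outlier indices, and then shows \emph{directly} that the integral over each bad cell is $O((\log\om)^{l(1+c)}/\om^{l b_{\bbet}})\to 0$; dominated convergence is applied only on the cell where every outlier satisfies $|\log t_i+\x_i^{\top}\bbet|>\ep|\log t_i|$ (and $\al,\ga$ are restricted to moderate ranges, with the remaining $(\al,\ga)$ regimes handled by the moment condition via Lemmas \ref{lem:K_improved} and \ref{lem:L_improved}). Without this decomposition and the matching between large coordinates and unclipped outlier factors, the ``laborious part'' you defer is not merely laborious but unworkable as stated.
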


The proof of the theorem is given in Section \ref{sec:proof} of the Supplementary Material. %
In effect, the first assumption means that Theorem \ref{thm:robustness_main} can be applied only when we use an independent prior for $( \al , \bbet , \ga )$. 
However, the two assumptions are related only to the tails of the priors. 
Moreover, since the constants $a_{\bbet }$, $b_{\bbet }$, and $\ka $ can be chosen to be arbitrarily small (e.g., we might let $\ka = d_0 / 2$ if the marginal density of $\ga $ is proportional to $\ga ^{- 1 - d_0}$ as $\ga \to \infty $ for $d_0 > 0$), virtually no restrictions are imposed on tail behavior.

Theorem~\ref{thm:robustness_main} indicates that the posterior distribution of  proposed model is entirely robust, that is, not only points estimates (such as the posterior means) but also uncertainty quantification (such as the credible intervals) is valid under existence of outliers. 
A key ingredient of the posterior robustness is the heavy-tailed property of the proposed DLH distribution (with double log-adjustment) for the local parameter. 
In the Supplementary Material, we show that the double log-adjustment is essential as the slightly lighter-tailed distribution with single log-adjustment does not hold posterior robustness. 
Moreover, we also show that the log-$t$ distribution \citep[e.g.][]{VallejosSteel2015} does not hold posterior robustness, either, which is also given in the Supplementary Material.

\section{Simulation study}
\label{sec:sim}

We evaluate the performance of the proposed methods through simulation studies. 
We set $n=200$ (number of observations) and $p=3$ (dimension of regression coefficient) throughout this study.
For $i=1,\ldots,n$, two covariates, $x_{i1}$ and $x_{i2}$ are independently generated as $x_{i1}\sim U(0,2)$ and $x_{i2}\sim U(-2,2)$. 
We generated (uncensored and uncontaminated) survival observations from the following two scenarios: 
\begin{align*}
&({\rm GA}) \ \ \ t_i^{\ast}\sim {\rm Ga}(\alpha, \alpha/\theta_i), \ \ \ \ \theta_i=\exp(\beta_0+\beta_1 x_{i1} +\beta_2 x_{i2}), \\
&({\rm GG}) \ \ \ t_i^{\ast}\sim {\rm GG}(\alpha, \gamma, \theta_i), \ \ \ \ \theta_i=\exp(\beta_0+\beta_1 x_{i1} +\beta_2 x_{i2}),
\end{align*}
where the true parameter values are set to $(\beta_0, \beta_1, \beta_2)=(0.5, 2, -0.5)$, $\alpha=10$ in scenario-(GA), and $(\beta_0, \beta_1, \beta_2)=(4, 1, -1)$, $(\alpha,\gamma)=(5,2)$ in scenario-(GG).
For observations with $x_{i2}>0.5$, we generated a binary indicator of outliers as $z_i\sim {\rm Ber}(\omega)$, and set the observed data $t_i$ as $t_i=t_i^{\ast}$ when $z_i=0$ and $t_i=t_i^{\ast}+100$ when $z_i=1$.
The censoring time $C_i$ is generated from $C_i|(z_i=0)\sim U(50, c_{\rm max})$ and $C_i|(z_i=1)=\infty$ (no censoring for outliers), where $c_{\rm max}=\max({t_1^{\ast},\ldots,t_n^{\ast}, 50})$.
Then, the final observation is defined as ${\rm min}(t_i, C_i)$. 
Note that the average censoring rate is around $20\%$ in both scenarios.
In this study, we define the target parameter as the conditional expectation of the genuine observation $E[t^{\ast}|x]$ evaluated at the three points, $(x_1, x_2)=(0.5, -1), (1, 0)$ and $(1.5, 1)$, denoted by reg1, reg2 and reg3, respectively.

For the generated data, we applied the proposed models, RGG (robust generalized gamma), RGA (robust gamma), and RWB (robust Weibull) models.
For comparison, we also adopted their standard versions, GG (generalized gamma), GA (gamma) and WB (Weibull), obtained by setting $\lambda_i=1$ in the robust model.
For the six methods, 2000 posterior samples are generated after discarding the first 2000 samples as burn-in. 
Based on the posterior samples, we computed the posterior means and $95\%$ credible intervals of 
reg1, reg2 and reg3.
To evaluate the performance, we calculated the mean squared errors (MSE) of the posterior means and coverage probabilities (CP) of the credible intervals, based on 500 Monte Carlo replications. 
The results are reported in Tables~\ref{tab:MSE} and \ref{tab:CP}.
In Scenario-(GA) with $\omega=0$, both GG and GA perform quite well as expected, while their robust version, RGG and RGA also exhibit comparable performance, indicating that the proposed robust methods are efficient even under no contamination. 
On the other hand, once outliers are included in the data, non-robust methods are highly affected and fail to provide reasonable results, while the robust methods can still give accurate point estimates. 
Similar results can be found in the interval estimation. 
When outliers are not included, all the methods produce credible intervals with CP around the nominal level. 
However, once outliers exist, CP of the non-robust methods considerable small while those of the robust methods are stable.

\begin{table}[htb!]
\caption{Logarithmic Mean squared errors (MSE) of six methods for three regression values, based on 500 Monte Carlo replications.}
\begin{center}
\begin{tabular}{ccccccccccccccccccc} 
\hline
Scenario & $\omega$ &  &  & RGG & GG & RGA & GA & RWB & WB \\
 \hline
 &  & reg1 &  & 0.07 & 0.07 & 0.06 & 0.06 & 0.07 & 0.08 \\
(GA) & 0\% & reg2 &  & 0.07 & 0.07 & 0.07 & 0.07 & 0.08 & 0.07 \\
 &  & reg3 &  & 0.45 & 0.46 & 0.45 & 0.45 & 0.49 & 0.50 \\
\hline
 &  & reg1 &  & 0.07 & 1.99 & 0.52 & 2.75 & 0.25 & 2.75 \\
(GA) & 5\% & reg2 &  & 0.08 & 3.48 & 0.71 & 4.35 & 0.63 & 4.38 \\
 &  & reg3 &  & 0.50 & 5.05 & 1.26 & 6.12 & 1.89 & 6.17 \\
 \hline
 &  & reg1 &  & 0.09 & 2.99 & 0.23 & 3.40 & 0.22 & 3.40 \\
(GA) & 10\% & reg2 &  & 0.09 & 4.70 & 0.44 & 5.24 & 0.25 & 5.24 \\
 &  & reg3 &  & 0.54 & 6.40 & 1.18 & 7.11 & 0.87 & 7.12 \\
\hline
 &  & reg1 &  & 0.16 & 0.18 & 0.16 & 0.16 & 0.19 & 0.24 \\
(GG) & 0\% & reg2 &  & 0.04 & 0.05 & 0.04 & 0.04 & 0.05 & 0.06 \\
 &  & reg3 &  & 0.06 & 0.06 & 0.06 & 0.06 & 0.07 & 0.09 \\
 \hline
 &  & reg1 &  & 0.14 & 2.48 & 0.53 & 2.09 & 1.05 & 2.06 \\
(GG) & 5\% & reg2 &  & 0.04 & 3.07 & 0.38 & 3.69 & 1.02 & 3.79 \\
 &  & reg3 &  & 0.06 & 3.44 & 0.58 & 4.64 & 1.88 & 4.82 \\
 \hline
 &  & reg1 &  & 0.67 & 3.39 & 0.39 & 2.56 & 1.03 & 2.53 \\
(GG) & 10\% & reg2 &  & 1.19 & 4.24 & 1.50 & 4.62 & 2.08 & 4.66 \\
 &  & reg3 &  & 1.69 & 4.80 & 2.52 & 5.73 & 3.09 & 5.82 \\
 \hline
\end{tabular}
\end{center}
\label{tab:MSE}
\end{table}

\begin{table}[htb!]
\caption{Coverage probabilities (CP) of six methods for three regression values, based on 500 Monte Carlo replications.}
\begin{center}
\begin{tabular}{ccccccccccccccccccc} 
\hline
Scenario & $\omega$ &  &  & RGG & GG & RGA & GA & RWB & WB \\
 \hline
 &  & reg1 &  & 95.0 & 95.0 & 96.2 & 95.6 & 93.0 & 93.0 \\
(GA)& 0\% & reg2 &  & 96.0 & 96.4 & 96.8 & 96.6 & 96.0 & 96.8 \\
 &  & reg3 &  & 93.0 & 92.8 & 93.2 & 93.4 & 91.0 & 90.6 \\
 \hline
 &  & reg1 &  & 94.4 & 20.5 & 92.8 & 22.8 & 88.3 & 25.3 \\
(GA) & 5\% & reg2 &  & 94.1 & 0.2 & 91.4 & 0.2 & 86.7 & 0.2 \\
 &  & reg3 &  & 93.7 & 0.5 & 91.6 & 0.0 & 86.2 & 0.0 \\
 \hline
 &  & reg1 &  & 92.3 & 2.6 & 93.9 & 3.7 & 89.8 & 6.1 \\
(GA) & 10\% & reg2 &  & 92.3 & 0.0 & 93.9 & 0.0 & 90.9 & 0.0 \\
 &  & reg3 &  & 93.7 & 0.0 & 94.7 & 0.0 & 92.7 & 0.0 \\
\hline
 &  & reg1 &  & 93.0 & 92.8 & 94.2 & 94.4 & 92.4 & 91.6 \\
(GG) & 0\% & reg2 &  & 92.6 & 92.2 & 93.4 & 93.8 & 91.2 & 92.6 \\
 &  & reg3 &  & 93.0 & 93.2 & 93.8 & 94.0 & 92.0 & 91.4 \\
 \hline
 &  & reg1 &  & 94.7 & 38.6 & 93.3 & 74.4 & 71.4 & 80.7 \\
(GG) & 5\% & reg2 &  & 95.1 & 0.2 & 93.3 & 0.2 & 71.2 & 0.2 \\
 &  & reg3 &  & 94.7 & 0.2 & 93.0 & 0.2 & 72.1 & 0.2 \\
 \hline
 &  & reg1 &  & 92.9 & 11.9 & 95.6 & 60.5 & 84.9 & 69.0 \\
(GG) & 10\% & reg2 &  & 90.7 & 0.0 & 91.7 & 0.0 & 82.5 & 0.0 \\
 &  & reg3 &  & 92.5 & 0.0 & 94.0 & 0.0 & 82.3 & 0.0 \\
 \hline
\end{tabular}
\end{center}
\label{tab:CP}
\end{table}

\section{Real data example}
\label{sec:analysis}

\subsection{Setting}
We apply the proposed methods to real data analysis. We used hospital stay data for ``Major cardiovascular interventions" available at \cite{locatelli2011robust}, where the date of admission and discharge is observed for 75 individuals. 45 stays were censored because patients were transferred to a different hospital prior to dismissal. 
We are interested in analyzing the relationship between the length of stay and two covariates: age of the patient ($x_1$) and type of admission ($x_2=0$ for regular admissions, $x_2=1$ for emergency admissions). The data is shown in Figure \ref{fig:reg-complete}. From the figure, it is observed that the length of stay of two young patients was exceptionally high, and there are some outliers for patients over 50 years old. Let $t_i$ be the length of stay for $i$th patient, and consider the accelerated failure time model given by \eqref{GG}, where
\begin{equation*}
\theta_i=\exp(\beta_0 + \beta_1 x_{i1} + \beta_2 x_{i2} + \beta_3 x_{i1} x_{i2})/\lambda_i, \quad i=1,\dots,75
\end{equation*}
Note that $x_1 x_2$ is the interaction term of $x_1$ and $x_2$. We apply the proposed RGG and RGA methods as well as the GG, GA and LST methods. The LST method is the log-Student $t$ model proposed by \cite{VallejosSteel2015}\footnote{R code to implement the LST method is available at \url{https://warwick.ac.uk/fac/sci/statistics/staff/academic-research/steel/steel_homepage/software}}. The model is represented as the shape mixture of log-normal distributions given by \eqref{steel}. When the mixing density is the gamma distribution $\mathrm{Ga}(\nu/2,\nu/2)$, the corresponding model is the LST model. For these methods, we generated 20000 posterior samples and only every 5th scan was saved (thinning). 

We calculated the posterior medians of the random effect $\lambda_i$ for the RGG, RGA, and LST methods to quantify the unobserved heterogeneity. Next, we compute posterior quantiles for the values of the regression function for specific explanatory variables as well as the model selection criterion called the deviation information criterion \citep[DIC;][]{spiegelhalter2002bayesian}. 
To see the robustness of the proposed methods, we also compared the estimated regression curves for the data with or without outliers.

\subsection{Results}

We report the posterior medians of the local parameter $\lambda_i$ for three robust methods in Figure \ref{fig:lambda-plot}. For RGG and RGA, the logarithmic values of the posterior median for non-outlying observations stick to 0, while for LST they are observed to be scattered near 0. This means that the role of $\lambda_i$ for outliers is clear in the proposed method. The uncertainty of outliers for the RGG and RGA methods is also presented in Figure \ref{fig:lambda-plot}. In the proposed models, we can quantify the uncertainty of outliers through the marginal posterior probability of $z_i=1$, where $z_i$ is the latent binary variable defined in Section %
\ref{subsec:22}. We show summary statistics of the posterior distribution of the regression function for specific explanatory variables in Table \ref{tab:app}. We here focus on 10 and 80 years old. In terms of posterior median, the two robust proposed methods seem to be comparable, while it is observed that the GG and GAM methods are affected by outliers. The length of 95\% posterior credible intervals of the RGA method is shorter than that of the other four methods. For the LST\footnote{Theoretically, it is expected to be robust only ``partially". 
See Section \ref{subsec:supp-logT} of the Supplementary Material. On the other hand, the proposed model has ``whole" robustness.}, the performance is intermediate between them. The smallest DIC value also suggests that the RGA model is reasonable for the data set. Finally, we show the results of the estimated regression curves for five methods in Figure \ref{fig:reg-complete}. The left panel shows (log-scale) estimated regression curves for regular admission ($x_2=0$) defined by \eqref{expectation-gg} and \eqref{expectation-LN} with $\lambda_i=1$, where the regression part is written as $\hat{\beta}_0 + \hat{\beta}_1 \times \text{Age}$. The right panel is the case of emergency admission ($x_2=1$), where the regression part is written as $(\hat{\beta}_0 + \hat{\beta}_2) + (\hat{\beta}_1 + \hat{\beta}_3) \times \text{Age}$. Note that we used the posterior median for the point estimates of $\beta$. These results show that for $x_2=0$, the slopes of the regression curves for the five methods are the same, but the effect of outliers is evident in the vertical variation. For $x_2=1$, RGG and RGA give similar results, but the other three methods have very different slopes due to the effect of outliers.

\begin{figure}
    \centering
    \includegraphics[width=\linewidth]{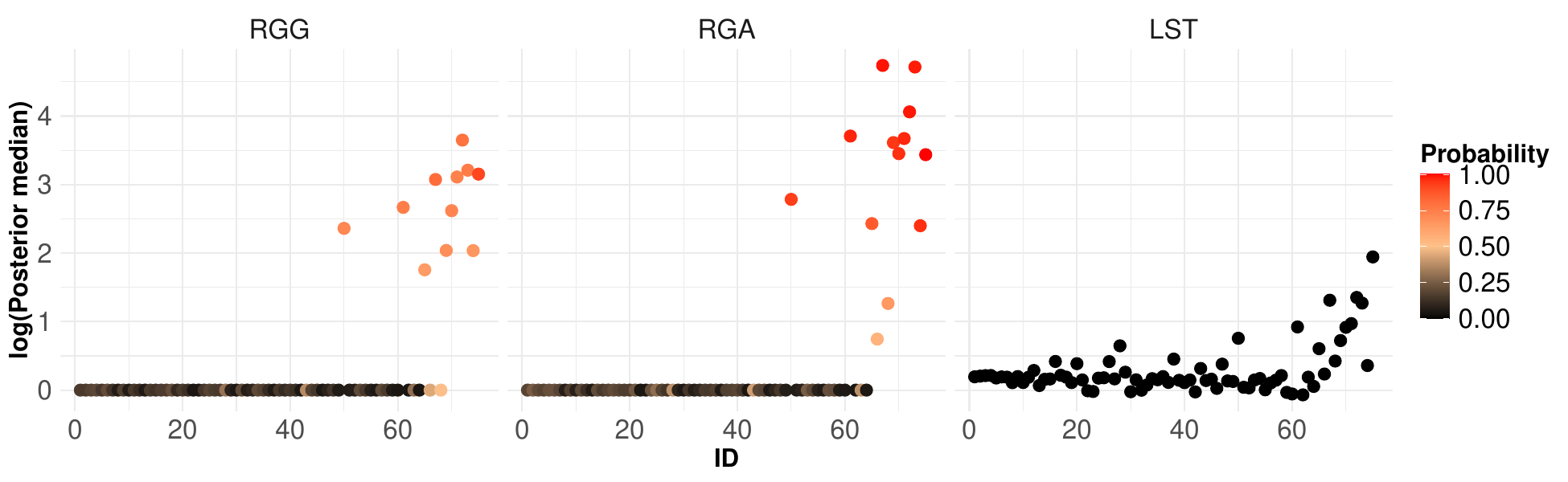}
    \caption{The (log-scale) posterior medians of $\lambda_i$ and marginal posterior probabilities of $z_i=1$ for the RGG (left) and RGA (center), and the (log-scale) posterior medians of $1/\lambda_i$ for the LST (right). As the LST model cannot calculate the marginal posterior probability of outliers, the black points do not indicate that the probability is zero.}
    \label{fig:lambda-plot}
\end{figure}

\begin{table}[htb!]
\caption{Posterior quantiles and deviance information criteria (DIC) of five models. }
\begin{center}
\begin{tabular}{ccccccccccccccccccc} 
\hline
Type & Age & Quantile (\%) &  & RGG & RGA & GG & GA & LST \\
\hline
 &  & 2.5 &  & 8.5 & 8.4 & 24.0 & 29.6 & 9.0 \\
0 & 10 & 50 &  & 11.2 & 10.6 & 41.3 & 45.1 & 17.8 \\
 &  & 97.5 &  & 49.5 & 15.7 & 81.3 & 74.8 & 52.0 \\
 \hline
 &  & 2.5 &  & 7.3 & 7.2 & 55.1 & 50.7 & 11.9 \\
1 & 10 & 50 &  & 29.3 & 22.5 & 214.8 & 128.0 & 181.3 \\
 &  & 97.5 &  & 438.0 & 124.7 & 1252.9 & 384.3 & 1122.7 \\
 \hline
 &  & 2.5 &  & 16.5 & 16.3 & 43.2 & 43.1 & 17.5 \\
0 & 80 & 50 &  & 24.4 & 21.8 & 86.9 & 83.9 & 44.5 \\
 &  & 97.5 &  & 134.1 & 53.1 & 227.8 & 222.6 & 152.1 \\
 \hline
 &  & 2.5 &  & 27.4 & 27.1 & 40.6 & 35.9 & 28.3 \\
1 & 80 & 50 &  & 36.9 & 34.7 & 76.8 & 66.3 & 47.5 \\
 &  & 97.5 &  & 103.8 & 52.6 & 203.5 & 169.2 & 128.6 \\
 \hline
\multicolumn{3}{c}{DIC} &  & -478.4 & -761.5 & 736.4 & 729.6 & 290.0 \\
 \hline
\end{tabular}
\end{center}
\label{tab:app}
\end{table}

\begin{figure}
    \centering
    \includegraphics[width=\linewidth]{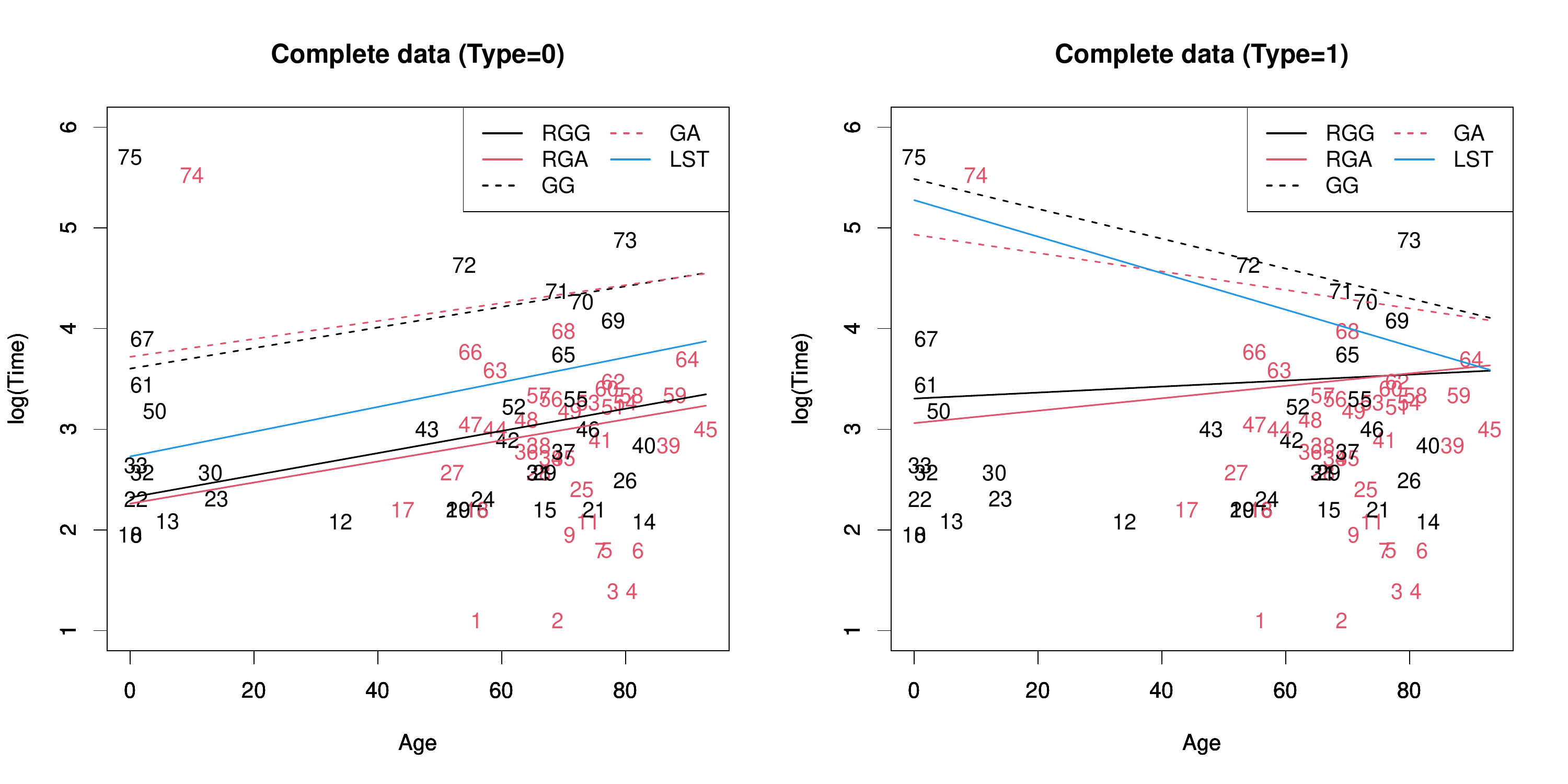}
    \caption{Estimated (log-scale) regression curves for regular ($x_2=0$) and emergency ($x_2=1$) admissions. Observations colored by red are patients with emergency admission. }
    \label{fig:reg-complete}
\end{figure}

\subsection{Robustness}

We confirm the robustness of the proposed method using the data after we have detected the outliers. Since Table \ref{tab:app} shows that RGA is the model with the smallest DIC among the five methods, we will use the outlier detection results using RGA. Since RGA enables outlier detection using the posterior probability of $z_i$, observations for which the posterior probability is greater than 0.5 are detected as outliers and excluded from the data. Figure \ref{fig:reg-without-outliers} shows the results of applying the five methods to the data excluding the 13 outliers actually detected by the RGA. When outliers are detected by RGA, all methods give almost the same estimation results (Figure \ref{fig:reg-without-outliers}), and RGA seems to be able to identify outliers well. 
Furthermore, this result shows that RGG and RGA can be stably estimated with or without outliers. In particular, we also observed that RGA performs slightly better than RGG.

\begin{figure}
    \centering
    \includegraphics[width=\linewidth]{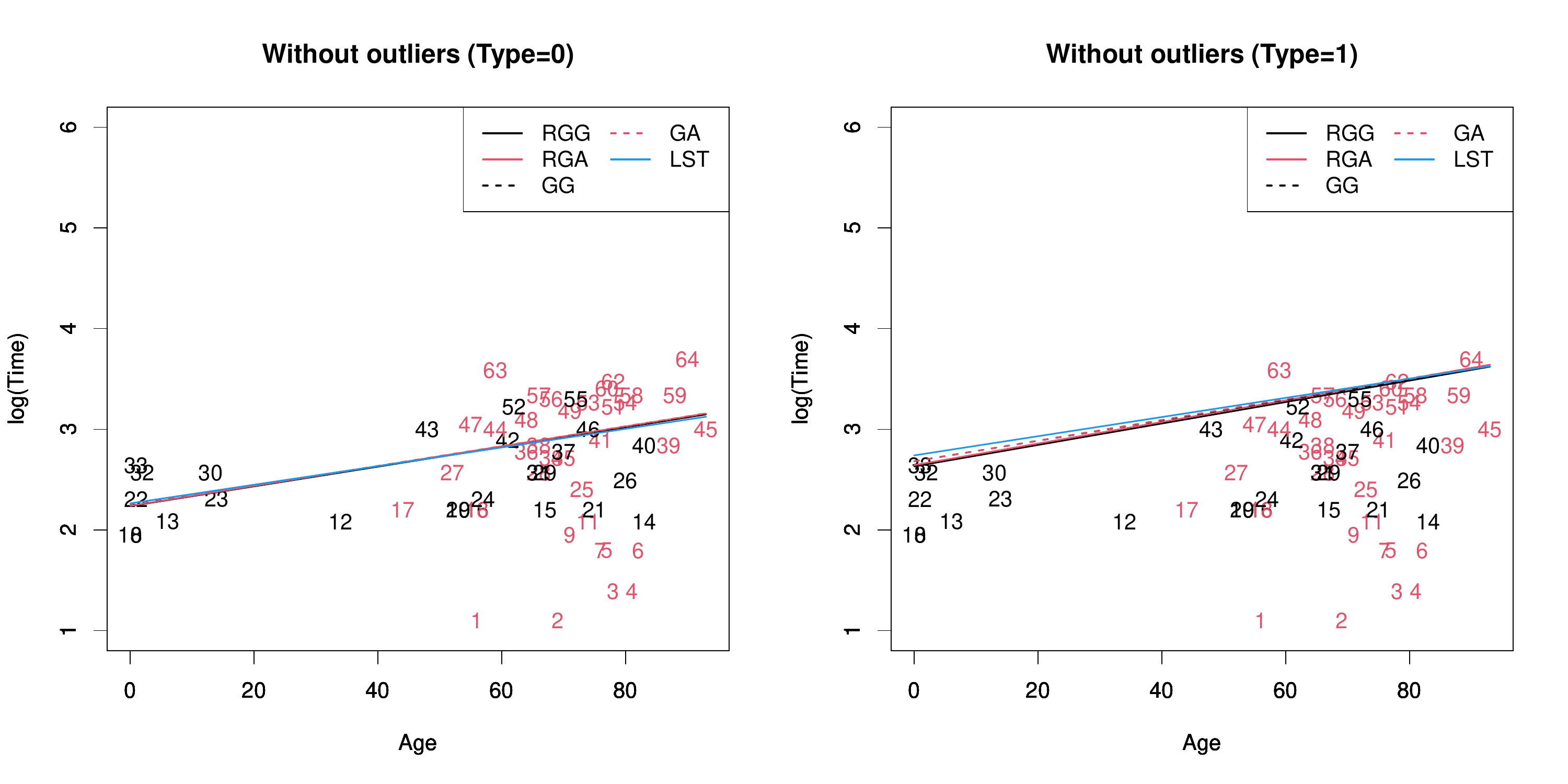}
    \caption{Estimated (log-scale) regression curves for regular ($x_2=0$) and emergency ($x_2 =1$) admissions when outliers are removed by using the RGA method. Observations colored by red are patients with emergency admission. }
    \label{fig:reg-without-outliers}
\end{figure}

\section{Concluding remarks}

We proposed a robust accelerated failure time model based on a scale mixture of generalized gamma distributions, which has three notable features. Considering a heavy-tailed mixing density, we constructed the super heavy-tailed distribution which makes robust inference on the parameters. The proposed model provides an interpretable structure for conditional expectation of survival time, and we can quantify the uncertainty of outliers using the latent binary variable. We rigorously showed that the proposed method has the desirable robustness property, called posterior robustness, under some mild conditions.

The proposed model can be extended to a multi-level model. 
For example, if we have survival times in multiple hospitals, hospital-specific random effects can easily be incorporated into the proposed model. Furthermore, the proposed model is also expected to have applications in finance, such as bankruptcy prediction \citep[e.g.][]{pierri2017bankruptcy}, in addition to the field of biostatistics.

Finally, the proposed method with generalized gamma distribution can be extended to a more general class of density function, described as  
\begin{equation*}
p(t_i|\theta_i, \phi)=f(t_i,\phi)\theta_i^{g(\phi)}\exp\left\{-\theta_ih(t_i,\phi)\right\}, \ \ \ \ i=1,\ldots,n,
\end{equation*}
where functions $f(t_i,\phi)$, $g(\phi)$ and $h(t_i,\phi)$ are chosen such that it is a proper probability density. 
We assume that $f(t_i,\phi)$ and $h(t_i,\phi)$ are polynomial functions of $t_i$.
This class of probability function is a subclass of the exponential model considered in \cite{aktekin2020family}, and it includes many well-known family of models such as the Poisson, exponential, Weibull, gamma, and generalized gamma distributions. 
By setting, $\bphi=(\alpha,\gamma)$, $f(t_i,\bphi)=\gamma t_i^{\alpha\gamma-1}\alpha^\alpha /\Gamma(\alpha)$, $g(\bphi)=\alpha$ and $h(t_i,\bphi)=\alpha t_i^\gamma$, the above model reduces to the generalized gamma model considered in this paper. 
While the proposed posterior computation algorithm can be applied for $\theta_i$, the detailed investigation of the theoretical robustness will be a valuable future work.

\section*{Acknowledgments}
This work is partially supported by Japan Society for Promotion of Science (KAKENHI) grant numbers 25K21163, 22K20132, 19K11852, 21K13835, and 21H00699.

\vspace{1cm}
\bibliographystyle{chicago}
\bibliography{ref}

\newpage

\newpage
\setcounter{page}{1}
\setcounter{equation}{0}
\renewcommand{\theequation}{S\arabic{equation}}
\setcounter{section}{0}
\renewcommand{\thelem}{S\arabic{lem}}
\setcounter{thm}{0}
\renewcommand{\thethm}{S\arabic{thm}}
\setcounter{prp}{0}
\renewcommand{\theprp}{S\arabic{prp}}
\setcounter{table}{0}
\renewcommand{\thesection}{S\arabic{section}}
\setcounter{table}{0}
\renewcommand{\thetable}{S\arabic{table}}
\setcounter{figure}{0}
\renewcommand{\thefigure}{S\arabic{figure}}

\begin{center}
{\LARGE\bf Supplementary Materials for ``Robust Bayesian Inference for Censored Survival Models"}
\end{center}

\vspace{0.5cm}
\begin{center}
{\large 
Yasuyuki Hamura$^{1}$, Takahiro Onizuka$^{2}$, Shintaro Hashimoto$^{3}$ and Shonosuke Sugasawa$^{4}$
}
\end{center}

\medskip
\noindent
$^{1}$Graduate School of Economics, Kyoto University\\
$^{2}$Graduate School of Social Sciences, Chiba University\\
$^{3}$Department of Mathematics, Hiroshima University\\
$^{4}$Faculty of Economics, Keio University

\vspace{1cm}
This Supplementary Material provides additional information about the posterior computation algorithm and the theoretical result of the main text. 
\section{Independent Metropolis-Hastings algorithms for updating global parameters}
\label{sec:MH} 
The global parameters $\alt $, $\bbet $, and $\ga $ are updated using the independent Metropolis-Hastings algorithm. 
In order to construct proposal distributions for $\alt $ and $\bbet $, we use \cite{Miller2019}'s method. 
For $\ga $, we use the piecewise linear approximation \citep[e.g.][Section 3.1.4]{luengo2020survey}.

In each scan of the MCMC algorithm, the full conditional density of $\alt $, namely, $p( \alt | \bbet , \ga , s , Z , E , \mathcal{T} )$, is approximated by a gamma density ${\rm{Ga}} ( \alt | A_{\alt } , B_{\alt } )$, which is used as our proposal distribution in the Metropolis-Hastings algorithm. 
Here, $A_{\alt } , B_{\alt } > 0$ are obtained using \cite{Miller2019}'s iterative method. 
Specifically, given a current value of the tuning parameters $( A_{\alt }^{\rm{old}} , B_{\alt }^{\rm{old}} )$, we compute a new value $( A_{\alt }^{\rm{new}} , B_{\alt }^{\rm{new}} )$ as follows: 
\begin{itemize}
\item
let $C_{\alt }^{\rm{old}} = A_{\alt }^{\rm{old}} / B_{\alt }^{\rm{old}}$ denote the mean of the current proposal distribution; 
\item
solve for $A_{\alt }$ and $B_{\alt }$ the equations 
\begin{align}
&\Big( {\pd \over \pd \alt } \log {\rm{Ga}} ( \alt | A_{\alt } , B_{\alt } ) \Big) \Big| _{\alt = C_{\alt }^{\rm{old}}} = \Big( {\pd \over \pd \alt } \log p( \alt | \bbet , \ga , s , Z , E , \mathcal{T} ) \Big) \Big| _{\alt = C_{\alt }^{\rm{old}}} \text{,} \non \\
&\Big\{ {\pd ^2 \over ( \pd \alt )^2} \log {\rm{Ga}} ( \alt | A_{\alt } , B_{\alt } ) \Big\} \Big| _{\alt = C_{\alt }^{\rm{old}}} = \Big\{ {\pd ^2 \over ( \pd \alt )^2} \log p( \alt | \bbet , \ga , s , Z , E , \mathcal{T} ) \Big\} \Big| _{\alt = C_{\alt }^{\rm{old}}} \text{;} \non 
\end{align}
\item
set $( A_{\alt }^{\rm{new}} , B_{\alt }^{\rm{new}} )$ to the solution. 
\end{itemize}

Similarly, the full conditional density of $\bbet $, namely, $p( \bbet | \alt , \ga , s , Z , E , \mathcal{T} )$, is approximated by a normal proposal density ${\rm{N}}_p ( \bbet | \bmu _{\bbet } , \bPsi _{\bbet } )$, where $\bmu _{\bbet } \in \mathbb{R} ^p$ and $\bPsi _{\bbet } > \O ^{(p)}$ are obtained using \cite{Miller2019}'s idea. 
Specifically, given a current value of the tuning parameters $( \bmu _{\bbet }^{\rm{old}} , \bPsi _{\bbet }^{\rm{old}} )$, we compute a new value $( \bmu _{\bbet }^{\rm{new}} , \bPsi _{\bbet }^{\rm{new}} )$ as follows: 
\begin{itemize}
\item
solve for $\bmu _{\bbet }$ and $\bPsi _{\bbet }$ the equations 
\begin{align}
&\Big( {\pd \over \pd \bbet } \log {\rm{N}}_p ( \bbet | \bmu _{\bbet } , \bPsi _{\bbet } ) \Big) \Big| _{\bbet = \bmu _{\bbet }^{\rm{old}}} = \Big( {\pd \over \pd \bbet } \log p( \bbet | \alt , \ga , s , Z , E , \mathcal{T} ) \Big) \Big| _{\bbet = \bmu _{\bbet }^{\rm{old}}} \text{,} \non \\
&\Big\{ {\pd ^2 \over ( \pd \bbet ) ( \pd \bbet )^{\top }} \log {\rm{N}}_p ( \bbet | \bmu _{\bbet } , \bPsi _{\bbet } ) \Big\} \Big| _{\bbet = C\bmu _{\bbet }^{\rm{old}}} = \Big\{ {\pd ^2 \over ( \pd \bbet ) ( \pd \bbet )^{\top }} \log p( \bbet | \alt , \ga , s , Z , E , \mathcal{T} ) \Big\} \Big| _{\bbet = \bmu _{\bbet }^{\rm{old}}} \text{;} \non 
\end{align}
\item
set $( \bmu _{\bbet }^{\rm{new}} , \bPsi _{\bbet }^{\rm{new}} )$ to the solution. 
\end{itemize}

Finally, we make the change of variables $\gat = \ga / (1 + \ga ) \in (0, 1)$ and use the independent Metropolis-Hastings algorithm to update $\gat $ instead of $\ga $. 
The full conditional density of $\gat $, namely, $p( \gat | \alt , \bbet , s , Z , E , \mathcal{T} )$, is approximated by the normalized version of a piecewise linear function $J( \gat ; \alt , \bbet , s , Z , E , \mathcal{T} )$, $\gat \in (0, 1)$, which satisfies the following properties for some $G \in \mathbb{N}$: 
\begin{itemize}
\item
It is linear on $[(g - 1) / G, g / G] \subset (0, 1)$, $g \in \mathbb{N}$; 
\item
$J(g / G ; \alt , \bbet , s , Z , E , \mathcal{T} ) = p(g / G | \alt , \bbet , s , Z , E , \mathcal{T} )$ for all $g \in \mathbb{N} \cap (0, G)$. 
\end{itemize}
We remark that we can sample directly from such a proposal distribution.

\section{Theoretical results}
\label{sec:proof}
Here, we derive conditions for posterior robustness under a slightly weaker assumption on the local prior $\pi ( \la _i )$ than that of the main text.

\subsection{The main theorem}
\label{subsec:supp-theorem} 
Suppose that $t_1 , \dots , t_n$ follow a generalized gamma distribution with density 
\begin{align}
&{\al ^{\al } \{ \exp ( \x_i^{\top} \bbe ) / \la _i \} ^{\al } \ga \over \Ga ( \al )} {t_i}^{\ga \al - 1} 
\exp\Big\{- \al {t_i}^{\ga } \exp ( \x_i^{\top} \bbe ) / \la _i  \Big\},
\end{align}
for $i=1,\ldots,n$. 
Let $\pi ( \la _i )$ be a density of the local parameter $\lambda_i$,  and  $\pi ( \al , \bbe , \ga )$ be a joint prior of $(\alpha, \bbe, \gamma)$.
In what follows, we write $ \pi ( \al , \bbe , \ga ) = \pi ( \al , \ga ) \pi ( \bbe | \al , \ga )$.

Let $\Kc , \Lc \subset \{ 1, \dots , n \} $ satisfy $\Kc \cup \Lc = \{ 1, \dots , n \} $, $\Kc \cap \Lc = \emptyset $, and $\Lc \neq \emptyset $ and suppose that while $t_i$ are fixed for $i \in \Kc $, there exist $a_i \in \mathbb{R}$ and $b_i > 0$ satisfying $\log t_i = a_i + b_i \om $ for $i \in \Lc $, where $\om \to \infty $.

We assume that the local prior $\pi ( \la _i )$ satisfies the following two conditions: 
\begin{itemize}
\item
There exist $c > 0$ and $C > 0$ such that 
\begin{align}
t \pi (t) \sim {C \over \{ 1 + \log (1 + t) \} [1 + \log \{ 1 + \log (1 + t) \} ]^{1 + c}} \label{eq:a_new_1} 
\end{align}
as $t \to \infty $. 
\item
There exist $c > 0$ and $M > 0$ such that for all $t > 0$, 
\begin{align}
t \pi (t) &\le M {1 \over 1 + \log (1 + 1 / t)} {1 \over 1 + \log (1 + t)} \non \\
&\quad \times {1 \over [1 + \log \{ 1 + \log (1 + 1 / t) \} ]^{1 + c}} {1 \over [1 + \log \{ 1 + \log (1 + t) \} ]^{1 + c}} \text{.} \label{eq:a_new_2} 
\end{align}
\end{itemize}

Let $\pit ( \al , \bbet , \ga ) = \ga ^p \pi ( \al , \ga \bbet , \ga )$. 
Suppose that there exist $a_{\bbet } > 0$, $b_{\bbet } > 0$, and $M > 0$ such that for all $( \al , \bbet , \ga ) \in (0, \infty ) \times \mathbb{R} ^p \times (0, \infty )$, 
\begin{align}
\pit ( \al , \bbet , \ga ) &\le M \pi ( \al , \ga ) \prod_{k = 1}^{p} {| \bet _k |^{a_{\bbet } - 1} \over (1 + | \bet _k |)^{a_{\bbet } + b_{\bbet }}} \text{.} \label{eq:a3_new} 
\end{align}
Suppose that there exists $\ka > 0$ such that 
\begin{align}
E \Big[ {1 \over \ga ^{\ka }} + \ga ^{\ka } + {1 \over \{ ( \min \{ \al , \al ^{1 / 2} \} ) \ga \} ^{\ka }} \Big] < \infty \text{.} \non 
\end{align}

\begin{thm}
\label{thm:robustness_new} 
The posterior is robust to the outliers. 
That is, 
\begin{align}
\lim_{\om \to \infty } p( \al , \bbet , \ga | \mathcal{T} ) = p( \al , \bbet , \ga | \{ t_i | i \in \Kc \} ) \non 
\end{align}
for all $p( \al , \bbet , \ga ) \in (0, \infty ) \times \mathbb{R} ^p \times (0, \infty )$. 
\end{thm}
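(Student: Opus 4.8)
The plan is to marginalize out the local parameters. Writing $\bbe = \ga \bbet$, for each $i$ define the marginal lifetime density
\[
m( t_i \mid \al , \bbet , \ga ) = \int_0^\infty p_{\mathrm{GG}}\big( t_i \mid \al , \ga , \exp ( \ga \x_i^\top \bbet ) / \la _i \big)\, \pi ( \la _i )\, d \la _i ,
\]
so that the reparameterized posterior is proportional to $\pit ( \al , \bbet , \ga ) \prod_{i=1}^n m( t_i \mid \al , \bbet , \ga )$. I would split the product over the non-outlier indices $\Kc$ and the outlier indices $\Lc$. Since the $\Kc$-factors are fixed, the whole argument reduces to understanding how the outlier factors $\{ m( t_i \mid \al , \bbet , \ga ) : i \in \Lc \}$ behave as $\om \to \infty$. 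The target is to show that each detaches from $( \al , \bbet , \ga )$, i.e.\ $m( t_i \mid \al , \bbet , \ga ) \sim g( t_i )$ for a function $g$ not depending on the global parameters, so that $\prod_{i \in \Lc} g( t_i )$ cancels between the posterior numerator and its normalizing constant.

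For the pointwise asymptotics I would set $\mu _i = \al \exp ( \ga \x_i^\top \bbet )\, t_i^\ga$ and substitute $u = \mu _i / \la _i$, which turns the $\la_i$-integral into $\mu _i^{1 - \al} \int_0^\infty u^{\al - 2} e^{-u}\, \pi ( \mu _i / u )\, du$. Because $\mu _i \to \infty$ as $\om \to \infty$, I would insert the tail equivalence \eqref{eq:a_new_1}, pull the slowly varying factors (which concentrate at $\log \mu_i \sim \ga \log t_i$ and $\log\log\mu_i \sim \log\log t_i$) out of the integral, and recognize the remaining $\int_0^\infty u^{\al - 1} e^{-u}\, du = \Ga ( \al )$. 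The prefactor $\ga \al^\al e^{\al \ga \x_i^\top \bbet } t_i^{\ga \al - 1} / \Ga ( \al )$ then cancels against $\mu_i^{-\al} \Ga(\al)$, and the surviving $\ga / \log \mu_i$ collapses to $1 / \log t_i$, leaving $g( t_i ) = C\, t_i^{-1} \{ \log t_i \}^{-1} \{ \log\log t_i \}^{-(1+c)}$, free of $( \al , \bbet , \ga )$. Justifying the interchange of limit and $u$-integral here is routine dominated convergence using \eqref{eq:a_new_2}.

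The substantive work is the uniform domination needed to pass the limit through the integral over $( \al , \bbet , \ga )$. I would use the global upper bound \eqref{eq:a_new_2} to bound $m( t_i \mid \al , \bbet , \ga ) / g( t_i )$, for all large $\om$, by a function $h( \al , \bbet , \ga )$, and then show that $\pit ( \al , \bbet , \ga ) \big( \prod_{i \in \Kc} m( t_i \mid \cdot ) \big) \prod_{i \in \Lc} h( \al , \bbet , \ga )$ is integrable. This is where the prior bound \eqref{eq:a3_new} controls the $\bbet$-direction and the moment condition $E[ \ga^{-\ka} + \ga^\ka + \{ ( \min \{ \al , \al^{1/2} \} ) \ga \}^{-\ka} ] < \infty$ controls the boundary behavior as $\ga \to 0, \infty$ and as $\al \to 0$. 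The main obstacle is precisely this step: obtaining a bound on each outlier factor that is simultaneously (i) uniform in $\om$, (ii) tight enough near $\log \mu_i$ to reproduce the parameter-free $g(t_i)$, and (iii) integrable against $\pit$ after taking the finite product over $\Lc$, since the crude bounds on the $\la_i$-integral degenerate as $\al , \ga \to 0$ and the moment assumptions must absorb the accumulated powers of $\al$ and $\ga$ contributed by all outliers.

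Finally, with pointwise convergence $\prod_{i \in \Lc} m( t_i \mid \cdot ) / g( t_i ) \to 1$ and the dominating function in hand, dominated convergence gives $\int \pit \prod_{i \in \Kc} m \prod_{i \in \Lc} ( m / g )\, d( \al , \bbet , \ga ) \to \int \pit \prod_{i \in \Kc} m\, d( \al , \bbet , \ga )$, and likewise for the normalizer; the cancellation of $\prod_{i \in \Lc} g( t_i )$ then yields $\lim_{\om \to \infty} p( \al , \bbet , \ga \mid \mathcal{T} ) = p( \al , \bbet , \ga \mid \{ t_i : i \in \Kc \} )$, which is the claim.
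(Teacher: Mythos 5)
Your overall architecture matches the paper's: normalize each marginal outlier likelihood by its parameter-free asymptotic form $g(t_i)$, prove the pointwise limit of the ratio (this is the paper's Lemma \ref{lem:pointwise}, proved exactly as you describe via the tail equivalence \eqref{eq:a_new_1} and dominated convergence in $u$), and reduce the theorem to interchanging the limit with the integral over $(\al,\bbet,\ga)$, i.e.\ to \eqref{eq:NC_limit}. The gap is in how you justify that interchange. You posit a single $\om$-uniform dominating function $h(\al,\bbet,\ga)$ with $m(t_i\mid\cdot)/g(t_i)\le h$ for all large $\om$ and $\pit\prod_{\Kc}m\prod_{\Lc}h$ integrable. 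No such function exists under the stated hypotheses. On the moving region where $\x_i^{\top}\bbet\approx-\log t_i$ (so $\|\bbet\|\asymp\om$), the ratio is of order $\om(\log\om)^{1+c}$ times bounded factors (this is the content of Lemma \ref{lem:K_new}, and it is essentially attained), so the pointwise supremum over $\om$ grows at least like $|\x_i^{\top}\bbet|\,(\log|\x_i^{\top}\bbet|)^{1+c}$ in the relevant directions of $\bbet$-space. The prior tail bound \eqref{eq:a3_new} only supplies decay $|\bet_k|^{-1-b_{\bbet}}$ with $b_{\bbet}$ allowed to be arbitrarily small, so already a single outlier factor (let alone $|\Lc|$ of them) makes the proposed dominating function non-integrable along a coordinate axis whenever $b_{\bbet}<1$. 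Plain dominated convergence with an $\om$-free majorant therefore cannot close the argument.

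The paper's resolution is genuinely $\om$-dependent. It invokes a covering lemma (Lemma 1 of \cite{hamura2024posterior}) to partition $\mathbb{R}^p$, for each $\om$, into a good region where every outlier satisfies $|\log t_i+\x_i^{\top}\bbet|>\ep|\log t_i|$ --- there the ratio admits $\om$-uniform bounds (Lemmas \ref{lem:L_improved} and \ref{lem:K_improved}) and dominated convergence applies --- and finitely many bad regions on which some $l\ge1$ coordinates satisfy $|\bet_k|\ge\de\om$. On a bad region one does not dominate; one shows the integral itself tends to zero, because sitting $\de\om$ deep in the prior tail extracts a factor $\om^{-l(1+b_{\bbet})}$ that beats the $\{\om(\log\om)^{1+c}\}^{l}$ growth of the $l$ unfavourable outlier factors, leaving a net $(\log\om)^{l(1+c)}/\om^{l b_{\bbet}}\to0$. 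Separate $\om$-dependent truncations in $\al$ and $\ga$ (e.g.\ $\al\ge1/\om^{1/4}$ and $1/\om^{1/2}\le\ga\le\exp\{\om^{1/4}/(2M_3)\}$), combined with the moment condition and Lemma \ref{lem:L}, handle the degeneracies at the boundary of the $(\al,\ga)$ space, which your sketch also leaves unresolved. You correctly located the obstacle, but the tool you propose for it is the wrong one.
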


\begin{remark}
\label{rem:loglog} 
Let $a, b > 0$. 
Let 
\begin{align}
\pi _2 (w | a, b) = {1 \over B(a, b)} {1 \over 1 + w} {1 \over 1 + \log (1 + w)} {[ \log \{ 1 + \log (1 + w) \} ]^{a - 1} \over [1 + \log \{ 1 + \log (1 + w) \} ]^{a + b}} \label{eq:prior_new} 
\end{align}
for $w \in (0, \infty )$. 
Then $\int_{0}^{\infty } \pi _2 (w | a, b) dw = 1$ and 
\begin{align}
\pi _2 (w | a, b) &\propto \begin{cases} \displaystyle w^{a - 1} \text{,} & \text{as $w \to 0$} \text{,} \\ \displaystyle {1 \over w} {1 \over \log w} {1 \over ( \log \log w)^{1 + b}} \text{,} & \text{as $w \to \infty $} \text{.} \end{cases} \non \end{align}
Thus, the prior (\ref{eq:prior_new}) satisfies (\ref{eq:a_new_2}) as well as (\ref{eq:a_new_1}). 
(The prior becomes the DLH density of the main text when %
$a = 1$ and $b = c$.) 
\end{remark}

\subsection{Proof of Remark \ref{rem:loglog}}
Here, we prove the results in Remark \ref{rem:loglog}. 

\begin{proof}[Proof of Remark \ref{rem:loglog}]
First, $u \sim {\rm{SB}} (a, b)$ implies $\exp ( e^u - 1) - 1 \sim \pi _2 ( \cdot | a, b)$ and we have $\int_{0}^{\infty } \pi _2 (w | a, b) dw = 1$. 
Next, note that 
\begin{align}
&w {1 \over 1 + w} {1 \over 1 + \log (1 + w)} {[ \log \{ 1 + \log (1 + w) \} ]^{a - 1} \over [1 + \log \{ 1 + \log (1 + w) \} ]^{a + b}} \non \\
&\le w {1 \over [ \log \{ 1 + \log (1 + w) \} ]^{1 - a}} \le w M_1 {1 \over w^{1 - a}} = M_1 {1 \over 1 / w^a} \non \\
&\le  M_2 {1 \over 1 + \log (1 + 1 / w)} {1 \over 1 + \log \{ 1 + \log (1 + 1 / w) \} ^{1 + b}} \non 
\end{align}
for all $w \le 1$ for some $M_1 , M_2 > 0$ and that 
\begin{align}
&w \times {1 \over 1 + w} {1 \over 1 + \log (1 + w)} {[ \log \{ 1 + \log (1 + w) \} ]^{a - 1} \over [1 + \log \{ 1 + \log (1 + w) \} ]^{a + b}} \non \\
&\le M_3 {1 \over 1 + \log (1 + w)} {1 \over [1 + \log \{ 1 + \log (1 + w) \} ]^{1 + b}} \non 
\end{align}
for all $w \ge 1$ for some $M_3 > 0$. 
Then it follows that the prior (\ref{eq:prior_new}) satisfies (\ref{eq:a_new_2}). 
\end{proof}

\subsection{Proof of Theorem \ref{thm:robustness_new}}
Here, we prove Theorem \ref{thm:robustness_new}. 
First, note that 
\begin{align}
&p( \al , \bbe , \ga | \mathcal{T} ) \non \\
&\propto \pi ( \al , \bbe , \ga ) \prod_{i = 1}^{n} \int_{0}^{\infty } \pi ( \la _i ) {\al ^{\al } \ga \over \Ga ( \al )} {\{ \exp ( \x_i^{\top} \bbe ) \} ^{\al } \over {\la _i}^{\al }} {t_i}^{\ga \al - 1} e^{- \al \{ \exp ( \x_i^{\top} \bbe ) / \la _i \} {t_i}^{\ga }} d{\la _i} \non \\
&= \pi ( \al , \bbe , \ga ) \prod_{i = 1}^{n} \int_{0}^{\infty } {t_i}^{\ga } \exp ( \x_i^{\top} \bbe ) \pi ( {t_i}^{\ga } \exp ( \x_i^{\top} \bbe ) u) {\al ^{\al } \ga \over \Ga ( \al )} {{t_i}^{- 1} \over u^{\al }} e^{- \al / u} du \non \\
&\propto \pi ( \al , \bbe , \ga ) \prod_{i = 1}^{n} \int_{0}^{\infty } {\al ^{\al } \ga \over \Ga ( \al )} {e^{- \al / u} \over u^{1 + \al }} {{t_i}^{\ga } \exp ( \x_i^{\top} \bbe ) u \pi ( {t_i}^{\ga } \exp ( \x_i^{\top} \bbe ) u) \over C / ( \{ 1 + \log (1 + t_i ) \} [1 + \log \{ 1 + \log (1 + t_i ) \} ^{1 + c} ])} du \text{.} \non 
\end{align}
Then, by making the change of variables $\bbe = \ga \bbet $, we have 
\begin{align}
&p( \al , \bbet , \ga | \mathcal{T} ) \propto \pit ( \al , \bbet , \ga ) \prod_{i = 1}^{n} f( \al , \x_i^{\top} \bbet , \ga ; t_i ) \text{,} \non 
\end{align}
where 
\begin{align}
&f( \al , \x_i^{\top} \bbet , \ga ; t_i ) \non \\
&= \int_{0}^{\infty } {\al ^{\al } \ga \over \Ga ( \al )} {e^{- \al / u} \over u^{1 + \al }} {{t_i}^{\ga } \exp ( \ga \x_i^{\top} \bbet ) u \pi ( {t_i}^{\ga } \exp ( \ga \x_i^{\top} \bbet ) u) \over C / ( \{ 1 + \log (1 + t_i ) \} [1 + \log \{ 1 + \log (1 + t_i ) \} ^{1 + c} ])} du \non 
\end{align}
for $i = 1, \dots , n$. 
It follows from Lemma \ref{lem:pointwise} that the posterior is robust if \begin{align}
&\lim_{\om \to \infty } \int_{(0, \infty ) \times \mathbb{R} ^p \times (0, \infty )} \pit ( \al , \bbet , \ga ) \Big\{ \prod_{i = 1}^{n} f( \al , \x_i^{\top} \bbet , \ga ; t_i ) \Big\} d( \al , \bbet , \ga ) \non \\
&= \int_{(0, \infty ) \times \mathbb{R} ^p \times (0, \infty )} \pit ( \al , \bbet , \ga ) \Big\{ \prod_{i \in \Kc } f( \al , \x_i^{\top} \bbet , \ga ; t_i ) \Big\} d( \al , \bbet , \ga ) \text{.} \label{eq:NC_limit} 
\end{align}

\begin{proof}[Proof of (\ref{eq:NC_limit})]
Fix $N > 0$ (for example, let $N = 1$). 
By Lemma 1 of \cite{hamura2024posterior}, there exist $\de > 0$, $0 < \ep < 1$, and $\underline{\om } > N / \de $ such that for all $\om \ge \underline{\om }$, 
\begin{align}
\mathbb{R} ^p &\subset \Big( \bigcap_{i \in \Lc } \{ \bbe \in \mathbb{R} ^p | | \log t_i + \x_i^{\top} \bbe | > \ep | \log t_i | \} \Big) \non \\
&\quad \cup \bigcup_{l = 1}^{\min \{ p, | \Lc | \} } \bigcup_{\substack{i_1 , \dots , i_l \in \Lc \\ i_1 < \dots < i_l}} \Big( \Big( \bigcap_{i \in \Lc \setminus \{ i_1 , \dots , i_l \} } \{ \bbet \in \mathbb{R} ^p | | \log t_i + \x_i^{\top} \bbe | > \ep | \log t_i | \} \Big) \non \\
&\quad \cap \Big( \bigcup_{1 \le k_1 < \dots < k_l \le p} \bigcap_{k \in \{ k_1 , \dots , k_l \} } \{ \bbet \in \mathbb{R} ^p | | \bet _k | \ge \de \om \} \Big) \non \\
&\quad \cap \Big[ \Big( \bigcap_{j \in \Jh } \{ \bbet \in \mathbb{R} ^p | | a_j + {\x _j}^{\top } \bbe | > N \} \Big) \non \\
&\quad \cup \bigcup_{1 \le q \le p - l} \bigcup_{\substack{j_1 , \dots , j_q \in \Jh \\ j_1 < \dots < j_q}} \Big\{ \Big( \bigcap_{j \in \{ j_1 , \dots , j_q \} } \{ \bbet \in \mathbb{R} ^p | | a_j + {\x _j}^{\top } \bbet | \le N \} \Big) \non \\
&\quad \cap \bigcap_{j \in \Jh \setminus \{ j_1 , \dots , j_q \} } \{ \bbet \in \mathbb{R} ^p | | a_j + {\x _j}^{\top } \bbet | > N \} \Big\} \Big] \Big) \text{,} \label{trobustness_newp1} 
\end{align}
where $\Jh = \{ - 1, \dots , - p \} \cup \Kc $ and where $a_j = 0$ and $\x _j = \e _{- j}^{(p)}$ for $j = - 1, \dots , - p$. 
In this proof, we assume that $\underline{\om }$ (and hence $\om $) is sufficiently large. 

By Lemma \ref{lem:L_2_new}, 
\begin{align}
&1(| \log t_i + \x_i^{\top} \bbet | > \ep | \log t_i |) f( \al , \x_i^{\top} \bbet , \ga ; t_i ) \non \\
&\le M_1 1(| \log t_i + \x_i^{\top} \bbet | > \ep | \log t_i |) \non \\
&\quad \times \int_{0}^{\infty } \Big\{ {\al ^{\al } \ga \over \Ga ( \al )} {e^{- \al / u} \over u^{1 + \al }} {1 + \log (1 + t_i ) \over 1 + \log \{ 1 + ( {t_i}^{\ep } \min \{ u^{1 / \ga } , 1 / u^{1 / \ga } \} )^{\ga } \} } \non \\
&\quad \times {[1 + \log \{ 1 + \log (1 + t_i ) \} ]^{1 + c} \over (1 + \log [1 + \log \{ 1 + ( {t_i}^{\ep } \min \{ u^{1 / \ga } , 1 / u^{1 / \ga } \} )^{\ga } \} ])^{1 + c}} \Big\} du \non 
\end{align}
for all $i \in \Lc $ for some $M_1 > 0$. 
If $\ga > 1 / \om ^{1 / 2}$, 
\begin{align}
&\{ u \in (0, \infty ) | \min \{ u^{1 / \ga } , 1 / u^{1 / \ga } \} \le 1/ {t_i}^{\ep / 2} \} \non \\
&\subset \{ u \in (0, \infty ) | \min \{ u, 1 / u \} \le M_2 / \exp ( \om ^{1 / 2} / M_3 ) \} \non 
\end{align}
for some $M_2 , M_3 > 0$. 
If $\al \ge 1 / \om ^{1 / 4}$, then $0 < \rho ( \om ) \le ( \al / 2) / (1 + \al ) < 1$, where $\rho ( \om ) = M_2 / \exp ( \om ^{1 / 2} / M_3 )$, and therefore 
\begin{align}
(0, \rho ( \om )] &\subset \Big\{ u \in (0, \infty ) \Big| {e^{- \al / u} \over u^{1 + \al }} \le e^{- \al / \{ 2 \rho ( \om ) \} } \Big\{ {e^{- \al / (2 u)} \over u^{1 + \al }} \Big\} \Big| _{u = \rho ( \om )} \Big\} \non \\
&\subset \Big\{ u \in (0, \infty ) \Big| {e^{- \al / u} \over u^{1 + \al }} \le e^{- \al / \{ 2 \rho ( \om ) \} } \exp \Big\{ \log {1 \over \rho ( \om )} - {\al \over 4} {1 \over \rho ( \om )} \Big\} \Big\} \non \\
&\subset \Big\{ u \in (0, \infty ) \Big| {e^{- \al / u} \over u^{1 + \al }} \le e^{- \al / \{ 2 \rho ( \om ) \} } \Big\} \non 
\end{align}
Thus, 
\begin{align}
&1( \al \ge 1 / \om ^{1 / 4} ) 1( \ga > 1 / \om ^{1 / 2} ) \int_{0}^{\infty } \Big\{ {\al ^{\al } \ga \over \Ga ( \al )} {e^{- \al / u} \over u^{1 + \al }} {1 + \log (1 + t_i ) \over 1 + \log \{ 1 + ( {t_i}^{\ep } \min \{ u^{1 / \ga } , 1 / u^{1 / \ga } \} )^{\ga } \} } \non \\
&\quad \times {[1 + \log \{ 1 + \log (1 + t_i ) \} ]^{1 + c} \over (1 + \log [1 + \log \{ 1 + ( {t_i}^{\ep } \min \{ u^{1 / \ga } , 1 / u^{1 / \ga } \} )^{\ga } \} ])^{1 + c}} \Big\} du \non \\
&\le \int_{0}^{\rho ( \om )} {\al ^{\al } \ga \over \Ga ( \al )} {e^{- \al / u} \over u^{1 + \al }} \{ 1 + \log (1 + t_i ) \} [1 + \log \{ 1 + \log (1 + t_i ) \} ]^{1 + c} du \non \\
&\quad + \int_{\rho ( \om )}^{1 / \rho ( \om )} {\al ^{\al } \ga \over \Ga ( \al )} {e^{- \al / u} \over u^{1 + \al }} {1 + \log (1 + t_i ) \over 1 + \log \{ 1 + ( {t_i}^{\ep / 2} )^{\ga } \} } {[1 + \log \{ 1 + \log (1 + t_i ) \} ]^{1 + c} \over (1 + \log [1 + \log \{ 1 + ( {t_i}^{\ep / 2} )^{\ga } \} ])^{1 + c}} du \non \\
&\quad + 1( \al \ge 1 / \om ^{1 / 4} ) \int_{1 / \rho ( \om )}^{\infty } {\al ^{\al } \ga \over \Ga ( \al )} {1 \over u^{1 + \al }} \{ 1 + \log (1 + t_i ) \} [1 + \log \{ 1 + \log (1 + t_i ) \} ]^{1 + c} du \non \\
&\le {\al ^{\al } \ga \over \Ga ( \al )} e^{- \al / \{ 2 \rho ( \om ) \} } \{ 1 + \log (1 + t_i ) \} [1 + \log \{ 1 + \log (1 + t_i ) \} ]^{1 + c} {M_2 \over \exp ( \om ^{1 / 2} / M_3 )} \non \\
&\quad + M_4 \{ 1 + \log (1 + 1 / \ga ) \} \non \\
&\quad + 1( \al \ge 1 / \om ^{1 / 4} ) {\al ^{\al } \ga \over \Ga ( \al )} {\{ \rho ( \om ) \} ^{\al } \over \al } \{ 1 + \log (1 + t_i ) \} [1 + \log \{ 1 + \log (1 + t_i ) \} ]^{1 + c} \non \\
&\le M_5 \Big[ {\ga \over \exp \{ \om ^{1 / 2} / (2 M_3 ) \} } + 1 + \log (1 + 1 / \ga ) + \{ 1( \al \ge 1) e^{\al } \exp ( - \al \om ^{1 / 2} / M_3 ) \non \\
&\quad + 1(1 / \om ^{1 / 4} \le \al \le 1) \exp ( - \al \om ^{1 / 2} / M_3 ) \} \ga \om ( \log \om )^{1 + c} \Big] \non 
\end{align}
for some $M_4 , M_5 > 0$ and 
\begin{align}
&1( \al \ge 1 / \om ^{1 / 4} ) 1(| \log t_i + \x_i^{\top} \bbet | > \ep | \log t_i |) 1(1 / \om ^{1 / 2} \le \ga \le \exp \{ \om ^{1 / 4} / (2 M_3 ) \} ) \non \\
&\le M_6 \{ 1 + \log (1 + 1 / \ga ) \} / f( \al , \x_i^{\top} \bbet , \ga ; t_i ) \non 
\end{align}
for all $i \in \Lc $ for some $M_6 > 0$. 
Hence, by the dominated convergence theorem, 
\begin{align}
&\lim_{\om \to \infty } \int_{(0, \infty ) \times \mathbb{R} ^p \times (0, \infty )} \Big[ 1( \al \ge 1 / \om ^{1 / 4} ) \Big\{ \prod_{i \in \Lc } 1(| \log t_i + \x_i^{\top} \bbet | > \ep | \log t_i |) \Big\} \non \\
&\quad \times 1(1 / \om ^{1 / 2} \le \ga \le \exp \{ \om ^{1 / 4} / (2 M_3 ) \} ) \pit ( \al , \bbet , \ga ) \prod_{i = 1}^{n} f( \al , \x_i^{\top} \bbet , \ga ; t_i ) \Big] d( \al , \bbet , \ga ) \non \\
&= \int_{(0, \infty ) \times \mathbb{R} ^p \times (0, \infty )} \Big( \lim_{\om \to \infty } \Big[ 1( \al \ge 1 / \om ^{1 / 4} ) \Big\{ \prod_{i \in \Lc } 1(| \log t_i + \x_i^{\top} \bbet | > \ep | \log t_i |) \Big\} \non \\
&\quad \times 1(1 / \om ^{1 / 2} \le \ga \le \exp \{ \om ^{1 / 4} / (2 M_3 ) \} ) \pit ( \al , \bbet , \ga ) \prod_{i = 1}^{n} f( \al , \x_i^{\top} \bbet , \ga ; t_i ) \Big] \Big) d( \al , \bbet , \ga ) \non \\
&= \int_{(0, \infty ) \times \mathbb{R} ^p \times (0, \infty )} \pit ( \al , \bbet , \ga ) \Big\{ \prod_{i \in \Kc } f( \al , \x_i^{\top} \bbet , \ga ; t_i ) \Big\} d( \al , \bbet , \ga ) \text{.} \label{trobustness_newp2} 
\end{align}

Fix $0 < \eta < \ka / (n + \ka )$. 
By Lemma \ref{lem:K_new}, 
\begin{align}
&\pit ( \al , \bbet , \ga ) \prod_{i = 1}^{n} f( \al , \x_i^{\top} \bbet , \ga ; t_i ) \non \\
&\le M_7 {\pit ( \al , \bbet , \ga ) \over \{ ( \min \{ \al , \al ^{1 / 2} \} ) \ga \} ^{\ka }} \{ \om ( \log \om )^{1 + c} \} ^{| \Lc |} \{ ( \min \{ \al , \al ^{1 / 2} \} ) \ga \} ^{n + \ka } \non \\
&\le M_8 {\pit ( \al , \bbet , \ga ) \over \{ ( \min \{ \al , \al ^{1 / 2} \} ) \ga \} ^{\ka }} {\{ \om ( \log \om )^{1 + c} \} ^{| \Lc |} \over \om ^{n + \ka - \eta (n + \ka )}} \le M_8 {\pit ( \al , \bbet , \ga ) \over \{ ( \min \{ \al , \al ^{1 / 2} \} ) \ga \} ^{\ka }} \label{trobustness_newp3} 
\end{align}
for some $M_7 , M_8 > 0$ if $( \ep / 4) ( \min \{ \al , \al ^{1 / 2} \} ) \ga \min_{i \in \Lc } \log t_i \le \om ^{\eta }$. 
By Lemma \ref{lem:K_new}, 
\begin{align}
&\pit ( \al , \bbet , \ga ) \prod_{i = 1}^{n} f( \al , \x_i^{\top} \bbet , \ga ; t_i ) \non \\
&\le M_9 \ga ^{\ka / 2} \pit ( \al , \bbet , \ga ) {\{ \om ( \log \om )^{1 + c} \} ^{| \Lc |} \over \ga ^{\ka / 2}} \{ ( \min \{ \al , \al ^{1 / 2} \} ) \ga \} ^n \non \\
&\le M_{10} \ga ^{\ka / 2} \pit ( \al , \bbet , \ga ) {\{ \om ( \log \om )^{1 + c} \} ^{| \Lc |} \over \exp \{ \ka \om ^{\eta } / (2 N) \} } \{ \log (1 + \ga ) \} ^n \le M_{11} ( \ga ^{\ka / 2} + \ga ^{\ka }) \pit ( \al , \bbet , \ga ) \label{trobustness_newp4} 
\end{align}
for some $M_9 , M_{10} , M_{11} > 0$ if $\ga \ge \exp ( \om ^{\eta } / N)$ and $( \ep / 4) ( \min \{ \al , \al ^{1 / 2} \} ) \ga \le \log (1 + \ga )$. 
Fix $0 \le l \le \min \{ p, | \Lc | \} $ and $0 \le q \le p - l$. 
Fix $i_1 , \dots , i_l \in \Lc $ with $i_1 < \dots < i_l$ and fix $1 \le k_1 < \dots < k_l \le p$. 
Fix $j_1 , \dots , j_q \in \Jh $ with $j_1 < \dots < j_q$. 
By (\ref{trobustness_newp1}), (\ref{trobustness_newp2}), (\ref{trobustness_newp3}), and (\ref{trobustness_newp4}), it is sufficient to prove that 
\begin{align}
&\lim_{\om \to \infty } \int_{(0, \infty ) \times \mathbb{R} ^p \times (0, \infty )} h( \al , \bbet , \ga ; \om ) d( \al , \bbet , \ga ) = 0 \text{,} \non 
\end{align}
where 
\begin{align}
h( \al , \bbet , \ga ; \om ) &= 1 \Big( \bbet \in \Big( \bigcap_{i \in \Lc \setminus \{ i_1 , \dots , i_l \} } \{ \bbet \in \mathbb{R} ^p | | \log t_i + \x_i^{\top} \bbet | > \ep | \log t_i | \} \Big) \non \\
&\quad \cap \Big( \bigcap_{k \in \{ k_1 , \dots , k_l \} } \{ \bbet \in \mathbb{R} ^p | | \bet _k | \ge \de \om \} \Big) \non \\
&\quad \cap \Big( \bigcap_{j \in \{ j_1 , \dots , j_q \} } \{ \bbet \in \mathbb{R} ^p | | a_j + {\x _j}^{\top } \bbet | \le N \} \Big) \non \\
&\quad \cap \bigcap_{j \in \Jh \setminus \{ j_1 , \dots , j_q \} } \{ \bbet \in \mathbb{R} ^p | | a_j + {\x _j}^{\top } \bbet | > N \} \Big) \non \\
&\quad \times \{ 1 - 1( \ga \ge \exp ( \om ^{\eta } / N)) 1(( \ep / 4) ( \min \{ \al , \al ^{1 / 2} \} ) \ga \le \log (1 + \ga )) \} \non \\
&\quad \times 1(( \ep / 4) ( \min \{ \al , \al ^{1 / 2} \} ) \ga \min_{i \in \Lc } \log t_i \ge \om ^{\eta } ) \non \\
&\quad \times \Big[ 1 - 1( \al \ge 1 / \om ^{1 / 4} ) \Big\{ \prod_{i \in \Lc } 1(| \log t_i + \x_i^{\top} \bbet | > \ep | \log t_i |) \Big\} \non \\
&\quad \times 1(1 / \om ^{1 / 2} \le \ga \le \exp \{ \om ^{1 / 4} / (2 M_3 ) \} ) \Big] \non \\
&\quad \times \pit ( \al , \bbet , \ga ) \prod_{i = 1}^{n} f( \al , \x_i^{\top} \bbet , \ga ; t_i ) \text{.} \non 
\end{align}
By Lemma \ref{lem:L_improved}, we need only consider the case $l \ge 1$. 

There exist $q_1 , q_2 \ge 0$, $k_1' , \dots , k_{q_1}' = - 1, \dots , - p$, and $i_1' , \dots , i_{q_2}' \in \Kc $ such that $q_1 + q_2 = q$, $k_1' < \dots < k_{q_1}'$, $i_1' < \dots < i_{q_2}'$, and $\{ k_1' , \dots , k_{q_1}' \} \cup \{ i_1' , \dots , i_{q_2}' \} = \{ j_1 , \dots , j_q \} $. 
Since $\de \om > N$, we can assume that $\{ k_1 , \dots , k_l \} \cap \{ - k_1' , \dots , - k_{q_1}' \} = \emptyset $. 
Let $k_1'' , \dots , k_{p - l - q}'' \in \{ 1, \dots , p \} \setminus ( \{ k_1 , \dots , k_l \} \cup \{ - k_1' , \dots , - k_{q_1}' \} )$ be such that $k_1'' < \dots < k_{p - l - q}''$. 
By (\ref{eq:a3_new}), 
\begin{align}
h( \al , \bbet , \ga ; \om ) %
&\le 1 \Big( \bbet \in \Big( \bigcap_{i \in \Lc \setminus \{ i_1 , \dots , i_l \} } \{ \bbet \in \mathbb{R} ^p | | \log t_i + \x_i^{\top} \bbet | > \ep | \log t_i | \} \Big) \cap \non \\
&\quad \Big( \bigcap_{k \in \{ k_1 , \dots , k_l \} } \{ \bbet \in \mathbb{R} ^p | | \bet _k | \ge \de \om \} \Big) \non \\
&\quad \cap \Big( \bigcap_{j \in \{ j_1 , \dots , j_q \} } \{ \bbet \in \mathbb{R} ^p | | a_j + {\x _j}^{\top } \bbet | \le N \} \Big) \cap \non \\
&\quad \bigcap_{j \in \Jh \setminus \{ j_1 , \dots , j_q \} } \{ \bbet \in \mathbb{R} ^p | | a_j + {\x _j}^{\top } \bbet | > N \} \Big) \non \\
&\quad \times \{ 1 - 1( \ga \ge \exp ( \om ^{\eta } / N)) 1(( \ep / 4) ( \min \{ \al , \al ^{1 / 2} \} ) \ga \le \log (1 + \ga )) \} \non \\
&\quad \times 1(( \ep / 4) ( \min \{ \al , \al ^{1 / 2} \} ) \ga \min_{i \in \Lc } \log t_i \ge \om ^{\eta } ) \non \\
&\quad \times \Big\{ \prod_{k \in \{ - k_1' , \dots , - k_{q_1}' \} } {| \bet _k |^{a_{\bbet } - 1} \over (1 + | \bet _k |)^{a_{\bbet } + b_{\bbet }}} \Big\} \non \\
&\quad \times \Big\{ \prod_{i \in \{ i_1' , \dots , i_{q_2}' \} } f( \al , \x_i^{\top} \bbet , \ga ; t_i ) \Big\} \Big\{ \prod_{i \in \{ i_1 , \dots , i_l \} } f( \al , \x_i^{\top} \bbet , \ga ; t_i ) \Big\} \non \\
&\quad \times M \pi ( \al , \ga ) \Big\{ \prod_{k \in \{ 1, \dots , p \} \setminus ( \{ k_1 , \dots , k_l \} \cup \{ - k_1' , \dots , - k_{q_1}' \} )} {| \bet _k |^{a_{\bbet } - 1} \over (1 + | \bet _k |)^{a_{\bbet } + b_{\bbet }}} \Big\} \non \\
&\quad \times \Big\{ \prod_{k \in \{ k_1 , \dots , k_l \} } {| \bet _k |^{a_{\bbet } - 1} \over (1 + | \bet _k |)^{a_{\bbet } + b_{\bbet }}} \Big\} \non \\
&\quad \times \Big\{ \prod_{i \in \Kc \setminus \{ i_1' , \dots , i_{q_2}' \} } f( \al , \x_i^{\top} \bbet , \ga ; t_i ) \Big\} \Big\{ \prod_{i \in \Lc \setminus \{ i_1 , \dots , i_l \} } f( \al , \x_i^{\top} \bbet , \ga ; t_i ) \Big\} \text{.} \non 
\end{align}
By Lemma \ref{lem:L_2_new}, 
\begin{align}
h( \al , \bbet , \ga ; \om ) &\le M_{12} 1 \Big( \bbet \in \Big( \bigcap_{i \in \Lc \setminus \{ i_1 , \dots , i_l \} } \{ \bbet \in \mathbb{R} ^p | | \log t_i + \x_i^{\top} \bbet | > \ep | \log t_i | \} \Big) \non \\
&\quad \cap \Big( \bigcap_{k \in \{ k_1 , \dots , k_l \} } \{ \bbet \in \mathbb{R} ^p | | \bet _k | \ge \de \om \} \Big) \non \\
&\quad \cap \Big( \bigcap_{j \in \{ j_1 , \dots , j_q \} } \{ \bbet \in \mathbb{R} ^p | | a_j + {\x _j}^{\top } \bbet | \le N \} \Big) \non \\
&\quad \cap \bigcap_{j \in \Jh \setminus \{ j_1 , \dots , j_q \} } \{ \bbet \in \mathbb{R} ^p | | a_j + {\x _j}^{\top } \bbet | > N \} \Big) \non \\
&\quad \times \{ 1 - 1( \ga \ge \exp ( \om ^{\eta } / N)) 1(( \ep / 4) ( \min \{ \al , \al ^{1 / 2} \} ) \ga \le \log (1 + \ga )) \} \non \\
&\quad \times 1(( \ep / 4) ( \min \{ \al , \al ^{1 / 2} \} ) \ga \min_{i \in \Lc } \log t_i \ge \om ^{\eta } ) \non \\
&\quad \times \Big\{ \prod_{k \in \{ - k_1' , \dots , - k_{q_1}' \} } {| \bet _k |^{a_{\bbet } - 1} \over (1 + | \bet _k |)^{a_{\bbet } + b_{\bbet }}} \Big\} \Big\{ \prod_{i \in \{ i_1' , \dots , i_{q_2}' \} } f( \al , \x_i^{\top} \bbet , \ga ; t_i ) \Big\} \non \\
&\quad \times \Big( \{ \om ( \log \om )^{1 + c} \} ^l \non \\
&\quad \times \prod_{i \in \{ i_1 , \dots , i_l \} } \int_{0}^{\infty } \Big[ {\al ^{\al } \ga \over \Ga ( \al )} {e^{- \al / u} \over u^{1 + \al }} {1 \over 1 + \log [1 + \exp \{ | \ga (\log t_i + \x_i^{\top} \bbet ) + \log u| \} ]} \non \\
&\quad \times {1 \over \{ 1 + \log (1 + \log [1 + \exp \{ | \ga ( \log t_i + \x_i^{\top} \bbet ) + \log u| \} ]) \} ^{1 + c}} \Big] du \Big) \non \\
&\quad \times \Big\{ \prod_{k \in \{ k_1'' , \dots , k_{p - l - q}'' \} } {| \bet _k |^{a_{\bbet } - 1} \over (1 + | \bet _k |)^{a_{\bbet } + b_{\bbet }}} \Big\} \pi ( \al , \ga ) {1 \over \om ^{l (1 + b_{\bbet } )}} \non \\
&\quad \times \Big\{ \prod_{i \in \Kc \setminus \{ i_1' , \dots , i_{q_2}' \} } f( \al , \x_i^{\top} \bbet , \ga ; t_i ) \Big\} \Big\{ \prod_{i \in \Lc \setminus \{ i_1 , \dots , i_l \} } f( \al , \x_i^{\top} \bbet , \ga ; t_i ) \Big\} \non 
\end{align}
for some $M_{12} > 0$. 
By Lemmas \ref{lem:K_improved} and \ref{lem:L_improved}, 
\begin{align}
h( \al , \bbet , \ga ; \om ) &\le M_{13} 1 \Big( \bbet \in \Big( \bigcap_{i \in \Lc \setminus \{ i_1 , \dots , i_l \} } \{ \bbet \in \mathbb{R} ^p | | \log t_i + \x_i^{\top} \bbet | > \ep | \log t_i | \} \Big) \non \\
&\quad \cap \Big( \bigcap_{k \in \{ k_1 , \dots , k_l \} } \{ \bbet \in \mathbb{R} ^p | | \bet _k | \ge \de \om \} \Big) \non \\
&\quad \cap \Big( \bigcap_{j \in \{ j_1 , \dots , j_q \} } \{ \bbet \in \mathbb{R} ^p | | a_j + {\x _j}^{\top } \bbet | \le N \} \Big) \non \\
&\quad \cap \bigcap_{j \in \Jh \setminus \{ j_1 , \dots , j_q \} } \{ \bbet \in \mathbb{R} ^p | | a_j + {\x _j}^{\top } \bbet | > N \} \Big) \non \\
&\quad \times \{ 1 - 1( \ga \ge \exp ( \om ^{\eta } / N)) 1(( \ep / 4) ( \min \{ \al , \al ^{1 / 2} \} ) \ga \le \log (1 + \ga )) \} \non \\
&\quad \times 1(( \ep / 4) ( \min \{ \al , \al ^{1 / 2} \} ) \ga \min_{i \in \Lc } \log t_i \ge \om ^{\eta } ) \non \\
&\quad \times \Big\{ \prod_{k \in \{ - k_1' , \dots , - k_{q_1}' \} } {| \bet _k |^{a_{\bbet } - 1} \over (1 + | \bet _k |)^{a_{\bbet } + b_{\bbet }}} \Big\} \Big\{ \prod_{i \in \{ i_1' , \dots , i_{q_2}' \} } f( \al , \x_i^{\top} \bbet , \ga ; t_i ) \Big\} \non \\
&\quad \times \Big( \{ \om ( \log \om )^{1 + c} \} ^l \non \\
&\quad \times \prod_{i \in \{ i_1 , \dots , i_l \} } \int_{0}^{\infty } \Big[ {\al ^{\al } \ga \over \Ga ( \al )} {e^{- \al / u} \over u^{1 + \al }} {1 \over 1 + \log [1 + \exp \{ | \ga (\log t_i + \x_i^{\top} \bbet ) + \log u| \} ]} \non \\
&\quad \times {1 \over \{ 1 + \log (1 + \log [1 + \exp \{ | \ga ( \log t_i + \x_i^{\top} \bbet ) + \log u| \} ]) \} ^{1 + c}} \Big] du \Big) \non \\
&\quad \times \Big\{ \prod_{k \in \{ k_1'' , \dots , k_{p - l - q}'' \} } {| \bet _k |^{a_{\bbet } - 1} \over (1 + | \bet _k |)^{a_{\bbet } + b_{\bbet }}} \Big\} \pi ( \al , \ga ) {1 \over \om ^{l (1 + b_{\bbet } )}} \non \\
&\quad \times \{ 1 + \log (1 + \ga ) \} ^{| \Kc | - q_2} \prod_{i \in \Lc \setminus \{ i_1 , \dots , i_l \} } [1 + \{ 1 + \log (1 + 1 / \ga ) \} ^{1 + c} ] \non 
\end{align}
for some $M_{13} > 0$. 
Therefore, 
\begin{align}
&\int_{(0, \infty ) \times \mathbb{R} ^p \times (0, \infty )} h( \al , \bbet , \ga ; \om ) d( \al , \bbet , \ga ) \non \\
&\le M_{13} \int_{(0, \infty ) \times \mathbb{R} ^p \times (0, \infty )} \Big\{ \Big\{ \prod_{k \in \{ - k_1' , \dots , - k_{q_1}' \} } {| \bet _k |^{a_{\bbet } - 1} \over (1 + | \bet _k |)^{a_{\bbet } + b_{\bbet }}} \Big\} \Big\{ \prod_{i \in \{ i_1' , \dots , i_{q_2}' \} } f( \al , \x_i^{\top} \bbet , \ga ; t_i ) \Big\} \non \\
&\quad \times \Big( \prod_{i \in \{ i_1 , \dots , i_l \} } \int_{0}^{\infty } \Big[ {\al ^{\al } \ga \over \Ga ( \al )} {e^{- \al / u} \over u^{1 + \al }} {1 \over 1 + \log [1 + \exp \{ | \ga (\log t_i + \x_i^{\top} \bbet ) + \log u| \} ]} \non \\
&\quad \times {1 \over \{ 1 + \log (1 + \log [1 + \exp \{ | \ga ( \log t_i + \x_i^{\top} \bbet ) + \log u| \} ]) \} ^{1 + c}} \Big] du \Big) \non \\
&\quad \times \Big\{ \prod_{k \in \{ k_1'' , \dots , k_{p - l - q}'' \} } {| \bet _k |^{a_{\bbet } - 1} \over (1 + | \bet _k |)^{a_{\bbet } + b_{\bbet }}} \Big\} \pi ( \al , \ga ) {( \log \om )^{l (1 + c)} \over \om ^{l b_{\bbet }}} \non \\
&\quad \times \{ 1 + \log (1 + \ga ) \} ^{| \Kc | - q_2} \Big( \prod_{i \in \Lc \setminus \{ i_1 , \dots , i_l \} } [1 + \{ 1 + \log (1 + 1 / \ga ) \} ^{1 + c} ] \Big) \Big\} d( \al , \bbet , \ga ) \non \\
&\le M_{14} {( \log \om )^{l (1 + c)} \over \om ^{l b_{\bbet }}} \int_{(0, \infty )^2} \Big\{ \pi ( \al , \ga ) \{ 1 + \log (1 + \ga ) \} ^{| \Kc | - q_2} \non \\
&\quad \times \Big( \prod_{i \in \Lc \setminus \{ i_1 , \dots , i_l \} } [1 + \{ 1 + \log (1 + 1 / \ga ) \} ^{1 + c} ] \Big) \Big\} d( \al , \ga ) \non 
\end{align}
for some $M_{14} > 0$, which converges to $0$ as $\om \to \infty $. 
This completes the proof. 
\end{proof}

\subsection{Lemmas}
Here, we prove lemmas.

\begin{lem}
\label{lem:ga} 
For any $x > 0$, 
\begin{align}
{x^{x - 1 / 2} \over \Ga (x) e^x} < {1 \over (2 \pi )^{1 / 2}} \text{.} \non 
\end{align}
\end{lem}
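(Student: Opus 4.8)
The plan is to pass to logarithms and reduce the claim to the strict positivity of the Binet-type remainder
\[
\mu (x) = \log \Ga (x) - \Big( x - \tfrac12 \Big) \log x + x - \tfrac12 \log (2 \pi ) .
\]
Indeed, exponentiating gives $x^{x - 1/2} / \{ \Ga (x) e^x \} = (2 \pi )^{-1/2} e^{- \mu (x)}$, so the inequality to be proved is exactly $(2 \pi )^{-1/2} e^{- \mu (x)} < (2 \pi )^{-1/2}$, i.e. $\mu (x) > 0$ for all $x > 0$. Thus the whole problem becomes: show $\mu (x) > 0$.

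My route to this is a telescoping argument driven by the functional equation $\Ga (x + 1) = x \Ga (x)$. Substituting this into the definition of $\mu$ and cancelling the elementary terms, I would verify the identity
\[
\mu (x) - \mu (x + 1) = \Big( x + \tfrac12 \Big) \log \Big( 1 + \tfrac1x \Big) - 1 =: g(x) ,
\]
so that everything hinges on the sign of $g$. I would then prove $g(y) > 0$ for every $y > 0$ by a monotonicity argument. Differentiating gives
\[
g'(y) = \log \frac{y + 1}{y} - \frac12 \Big( \frac1y + \frac{1}{y + 1} \Big) = \int_y^{y + 1} \frac{dt}{t} - \frac12 \Big( \frac1y + \frac{1}{y + 1} \Big) ,
\]
and since $1 / t$ is strictly convex, the trapezoidal rule strictly overestimates the integral, whence $g'(y) < 0$. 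Combined with $\lim_{y \to \infty } g(y) = 0$ (immediate from the expansion $\log (1 + 1 / y) = 1 / y - 1 / (2 y^2) + \cdots$, which gives $g(y) = 1 / (12 y^2) + O(y^{-3})$), the fact that $g$ is strictly decreasing to $0$ forces $g(y) > 0$ on $(0, \infty )$.

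Finally I would assemble the pieces by summing the telescoping identity: for each $n$ one has $\mu (x) - \mu (x + n) = \sum_{k = 0}^{n - 1} g(x + k)$, and letting $n \to \infty $ while invoking Stirling's asymptotic $\mu (x + n) \to 0$ yields $\mu (x) = \sum_{k = 0}^{\infty } g(x + k)$. Every summand is strictly positive, so $\mu (x) > 0$, and the lemma follows. The only non-elementary ingredient is the limit $\mu (x) \to 0$, i.e. Stirling's formula, which I would simply cite; the genuine content is the convexity/trapezoid estimate establishing $g' < 0$, and that short step is the one I expect to be the crux of the argument.
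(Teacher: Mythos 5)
Your argument is correct. The reduction to $\mu(x)>0$ for the Binet remainder is exact, the telescoping identity $\mu(x)-\mu(x+1)=(x+\tfrac12)\log(1+\tfrac1x)-1=g(x)$ follows from $\Ga(x+1)=x\Ga(x)$, the trapezoidal overestimate for the strictly convex integrand $1/t$ gives $g'<0$, and together with $g(y)\to 0$ this forces $g>0$, so $\mu(x)=\sum_{k\ge 0}g(x+k)>0$ once Stirling supplies $\mu(x+n)\to 0$ (which is not circular, since the asymptotic vanishing of $\mu$ does not by itself give its sign). Note that the paper offers no proof of its own here — it simply points to the supplementary material of an external reference — so there is no in-paper argument to compare against; your self-contained classical Binet-function derivation is a perfectly adequate substitute, with the only imported ingredient being Stirling's formula (needed to identify the constant $\tfrac12\log(2\pi)$ as the exact limit).
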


\begin{lem}
\label{lem:log} 
\hfill
\begin{itemize}
\item[{\rm{(i)}}]
For $v > 0$, 
\begin{align}
{1 + \log (1 + u) \over 1 + \log (1 + u v)} &\begin{cases} \to 1 \text{,} & \text{as $u \to \infty $} \text{,} \\ \le 1 + \log (1 + 1 / v) \text{,} & \text{for all $u > 0$} \text{.} \end{cases} \non 
\end{align}
\item[{\rm{(ii)}}]
For $v > 0$, 
\begin{align}
{1 + \log \{ 1 + \log (1 + u) \} \over 1 + \log \{ 1 + \log (1 + u v) \} } &\begin{cases} \to 1 \text{,} & \text{as $u \to \infty $} \text{,} \\ \le 1 + \log \{ 1 + \log (1 + 1 / v) \} \text{,} & \text{for all $u > 0$} \text{.} \end{cases} \non 
\end{align}
\end{itemize}
\end{lem}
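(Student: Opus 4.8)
The plan is to reduce both parts to a single elementary inequality together with one bootstrapping step. The key tool is the subadditivity-type bound for $\log (1 + \cdot )$: for any $a, b \ge 0$ one has $(1 + a)(1 + b) = 1 + a + b + ab \ge 1 + ab$, and taking logarithms gives
\begin{align}
\log (1 + ab) \le \log (1 + a) + \log (1 + b). \non
\end{align}
I would also record the trivial auxiliary fact that whenever $A, B \ge 1$ we have $AB - (A + B - 1) = (A - 1)(B - 1) \ge 0$, so that $A + B - 1 \le AB$.

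For the uniform bound in (i), I would apply the displayed inequality with $a = uv$ and $b = 1/v$, writing $\log (1 + u) = \log (1 + (uv)(1/v)) \le \log (1 + uv) + \log (1 + 1/v)$. Adding $1$ to both sides and setting $A = 1 + \log (1 + uv) \ge 1$ and $B = 1 + \log (1 + 1/v) \ge 1$, the auxiliary fact yields $1 + \log (1 + u) \le A + B - 1 \le AB$; dividing by $A > 0$ gives exactly $\{ 1 + \log (1 + u) \} / \{ 1 + \log (1 + uv) \} \le 1 + \log (1 + 1/v)$. For the limit in (i), I would divide numerator and denominator by $\log u$: since $\log (1 + u) / \log u \to 1$, $\log (1 + uv) / \log u = \{ \log u + \log v + o(1) \} / \log u \to 1$, and $1 / \log u \to 0$ as $u \to \infty $, the ratio tends to $1$.

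Part (ii) then bootstraps from (i). Set $P = 1 + \log (1 + u)$, $Q = 1 + \log (1 + uv)$, and $K = 1 + \log (1 + 1/v)$, all of which are $\ge 1$. Part (i) is precisely $P \le K Q$, so taking logarithms gives $\log P \le \log K + \log Q$, i.e. $1 + \log P \le (1 + \log Q) + \log K$. Applying the auxiliary fact once more with $A = 1 + \log Q \ge 1$ and $B = 1 + \log K \ge 1$ gives $1 + \log P \le (1 + \log Q)(1 + \log K)$, which rearranges to the desired bound, since $1 + \log P$ and $1 + \log Q$ are exactly the numerator and denominator in (ii) and $1 + \log K = 1 + \log \{ 1 + \log (1 + 1/v) \}$. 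For the limit in (ii), the limit already proved in (i) gives $P / Q \to 1$, hence $\log P - \log Q = \log (P / Q) \to 0$; since $\log Q \to \infty $, we obtain $(1 + \log P)/(1 + \log Q) = 1 + (\log P - \log Q)/(1 + \log Q) \to 1$.

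The computations are entirely elementary and I do not anticipate a genuine obstacle. The only things to spot are the subadditivity inequality for $\log (1 + \cdot )$ and the $A + B - 1 \le AB$ trick for factors exceeding $1$; the mild cleverness is the observation that the double-logarithm case (ii) need not be estimated afresh but follows by feeding the conclusion of (i) through exactly the same two steps.
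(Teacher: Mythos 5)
Your proof is correct. For part (ii) it is essentially the paper's argument in different clothing: the paper writes the ratio as $1 + \log(P/Q)/(1+\log Q)$, bounds $\log(P/Q)\le\log K$ using part (i), and then drops the denominator; your chain $1+\log P\le A+B-1\le AB$ followed by division by $A$ is the same computation, since $(A+B-1)/A = 1+(B-1)/A = 1+\log K/(1+\log Q)\le 1+\log K$. The real difference is in part (i), which the paper does not prove at all but outsources to Lemma S1 of Hamura et al.\ (2022) and Lemma S2 of Hamura et al.\ (2024); your subadditivity argument $\log(1+ab)\le\log(1+a)+\log(1+b)$ with $a=uv$, $b=1/v$, combined with the $A+B-1\le AB$ trick, gives a short self-contained derivation, and your limit computations (dividing by $\log u$ for (i), and $\log(P/Q)\to 0$ with $\log Q\to\infty$ for (ii)) are both sound. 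In short: what you do differently is supply the elementary proof of (i) that the paper only cites, and what the uniform treatment buys you is that both uniform bounds fall out of one inequality applied twice.
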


\begin{lem}
\label{lem:pointwise} 
We have 
\begin{align}
&\lim_{t_i \to \infty } f( \al , \ga \x_i^{\top} \bbet , \ga ; t_i ) = 1 \text{.} \non 
\end{align}
\end{lem}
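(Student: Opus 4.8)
The plan is to recognize $f$ as $\ga$ times an expectation against an inverse-gamma density and then pass to the limit under the integral sign. Writing $g(t)=t\pi(t)$, absorbing the fixed factor $\exp(\ga\x_i^\top\bbet)$ into a constant $A>0$ independent of $t_i$, and noting that $u\mapsto \frac{\al^\al}{\Ga(\al)} u^{-1-\al} e^{-\al/u}$ is the density of $\mathrm{IG}(\al,\al)$ and integrates to $1$, I would rewrite
\[
f(\al, \ga\x_i^\top\bbet, \ga; t_i) = \ga \int_0^\infty \frac{\al^\al}{\Ga(\al)} \frac{e^{-\al/u}}{u^{1+\al}} \, \frac{g(t_i^\ga A u)}{D(t_i)}\, du,
\]
where $D(t) = C/(\{1+\log(1+t)\}[1+\log\{1+\log(1+t)\}]^{1+c})$ is exactly the right-hand side of (\ref{eq:a_new_1}). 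The whole point is that the prefactor $\ga$ will cancel a factor $1/\ga$ produced by the limit of the integrand, so the limit comes out to exactly $1$.

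For the pointwise limit (fix $u>0$, let $t_i\to\infty$), I would factor $\frac{g(t_i^\ga Au)}{D(t_i)} = \frac{g(t_i^\ga Au)}{D(t_i^\ga Au)}\cdot\frac{D(t_i^\ga Au)}{D(t_i)}$. The first factor tends to $1$ by the tail equivalence (\ref{eq:a_new_1}), since $t_i^\ga Au\to\infty$. For the second factor I would split $\frac{D(t_i^\ga Au)}{D(t_i)}$ into the ratio of the single-log terms and the ratio of the double-log terms; the double-log ratio tends to $1$, while the single-log ratio $\frac{1+\log(1+t_i)}{1+\log(1+t_i^\ga Au)}$ tends to $1/\ga$ because $\log(1+t_i^\ga Au)\sim\ga\log t_i$. (The multiplicative scaling by $Au$ contributes only a factor tending to $1$, which is precisely Lemma \ref{lem:log} applied with base $t_i^\ga$ and scale $v=Au$.) Hence the integrand tends to $1/\ga$, and therefore $f\to\ga\cdot(1/\ga)=1$ once the interchange of limit and integral is justified.

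The main work, and the main obstacle, is producing a $t_i$-independent integrable dominating function so that dominated convergence applies. Here I would use the \emph{global} bound (\ref{eq:a_new_2}) on $g$ rather than the asymptotic (\ref{eq:a_new_1}), because for small $u$ the argument $t_i^\ga Au$ need not be large. Bounding $g(t_i^\ga Au)$ via (\ref{eq:a_new_2}) produces a factor $D(t_i^\ga Au)$, to be paired with $D(t_i)$, times two further factors of the form $1/(1+\log(1+1/s))$ and $1/[1+\log\{1+\log(1+1/s)\}]^{1+c}$, which I simply bound by $1$. For the surviving ratio $D(t_i^\ga Au)/D(t_i)$ I would apply the uniform bounds in Lemma \ref{lem:log}(i),(ii): with base $s=t_i^\ga$ these give $\frac{1+\log(1+t_i^\ga)}{1+\log(1+t_i^\ga Au)}\le 1+\log(1+1/(Au))$ and the analogous double-log bound, both independent of $t_i$, while the remaining ratios $\frac{1+\log(1+t_i)}{1+\log(1+t_i^\ga)}$ and its double-log analogue are continuous in $t_i$ with finite limits and so are bounded by a constant depending only on $\ga$. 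This yields a dominating function of the form $\mathrm{const}\cdot(1+\log(1+1/(Au)))\,(1+\log\{1+\log(1+1/(Au))\})^{1+c}$, which grows only polylogarithmically as $u\to0$ and is bounded as $u\to\infty$; it is therefore integrable against the $\mathrm{IG}(\al,\al)$ density, whose $e^{-\al/u}$ factor annihilates the small-$u$ growth and whose $u^{-1-\al}$ tail is integrable at infinity. With this dominating function in hand, the dominated convergence theorem gives $\lim_{t_i\to\infty} f = \ga\cdot(1/\ga)=1$.
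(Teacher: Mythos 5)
Your proposal is correct and follows essentially the same route as the paper's proof: the global bound (\ref{eq:a_new_2}) combined with Lemma \ref{lem:log} to build a $t_i$-independent dominating function of the form $\mathrm{const}\cdot\{1+\log(1+1/u)\}[1+\log\{1+\log(1+1/u)\}]^{1+c}$ times the inverse-gamma density, then the tail equivalence (\ref{eq:a_new_1}) for the pointwise limit $1/\ga$ of the integrand, and dominated convergence to conclude $f\to\ga\cdot(1/\ga)=1$. The only cosmetic difference is that the paper splits the $1+\log(1+1/(Au))$ factor further into separate $\bbet$- and $u$-dependent pieces, which changes nothing.
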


\begin{lem}
\label{lem:L_2_new} 
There exists $M > 0$ such that for all $\ep > 0$, all $i \in \Lc $, and all $( \al , \bbet , \ga ) \in (0, \infty ) \times \mathbb{R} ^p \times (0, \infty )$ with $| \log t_i + \x_i^{\top} \bbet | > \ep | \log t_i |$, we have 
\begin{align}
f( \al , \x_i^{\top} \bbet , \ga ; t_i ) &\le M \int_{0}^{\infty } \Big\{ {\al ^{\al } \ga \over \Ga ( \al )} {e^{- \al / u} \over u^{1 + \al }} {1 + \log (1 + t_i ) \over 1 + \log \{ 1 + ( {t_i}^{\ep } \min \{ u^{1 / \ga } , 1 / u^{1 / \ga } \} )^{\ga } \} } \non \\
&\quad \times {[1 + \log \{ 1 + \log (1 + t_i ) \} ]^{1 + c} \over (1 + \log [1 + \log \{ 1 + ( {t_i}^{\ep } \min \{ u^{1 / \ga } , 1 / u^{1 / \ga } \} )^{\ga } \} ])^{1 + c}} \Big\} du \text{.} \non 
\end{align}
Also, there exists $M > 0$ such that for all $i = 1, \dots , n$, 
\begin{align}
f( \al , \x_i^{\top} \bbet , \ga ; t_i ) &\le M \int_{0}^{\infty } \Big[ {\al ^{\al } \ga \over \Ga ( \al )} {e^{- \al / u} \over u^{1 + \al }} {1 + \log (1 + t_i ) \over 1 + \log [1 + \exp \{ | \ga (\log t_i + \x_i^{\top} \bbet ) + \log u| \} ]} \non \\
&\quad \times {[1 + \log \{ 1 + \log (1 + t_i ) \} ]^{1 + c} \over \{ 1 + \log (1 + \log [1 + \exp \{ | \ga ( \log t_i + \x_i^{\top} \bbet ) + \log u| \} ]) \} ^{1 + c}} \Big] du \non \\
&\le M \{ 1 + \log (1 + t_i ) \} [1 + \log \{ 1 + \log (1 + t_i ) \} ]^{1 + c} \ga \text{.} \non 
\end{align}
\end{lem}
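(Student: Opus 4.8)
The plan is to bound $f(\al, \x_i^{\top}\bbet, \ga; t_i)$ pointwise inside its defining $u$-integral, exploiting the super heavy-tailed estimate (\ref{eq:a_new_2}) on the local prior. Writing $z = t_i^{\ga}\exp(\ga \x_i^{\top}\bbet) u = \exp\{\ga(\log t_i + \x_i^{\top}\bbet) + \log u\}$, the integrand of $f$ equals $\frac{\al^{\al}\ga}{\Ga(\al)}\frac{e^{-\al/u}}{u^{1+\al}}$ times $z\pi(z)$ times the reciprocal of the normalizing constant $C/(\{1+\log(1+t_i)\}[1+\log\{1+\log(1+t_i)\}]^{1+c})$. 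Thus everything reduces to controlling $z\pi(z)$, and (\ref{eq:a_new_2}) bounds it by a product of four factors, each at most $1$: single-log factors in $z$ and in $1/z$, and their two double-log counterparts. For each claimed inequality I would retain only the one single-log and one double-log factor that match the target expression, bound the remaining two by $1$, and then fold in the normalizing reciprocal, which supplies the numerator $\{1+\log(1+t_i)\}[1+\log\{1+\log(1+t_i)\}]^{1+c}$.

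For the first bound, under $|\log t_i + \x_i^{\top}\bbet| > \ep|\log t_i| = \ep\log t_i$ (recall $\log t_i > 0$ for an outlier), I would split according to the sign of $\log t_i + \x_i^{\top}\bbet$. If $\log t_i + \x_i^{\top}\bbet > \ep\log t_i$, then $z > t_i^{\ga\ep} u \ge t_i^{\ga\ep}\min\{u, 1/u\} = (t_i^{\ep}\min\{u^{1/\ga}, 1/u^{1/\ga}\})^{\ga}$, so monotonicity of the logarithm dominates $\frac{1}{1+\log(1+z)}$ and $\frac{1}{[1+\log\{1+\log(1+z)\}]^{1+c}}$ by the same factors with $z$ replaced by $(t_i^{\ep}\min\{u^{1/\ga}, 1/u^{1/\ga}\})^{\ga}$. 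If instead $\log t_i + \x_i^{\top}\bbet < -\ep\log t_i$, then symmetrically $1/z > t_i^{\ga\ep}/u \ge (t_i^{\ep}\min\{u^{1/\ga}, 1/u^{1/\ga}\})^{\ga}$, and the identical matching holds for the single- and double-log factors in $1/z$. In either case, keeping the matched factors, discarding the other two, and multiplying by the normalizing reciprocal yields precisely the claimed integrand.

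For the second bound I would use the universal identity $\max\{z, 1/z\} = \exp\{|\ga(\log t_i + \x_i^{\top}\bbet) + \log u|\}$, so that both $\frac{1}{1+\log(1+z)}$ and $\frac{1}{1+\log(1+1/z)}$ are dominated by $\frac{1}{1+\log[1+\exp\{|\ga(\log t_i + \x_i^{\top}\bbet) + \log u|\}]}$, and likewise for the double-log factors; keeping one of each gives the first displayed inequality. For the final inequality I would note that the two inner log-factors are each at most $1$, so the integrand is at most $\{1+\log(1+t_i)\}[1+\log\{1+\log(1+t_i)\}]^{1+c}\ga \cdot \frac{\al^{\al}}{\Ga(\al)}\frac{e^{-\al/u}}{u^{1+\al}}$; since $\frac{\al^{\al}}{\Ga(\al)}\frac{e^{-\al/u}}{u^{1+\al}}$ is the ${\rm IG}(\al, \al)$ density and integrates to $1$ over $u \in (0, \infty)$, integrating leaves the stated bound with the factor $\ga$ pulled out.

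The main obstacle is the case analysis for the first bound: one must correctly pair the sign of the residual $\log t_i + \x_i^{\top}\bbet$ with whether it is the single-log factor in $z$ or in $1/z$ that can be compared to $(t_i^{\ep}\min\{u^{1/\ga}, 1/u^{1/\ga}\})^{\ga}$, and verify that the elementary inequalities $u \ge \min\{u, 1/u\}$ and $1/u \ge \min\{u, 1/u\}$ deliver that comparison uniformly in $u$. Everything else is bookkeeping, since each discarded factor of (\ref{eq:a_new_2}) is at most $1$, the logarithm is monotone, and the inverse-gamma kernel normalizes to one.
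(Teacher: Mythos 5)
Your proposal is correct and follows essentially the same route as the paper's proof: bound $z \pi (z)$ with $z = {t_i}^{\ga } \exp ( \ga \x_i^{\top} \bbet ) u$ via (\ref{eq:a_new_2}), retain the single- and double-log factors evaluated at $\max \{ z, 1 / z \} = \exp \{ | \ga ( \log t_i + \x_i^{\top} \bbet ) + \log u | \} $ while bounding the other two by one, fold in the normalizing constant, and finish by integrating the inverse-gamma kernel; the only cosmetic difference is that the paper derives your first inequality from the second via $| \ga ( \log t_i + \x_i^{\top} \bbet ) + \log u | \ge \ga | \log t_i + \x_i^{\top} \bbet | - | \log u |$ and $e^{- | \log u |} = \min \{ u, 1 / u \} $ rather than by your sign split, and the two are equivalent. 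One sentence of yours should be repaired: it is not true that \emph{both} $1 / \{ 1 + \log (1 + z) \} $ and $1 / \{ 1 + \log (1 + 1 / z) \} $ are dominated by $1 / (1 + \log [1 + \max \{ z, 1 / z \} ])$ --- only the factor whose argument is the maximum equals that bound, while the other is merely $\le 1$ --- but since your opening paragraph states the correct mechanism (keep the matching factor, bound the remaining ones by one), the argument goes through as intended.
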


\begin{lem}
\label{lem:K_new} 
\noindent
There exists $M > 0$ such that for all $i = 1, \dots , n$, 
\begin{align}
f( \al , \x_i^{\top} \bbet , \ga ; t_i ) &\le \begin{cases} M ( \min \{ \al , \al ^{1 / 2} \} ) \ga \text{,} & \text{if $i \in \Kc$} \text{,} \\ M \om ( \log \om )^{1 + c'} ( \min \{ \al , \al ^{1 / 2} \} ) \ga \text{,} & \text{if $i \in \Lc$} \text{.} \end{cases} \non 
\end{align}
\end{lem}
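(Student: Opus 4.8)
The plan is to sharpen the integral bound already supplied by the second part of Lemma~\ref{lem:L_2_new} so as to peel off the factor $\min\{\al,\al^{1/2}\}$. Writing $A=\ga(\log t_i+\x_i^\top\bbet)$, $N_1=1+\log(1+t_i)$, $N_2=[1+\log\{1+\log(1+t_i)\}]^{1+c}$, and
\[
D_1(u)=1+\log[1+\exp\{|A+\log u|\}],\qquad D_2(u)=\{1+\log(1+\log[1+\exp\{|A+\log u|\}])\}^{1+c},
\]
the second inequality of Lemma~\ref{lem:L_2_new} can be written as
\[
f(\al,\x_i^\top\bbet,\ga;t_i)\le M\,N_1N_2\,\ga\int_0^\infty\frac{\al^\al}{\Ga(\al)}\frac{e^{-\al/u}}{u^{1+\al}}\frac{du}{D_1(u)D_2(u)}.
\]
Since $N_1,N_2,\ga$ and all remaining factors are positive, the whole statement reduces to (i) pulling a factor $\min\{\al,\al^{1/2}\}$ out of the inverse-gamma kernel $\frac{\al^\al}{\Ga(\al)}\frac{e^{-\al/u}}{u^{1+\al}}$ and (ii) showing that the leftover $u$-integral is bounded by a constant uniform in all parameters.

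For step (i) I would combine two elementary Gamma estimates into the single pointwise bound $\frac{\al^\al}{\Ga(\al)}\le C\,e^\al\min\{\al,\al^{1/2}\}$ valid for all $\al>0$: for $\al\ge1$ this is Lemma~\ref{lem:ga}, which gives $\al^\al/\Ga(\al)\le(2\pi)^{-1/2}\al^{1/2}e^\al$; for $\al\le1$ one writes $\al^\al/\Ga(\al)=\al\,\al^\al/\Ga(\al+1)$ and uses $\al^\al\le1$ together with the positive lower bound of $\Ga$ on $[1,2]$ to get $\al^\al/\Ga(\al)\le C\al=C\min\{\al,\al^{1/2}\}$. Combining this with $1/u+\log u\ge1$, which gives $\frac{e^{-\al/u}}{u^{1+\al}}=\frac1u e^{-\al(1/u+\log u)}\le\frac1u e^{-\al}$, I obtain
\[
\frac{\al^\al}{\Ga(\al)}\frac{e^{-\al/u}}{u^{1+\al}}\le C\min\{\al,\al^{1/2}\}\,\frac1u\,e^{-\al(1/u+\log u-1)}\le C\min\{\al,\al^{1/2}\}\,\frac1u,
\]
so that $f\le MC\,\min\{\al,\al^{1/2}\}\,\ga\,N_1N_2\int_0^\infty\frac{du}{u\,D_1(u)D_2(u)}$.

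Step (ii), the uniform boundedness of $\int_0^\infty\frac{du}{u\,D_1(u)D_2(u)}$, is the main obstacle, and it is precisely where the iterated-logarithm tail (the exponent $1+c$ with $c>0$) is needed; a single log-adjustment would leave a divergent integral here. Substituting $s=\log u$ and then $r=A+s$, and using $\log[1+\exp|r|]\ge|r|$ to get $D_1\ge1+|r|$ and $D_2\ge\{1+\log(1+|r|)\}^{1+c}$ (the analogous ratio comparisons can alternatively be made via Lemma~\ref{lem:log}), I find
\[
\int_0^\infty\frac{du}{u\,D_1(u)D_2(u)}\le\int_{-\infty}^\infty\frac{dr}{(1+|r|)\{1+\log(1+|r|)\}^{1+c}}=:C''<\infty,
\]
the last integral converging exactly because $c>0$, and the bound being uniform in $\al,\ga,\bbet,t_i$ since the shift by $A$ does not change its value. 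This yields $f\le M'\min\{\al,\al^{1/2}\}\,\ga\,N_1N_2$. Finally, for $i\in\Kc$ the quantity $N_1N_2$ is a fixed constant absorbed into $M'$, while for $i\in\Lc$ the relation $\log t_i=a_i+b_i\om$ gives $N_1=O(\om)$ and $N_2=O((\log\om)^{1+c})$, hence $N_1N_2\le M''\om(\log\om)^{1+c}$ for large $\om$; taking $c'=c$ then completes the proof.
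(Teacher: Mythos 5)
Your proof is correct, but it reaches the bound by a different route from the paper. The paper's own argument is shorter and exploits the fact that $\pi$ is a probability density: after using $e^{-\al /u}/u^{\al } \le e^{-\al }$ (the same observation as your $1/u+\log u\ge 1$), the remaining integral $\int_0^\infty t_i^{\ga }\exp (\ga \x_i^{\top }\bbet )\,\pi (t_i^{\ga }\exp (\ga \x_i^{\top }\bbet )u)\,du$ equals exactly $1$ by a change of variables, so no tail condition on $\pi $ is invoked at all; the factor $\min \{\al ,\al ^{1/2}\}$ then comes from $\al ^{\al }e^{-\al }/\Ga (\al )\le M\min \{\al ,\al ^{1/2}\}$ exactly as you derive it (Lemma \ref{lem:ga} for $\al \ge 1$, the elementary $\Ga (\al )=\Ga (\al +1)/\al $ estimate for $\al \le 1$). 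You instead route through the second display of Lemma \ref{lem:L_2_new}, i.e.\ through the upper bound \eqref{eq:a_new_2}, and then verify by the shift $r=\ga (\log t_i+\x_i^{\top }\bbet )+\log u$ that $\int_{-\infty }^{\infty }(1+|r|)^{-1}\{1+\log (1+|r|)\}^{-(1+c)}\,dr<\infty $. That computation is sound and uniform in all parameters, so your conclusion stands; the only caveat is your side remark that the iterated logarithm is ``needed'' here. It is needed for \emph{your} route, but the paper's proof shows Lemma \ref{lem:K_new} holds for any mixing density $\pi $ satisfying \eqref{eq:a_new_1} regardless of the second log-adjustment; the double log is essential elsewhere (e.g.\ Lemma \ref{lem:L_improved} and the non-robustness example of Section \ref{subsec:supp-nonrobust}), not at this step. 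What your approach buys is an explicit quantitative constant ($2/c$ for the $u$-integral) at the cost of invoking the stronger hypothesis \eqref{eq:a_new_2}; what the paper's buys is minimal hypotheses and a two-line derivation.
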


\begin{lem}
\label{lem:L} 
\hfill
\begin{itemize}
\item[{\rm{(i)}}]
For all $\De > 0$, 
\begin{align}
\int_{0}^{1 / (1 + \De )} {\al ^{\al } \over \Ga ( \al )} {e^{- \al / u} \over u^{1 + \al }} du &\le \int_{1}^{\infty } \al ^{1 / 2} \De \exp \Big\{ - {t^2 \over 4} ( \al ^{1 / 2} \De )^2 \Big\} dt \non \\
&\quad + \int_{1}^{\infty } ( \al \De )^{1 / 2} \exp \Big( - {t \over 4} \al \De \Big) dt \text{.} \non 
\end{align}
\item[{\rm{(ii)}}]
For all $\De > 0$, 
\begin{align}
&\int_{1 + \De }^{K} {\al ^{\al } \over \Ga ( \al )} {e^{- \al / u} \over u^{1 + \al }} du \le K \int_{- \infty }^{- 1} |t| \al ^{1 / 2} {\log (1 + \De ) \over 1 + \log K} \exp \Big[ - {|t|^2 \over 2} \al {\{ \log (1 + \De ) \} ^2 \over (1 + \log K)^2} \Big] dt \non 
\end{align}
for all $K > 1 + \De $. 
\item[{\rm{(iii)}}]
For all $K > 0$, 
\begin{align}
\int_{K}^{\infty } {\al ^{\al } \over \Ga ( \al )} {e^{- \al / u} \over u^{1 + \al }} du &\le \min \Big\{ {\al ^{\al } \over \Ga ( \al + 1)} {1 \over K^{\al }}, {1 \over \al ^{1 / 2}} {1 \over (K / e)^{\al }} \Big\} \text{.} \non 
\end{align}
\end{itemize}
\end{lem}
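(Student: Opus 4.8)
The plan is to recognize that the common integrand $\frac{\al^\al}{\Ga(\al)}\frac{e^{-\al/u}}{u^{1+\al}}$ is exactly the density of an inverse-gamma law ${\rm IG}(\al,\al)$, so that under the substitution $v=1/u$ each of the three integrals becomes a tail or interval probability for $V\sim{\rm Ga}(\al,\al)$, a gamma variable with mean $1$ and variance $1/\al$. Concretely, part (i) is $P(V\ge1+\De)$, part (iii) is $P(V\le1/K)$, and part (ii) is $P(1/K\le V\le1/(1+\De))$. The three bounds are then sub-Gaussian/sub-exponential concentration estimates for this gamma law, and the uniform-in-$\al$ control of the normalising constant is supplied by the Stirling-type inequality of Lemma \ref{lem:ga}, which gives $\al^\al/\Ga(\al)<\al^{1/2}e^\al/(2\pi)^{1/2}$.

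I would dispatch (iii) first, as it is purely elementary: bounding $e^{-\al/u}\le1$ and integrating $u^{-1-\al}$ over $[K,\infty)$ gives the first term $\frac{\al^\al}{\Ga(\al+1)}K^{-\al}$ directly, and feeding $\al^\al/\Ga(\al)<\al^{1/2}e^\al/(2\pi)^{1/2}$ into $\frac{\al^\al}{\Ga(\al+1)}=\frac{1}{\al}\frac{\al^\al}{\Ga(\al)}$ converts this into the second term $\al^{-1/2}(K/e)^{-\al}$, the factor $(2\pi)^{1/2}>1$ absorbing the constant.

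For (i) I would change variables $v=1+\De t$ in $P(V\ge1+\De)=\frac{\al^\al}{\Ga(\al)}\int_{1+\De}^\infty v^{\al-1}e^{-\al v}\,dv$ and apply Lemma \ref{lem:ga}, so that the prefactor becomes $\frac{\al^{1/2}}{(2\pi)^{1/2}}\De$ and the $e^\al$ cancels one factor of $e^{-\al(1+\De t)}$, leaving $\int_1^\infty(1+\De t)^{\al-1}e^{-\al\De t}\,dt$. The decisive step is the elementary convexity bound $\log(1+x)\le x-\frac{x^2}{2(1+x)}$ for $x\ge0$, which turns the exponent into $(\al-1)\log(1+\De t)-\al\De t\le-\De t-(\al-1)\frac{(\De t)^2}{2(1+\De t)}$. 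Splitting according to whether $\De t\le1$, where $\frac{(\De t)^2}{1+\De t}\ge\frac{(\De t)^2}{2}$ and one recovers the Gaussian integrand $\al^{1/2}\De\,e^{-(t^2/4)\al\De^2}$, or $\De t\ge1$, where $\frac{(\De t)^2}{1+\De t}\ge\frac{\De t}{2}$ and one recovers the exponential integrand $(\al\De)^{1/2}e^{-(t/4)\al\De}$, yields exactly the two stated terms; the $\al-1$ versus $\al$ discrepancy and the small-$\al$ range are what force the two distinct prefactors $\al^{1/2}$ and $(\al\De)^{1/2}$.

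Part (ii) is the only genuinely delicate piece and where I expect the main obstacle. Here $V$ lies below its mean, so the natural variable is $w=\log u=-\log v$, ranging over $[\log(1+\De),\log K]$; after $v=1/u$ and Lemma \ref{lem:ga} the integral becomes $\frac{\al^{1/2}}{(2\pi)^{1/2}}\int_{\log(1+\De)}^{\log K}e^{-\al(w+e^{-w}-1)}\,dw$. The target bound is of Gaussian form with variance-scale $(1+\log K)/\al^{1/2}$, which is \emph{wider} than the true exponential lower-tail decay; the art is to lower-bound the rate $w+e^{-w}-1$ by a quadratic whose curvature is controlled uniformly on the $\log$-range by $1/(1+\log K)^2$, and then to rewrite the endpoint-dominated integral as the weighted Gaussian integral $K\int_{-\infty}^{-1}|t|\,\al^{1/2}\frac{\log(1+\De)}{1+\log K}\exp[-\frac{|t|^2}{2}\al\frac{(\log(1+\De))^2}{(1+\log K)^2}]\,dt$, with the prefactor $K$ and the $|t|$-weight absorbing the interval length and the slack from $e^{-\al/u}\le1$. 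Checking that this quadratic lower bound holds for all admissible $(\De,K)$ with $K>1+\De$, and that the change of variables reproduces the stated integral with the correct constants, is the bookkeeping I would spend the most care on; everything else reduces to routine manipulation of Gaussian and exponential integrals.
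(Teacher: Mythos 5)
Your overall strategy is exactly the paper's: substitute $v=1/u$ to turn the integrand into the ${\rm{Ga}}(\al,\al)$ density, control $\al^{\al}/\Ga(\al)$ uniformly via Lemma \ref{lem:ga}, bound the exponent with the elementary inequality $s-\log(1+s)\ge s^2/\{2(1+s)\}$ (equivalently your $\log(1+x)\le x-x^2/\{2(1+x)\}$), and split into a Gaussian regime and an exponential regime for (i) and a rescaled Gaussian tail over $|t|\ge 1$ for (ii). Parts (ii) and (iii) are handled essentially as in the paper ((iii) is verbatim; for (ii) your $w=\log u$ parametrization and the curvature bound $w+e^{-w}-1\ge w^2/\{2(1+\log K)^2\}$ on $[\log(1+\De),\log K]$ is the same estimate the paper obtains in the variable $s=v-1$, and the $|t|$-weight and factor $K$ absorb the constants just as you say).

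The one step that fails as written is in part (i). You apply the convexity bound to the term $(\al-1)\log(1+\De t)$, claiming
\begin{align}
(\al-1)\log(1+\De t)-\al\De t\le-\De t-(\al-1)\frac{(\De t)^2}{2(1+\De t)} \text{.} \non
\end{align}
This requires multiplying the inequality $\log(1+x)\le x-x^2/\{2(1+x)\}$ by $\al-1$, which reverses direction whenever $\al<1$; since $\al$ ranges over all of $(0,\infty)$ here, the bound is not valid in general. Moreover, even for $\al\ge1$ this grouping discards the factor $(1+\De t)^{-1}$, which is what the paper uses (via $1+\De t\ge 2(\De t)^{1/2}$ on the regime $\De t\ge1$) to obtain the prefactor $(\al\De)^{1/2}$ rather than $\al^{1/2}\De$ in the exponential term; with your version the second stated integrand is not recovered when $\De$ is large. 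The repair is the paper's decomposition: write $(1+\De t)^{\al-1}e^{-\al\De t}=\frac{1}{1+\De t}\,e^{-\al\{\De t-\log(1+\De t)\}}$, apply the log inequality only to the bracket carrying the positive coefficient $\al$, and then bound $\De/(1+\De t)$ by $\De$ when $\De t\le1$ and by $\De^{1/2}/(2t^{1/2})$ when $\De t\ge1$. With that substitution your argument for (i) goes through and coincides with the paper's.
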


\begin{lem}
\label{lem:K_improved} 
Let $N > 0$. 
Let $i \in \Kc $. 
Suppose that $| \log t_i + \x_i^{\top} \bbet | \ge N$. 
Then there exists $M > 0$ such that 
\begin{align}
f( \al , \x_i^{\top} \bbet , \ga ; t_i ) &\le M \{ 1 + \log (1 + \ga ) \} \text{.} \non 
\end{align}
\end{lem}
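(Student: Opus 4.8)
The plan is to bootstrap from Lemma~\ref{lem:L_2_new}. Its second inequality already gives $f(\al,\x_i^\top\bbet,\ga;t_i)\le M\{1+\log(1+t_i)\}[1+\log\{1+\log(1+t_i)\}]^{1+c}\,\ga$, and since $i\in\Kc$ is fixed the two factors depending on $t_i$ are bounded constants; in particular the asserted bound holds trivially for bounded $\ga$ (e.g. $\ga\le1$, where $f\le M'\le M'\{1+\log(1+\ga)\}$), so I only need to treat large $\ga$ and upgrade the crude prefactor $\ga$ to $1+\log(1+\ga)$. The whole point is to use the hypothesis $|\log t_i+\x_i^\top\bbet|\ge N$: writing $\mu=\ga(\log t_i+\x_i^\top\bbet)$ it forces $|\mu|\ge\ga N$, so the prior-tail factor in Lemma~\ref{lem:L_2_new} — which peaks (at value of order one) when $\log u\approx-\mu$ — has its peak displaced by at least $\ga N$ from $\log u=0$, the point at which the inverse-gamma kernel $\frac{\al^\al}{\Ga(\al)}\frac{e^{-\al/u}}{u^{1+\al}}$ concentrates.

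Concretely, after absorbing the fixed constants I would reduce the claim to bounding $\ga\int_0^\infty\frac{\al^\al}{\Ga(\al)}\frac{e^{-\al/u}}{u^{1+\al}}\,\Psi(\mu+\log u)\,du$, where, using the elementary inequality $\log(1+e^{|x|})\ge|x|$, $\Psi(x)\le(1+|x|)^{-1}\{1+\log(1+|x|)\}^{-(1+c)}$. I would then split $(0,\infty)$ into a bulk, where $|\mu+\log u|\gtrsim|\mu|\ge\ga N$ so that $\Psi\lesssim(1+\ga N)^{-1}$ while the kernel contributes mass at most one, giving $\lesssim\ga/(1+\ga N)\lesssim1$; and a peak neighbourhood around $\log u=-\mu$, where I bound $\Psi\le1$ and estimate the kernel mass by the tail bounds of Lemma~\ref{lem:L}. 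For $\mu>0$ the peak lies in the doubly-exponentially suppressed left tail (Lemma~\ref{lem:L}(i)); for $\mu<0$ it lies in the polynomial right tail, where Lemma~\ref{lem:L}(iii) supplies decay of order $\frac{\al^\al}{\Ga(\al+1)}e^{-\al|\mu|}$ or $\al^{-1/2}e^{-\al(|\mu|-1)}$. The residual logarithm $\log(1+\ga)$ I expect to come from the intermediate range of $\log u$ lying between $0$ and $-\mu$, an integral of the shape $\int_0^{|\mu|}e^{\al\rho}\Psi(\rho)\,d\rho$, for which Lemma~\ref{lem:log} and Lemma~\ref{lem:L}(ii) provide the relevant logarithmic estimates.

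The main obstacle is uniformity over all $(\al,\ga)\in(0,\infty)^2$, and in particular reconciling two opposite regimes: small $\al$, where the inverse-gamma law is heavy-tailed so that its right tail genuinely overlaps the peak of $\Psi$, and large $\al$, where the kernel is sharply concentrated near $\log u=0$. The decay $e^{-\al|\mu|}\le e^{-\al\ga N}$ provided by the hypothesis is exactly what tames the peak contribution once multiplied by the prefactor $\ga$: one must verify that $\ga\,\frac{\al^\al}{\Ga(\al)}\,e^{-\al\ga N}$ stays $O(1+\log(1+\ga))$ uniformly, and here I would use Lemma~\ref{lem:ga} in the form $\frac{\al^\al}{\Ga(\al)}<\al^{1/2}e^{\al}/\sqrt{2\pi}$ for large $\al$ and the elementary asymptotic $\frac{\al^\al}{\Ga(\al)}=\al\,\frac{\al^\al}{\Ga(\al+1)}\asymp\al$ for small $\al$, so that $\ga\,\al^{1/2}e^{\al(1-\ga N)}$ and $\ga\,\al\,e^{-\al\ga N}$ are each maximised near $\al\ga N\asymp1$ and there bounded. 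Assembling the bulk and peak estimates in the two sign cases of $\mu$ and recombining then yields $f\le M\{1+\log(1+\ga)\}$; the $\min\{\al,\al^{1/2}\}$ interpolation already present in Lemma~\ref{lem:K_new} is the natural device for stating these $\al$-dependent bounds uniformly.
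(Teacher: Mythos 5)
Your overall route matches the paper's: reduce via Lemma~\ref{lem:L_2_new}, absorb the $t_i$-dependent factors as constants since $i\in\Kc$ is fixed, dispose of bounded $\ga$ trivially, and for large $\ga$ exploit the displacement $\ga|\log t_i+\x_i^{\top}\bbet|\ge N\ga$ by splitting the $u$-integral at the midpoint $u=e^{\pm N\ga/2}$; the bulk is killed by the factor $1/(1+\log(1+e^{N\ga/2}))\lesssim 1/(N\ga)$ against the prefactor $\ga$, and the two tails are estimated by Lemma~\ref{lem:L}(i) and (iii). This is exactly the paper's third case.

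There is, however, a genuine gap in your treatment of small $\al$. The right-tail mass of the kernel is
\begin{align}
\int_{K}^{\infty}{\al^{\al}\over\Ga(\al)}{e^{-\al/u}\over u^{1+\al}}\,du\le{\al^{\al}\over\Ga(\al)}{K^{-\al}\over\al}={\al^{\al}\over\Ga(\al+1)}K^{-\al}\text{,}\non
\end{align}
and the $1/\al$ coming from $\int_K^\infty u^{-1-\al}du$ cancels the $\al$ in $\al^{\al}/\Ga(\al)\asymp\al$. So the quantity you actually need to control is $\ga\,\{\al^{\al}/\Ga(\al+1)\}\,e^{-\al N\ga/2}$, which tends to $\ga$ as $\al\to 0$ for fixed $\ga$ --- unbounded, not $O(1+\log(1+\ga))$. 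Your claim that ``$\ga\al e^{-\al\ga N}$ is maximised near $\al\ga N\asymp 1$ and there bounded'' is true of that expression but rests on a spurious extra factor of $\al$; and even your large-$\al$ surrogate $\ga\al^{1/2}e^{\al(1-\ga N)}$ has supremum over $\al>0$ of order $\ga^{1/2}$ (attained at $\al\asymp 1/(\ga N)$). A similar failure occurs for the left tail via Lemma~\ref{lem:L}(i) when $\al^{1/2}\De$ is small. The paper's resolution --- and the true source of the $\log(1+\ga)$ in the statement --- is a preliminary case split you do not make: if $(\min\{\al,\al^{1/2}\})\ga\le(4/N)\log(1+\ga)$, abandon the integral splitting entirely and invoke Lemma~\ref{lem:K_new}, which gives $f\le M(\min\{\al,\al^{1/2}\})\ga\le M'\{1+\log(1+\ga)\}$ directly; only on the complementary event is the tail analysis carried out, where the assumed lower bound on $(\min\{\al,\al^{1/2}\})\ga$ makes $\ga$ times the tail masses bounded. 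You gesture at Lemma~\ref{lem:K_new} in your last sentence, but without this dichotomy the argument does not close for small $\al$; also, your proposed origin of the logarithm (an intermediate-range integral handled by Lemma~\ref{lem:log} and Lemma~\ref{lem:L}(ii)) is not the operative mechanism --- the middle region contributes only $O(1/N)$ and Lemma~\ref{lem:L}(ii) is not needed here.
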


\begin{lem}
\label{lem:L_improved} 
Let $\ep > 0$. 
Let $i \in \Lc $. 
Suppose that $| \log t_i + \x_i^{\top} \bbet | > \ep | \log t_i |$. 
Suppose that $( \ep / 4) ( \min \{ \al , \al ^{1 / 2} \} ) \ga \log t_i \ge \om ^{\eta }$. 
Suppose that we have $\al \ga \ge \log (1 + \ga )$ or $\ga \le \exp ( \om ^{\eta } / 4)$. 
Then there exists $M > 0$ such that 
\begin{align}
f( \al , \x_i^{\top} \bbet , \ga ; t_i ) \le M [1 + \{ 1 + \log (1 + 1 / \ga ) \} ^{1 + c} ] \text{.} \non 
\end{align}
\end{lem}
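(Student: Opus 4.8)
The plan is to start from the first inequality of Lemma \ref{lem:L_2_new}, which applies precisely under the gap hypothesis $|\log t_i + \x_i^{\top}\bbet| > \ep|\log t_i|$ that is assumed here, and then to analyze the resulting one-dimensional integral against the inverse-gamma kernel $g(u) = (\al^\al \ga / \Ga(\al))\, e^{-\al/u}/u^{1+\al}$. Using $(\min\{u^{1/\ga}, 1/u^{1/\ga}\})^\ga = \min\{u, 1/u\}$, the task reduces to bounding
\[
\int_0^\infty g(u)\, \frac{1 + \log(1 + t_i)}{1 + \log\{1 + t_i^{\ep\ga}\min\{u,1/u\}\}}\, \frac{[1 + \log\{1 + \log(1+t_i)\}]^{1+c}}{(1 + \log[1 + \log\{1 + t_i^{\ep\ga}\min\{u,1/u\}\}])^{1+c}}\, du,
\]
where $\int_0^\infty g(u)\, du = \ga$ because $g/\ga$ is an inverse-gamma density with shape and scale both equal to $\al$, which concentrates near $u = 1$ at the scale $1/\min\{\al^{1/2}, \al\}$.

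First I would split the range of $u$ at the threshold where $t_i^{\ep\ga}\min\{u,1/u\} = 1$, i.e. at $\min\{u,1/u\} = t_i^{-\ep\ga}$, writing the integral as a bulk part where $\min\{u,1/u\}$ exceeds a suitable power of $t_i$ and a tail part where it is smaller. On the bulk, the hypothesis $(\ep/4)(\min\{\al, \al^{1/2}\})\ga\log t_i \ge \om^\eta$ guarantees that $\ep\ga\log t_i$ dominates $|\log u|$ over essentially all of the inverse-gamma mass, so that $1 + \log\{1 + t_i^{\ep\ga}\min\{u,1/u\}\}$ is of order $\ep\ga\log t_i$ and the single-log ratio is of order $1/\ga$; multiplying by the total mass $\ga$ of $g$ cancels this factor. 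The double-log ratio I would handle via Lemma \ref{lem:log}(ii): taking $v = t_i^{\ep\ga - 1}\min\{u,1/u\}$, so that $1/v = t_i^{1-\ep\ga}/\min\{u,1/u\}$, the loglog factor is controlled through $1 + \log\{1 + \log(1 + 1/v)\}$, and combining this with the quantitative link between $\ga$ and $\log t_i$ furnished by the hypothesis produces exactly the $\{1 + \log(1 + 1/\ga)\}^{1+c}$ factor of the target bound, with Lemma \ref{lem:log}(i) used in parallel to make the single-log cancellation rigorous.

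On the tail part, where $\min\{u,1/u\}$ lies below the chosen power of $t_i$, I would bound both ratios crudely by their maximal value $\{1 + \log(1 + t_i)\}[1 + \log\{1 + \log(1+t_i)\}]^{1+c} = O(\om(\log\om)^{1+c})$, and control the residual inverse-gamma mass using Lemma \ref{lem:L}(i) for the lower tail $u \to 0$ and Lemma \ref{lem:L}(iii) for the upper tail $u \to \infty$. The point is that these bounds decay like a high power of (equivalently, exponentially in) $\al\ep\ga\log t_i$, and the hypothesis $(\ep/4)(\min\{\al,\al^{1/2}\})\ga\log t_i \ge \om^\eta$ forces this decay to outpace the $\om(\log\om)^{1+c}$ growth as $\om \to \infty$, leaving an $O(1)$ contribution. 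The auxiliary hypothesis $\al\ga \ge \log(1+\ga)$ or $\ga \le \exp(\om^\eta/4)$ is what lets this estimate close in the regime of very large $\ga$, where the bulk localization would otherwise fail.

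The main obstacle I anticipate is obtaining all of this uniformly across the whole range of $(\al, \ga)$: in particular, the simultaneous treatment of small $\al$ (where the relevant concentration scale of $g$ is $\al$, not $\al^{1/2}$, explaining the appearance of $\min\{\al, \al^{1/2}\}$) and of large $\al$ (where the factor $e^\al$ implicit in Lemma \ref{lem:L}(iii) must be absorbed by the $\al^{1/2}\om^\eta$-order decay that the hypothesis supplies). The other delicate point is extracting the sharp $\log(1 + 1/\ga)$ rather than the larger $\log\log t_i$ in the double-log factor, which is exactly where the quantitative lower bound on $(\min\{\al,\al^{1/2}\})\ga\log t_i$, and not merely its positivity, is essential.
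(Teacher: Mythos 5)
Your overall strategy coincides with the paper's: start from the first inequality of Lemma \ref{lem:L_2_new}, split the $u$-integral at $t_i^{\pm\ga\ep/2}$, bound the bulk by a $\ga$-dependent constant via the log-ratio lemma, and kill the two tails by playing the Gaussian/exponential decay of Lemma \ref{lem:L} (parts (i) and (iii), plus part (ii) in the regime $\ga\le 1$, which you omit but which is needed when $t_i^{\ga\ep/2}$ is of order one) against the crude $\om(\log\om)^{1+c}$ bound, using the hypothesis $(\ep/4)(\min\{\al,\al^{1/2}\})\ga\log t_i\ge\om^{\eta}$ and, for large $\ga$ in the upper tail, the auxiliary alternative $\al\ga\ge\log(1+\ga)$ or $\ga\le\exp(\om^{\eta}/4)$. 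Your remarks on the roles of $\min\{\al,\al^{1/2}\}$ and of the $e^{\al}$ factor in Lemma \ref{lem:L}(iii) are also on target.

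The one step that does not work as written is the extraction of the factor $\{1+\log(1+1/\ga)\}^{1+c}$ on the bulk. Applying Lemma \ref{lem:log}(ii) with $v=t_i^{\ep\ga-1}\min\{u,1/u\}$ gives, on the bulk where $\min\{u,1/u\}\ge t_i^{-\ep\ga/2}$, a bound of the form $1+\log\{1+\log(1+t_i^{1-\ep\ga/2})\}$, which is of order $\log\log t_i$; this is \emph{not} dominated by $M\{1+\log(1+1/\ga)\}$ uniformly (take $\ga\asymp 1/\log\log t_i$, which is compatible with $(\ep/4)(\min\{\al,\al^{1/2}\})\ga\log t_i\ge\om^{\eta}$), and the quantitative hypothesis does not repair this, since it constrains $\ga\log t_i$ but not the relation between $\log\log t_i$ and $\log(1/\ga)$. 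The correct mechanism, which the paper uses implicitly, is to work one level down: first bound $\log\{1+(t_i^{\ep/2})^{\ga}\}\ge(\ga\ep/2)\log t_i$, and then apply part (i) of Lemma \ref{lem:log} to the \emph{inner} logarithms, i.e.\ to the ratio $[1+\log\{1+\log(1+t_i)\}]/[1+\log\{1+(\ga\ep/2)\log(1+t_i)\}]\le 1+\log(1+2/(\ga\ep))$, which is the desired $\om$-free bound. Obtaining an $\om$-free bound here matters: the lemma is invoked both to dispose of the case $l=0$ and for the outliers outside $\{i_1,\dots,i_l\}$, and in the former case there is no $\om^{-lb_{\bbet}}$ factor available to absorb an extra $(\log\om)^{(1+c)|\Lc|}$.
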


\begin{proof}[Proof of Lemma \ref{lem:ga}]
See the Supplementary Material of \cite{hohs2022}. 
\end{proof}

\begin{proof}[Proof of Lemma \ref{lem:log}]
See Lemma S1 of \cite{Hamura2022log} and Lemma S2 of \cite{hamura2024robust} for part (i). 
Since for all $u, v > 0$, 
\begin{align}
{1 + \log \{ 1 + \log (1 + u) \} \over 1 + \log \{ 1 + \log (1 + u v) \} } &= 1 + {\log [ \{ 1 + \log (1 + u) \} / \{ 1 + \log (1 + u v) \} ] \over 1 + \log \{ 1 + \log (1 + u v) \} } \non \\
&\le 1 + {\log \{ 1 + \log (1 + 1 / v) \} \over 1 + \log \{ 1 + \log (1 + u v) \} } \non \\
&\le 1 + \log \{ 1 + \log (1 + 1 / v) \} \text{,} \non 
\end{align}
part (ii) follows. 
\end{proof}

\begin{proof}[Proof of Lemma \ref{lem:pointwise}]
When $c' = c$ and $C' = C$, by Lemma \ref{lem:log}, 
\begin{align}
&{\al ^{\al } \ga \over \Ga ( \al )} {e^{- \al / u} \over u^{1 + \al }} {{t_i}^{\ga } \exp ( \ga \x_i^{\top} \bbet ) u \pi ( {t_i}^{\ga } \exp ( \ga \x_i^{\top} \bbet ) u) \over C / ( \{ 1 + \log (1 + t_i ) \} [1 + \log \{ 1 + \log (1 + t_i ) \} ]^{1 + c} )} \non \\
&\le {\al ^{\al } \ga \over \Ga ( \al )} {e^{- \al / u} \over u^{1 + \al }} \{ 1 + \log (1 + t_i ) \} [1 + \log \{ 1 + \log (1 + t_i ) \} ]^{1 + c} \non \\
&\quad \times {M \over C} {1 \over 1 + \log [1 + 1 / \{ {t_i}^{\ga } \exp ( \ga \x_i^{\top} \bbet ) u \} ]} {1 \over 1 + \log [1 + \{ {t_i}^{\ga } \exp ( \ga \x_i^{\top} \bbet ) u \} ]} \non \\
&\quad \times {1 \over \{ 1 + \log (1 + \log [1 + 1 / \{ {t_i}^{\ga } \exp ( \ga \x_i^{\top} \bbet ) u \} ]) \} ^{1 + c}} \non \\
&\quad \times {1 \over \{ 1 + \log (1 + \log [1 + \{ {t_i}^{\ga } \exp ( \ga \x_i^{\top} \bbet ) u\} ]) \} ^{1 + c}} \non \\
&\le {\al ^{\al } \ga \over \Ga ( \al )} {e^{- \al / u} \over u^{1 + \al }} {M \over C} {1 + \log (1 + t_i ) \over 1 + \log [1 + \{ {t_i}^{\ga } \exp ( \ga \x_i^{\top} \bbet ) u \} ]} \non \\
&\quad \times {[1 + \log \{ 1 + \log (1 + t_i ) \} ]^{1 + c} \over \{ 1 + \log (1 + \log [1 + \{ {t_i}^{\ga } \exp ( \ga \x_i^{\top} \bbet ) u \} ]) \} ^{1 + c}} \non \\
&\le \widetilde{M} {\al ^{\al } \over \Ga ( \al )} {e^{- \al / u} \over u^{1 + \al }} \{ 1 + \log (1 + 1 / \ga ) \} ^{1 + c} \non \\
&\quad \times (1 + \log [1 + 1 / \{ \exp ( \ga \x_i^{\top} \bbet ) u \} ]) \{ 1 + \log (1 + \log [1 + 1 / \{ \exp ( \ga \x_i^{\top} \bbet ) u \} ]) \} ^{1 + c} \non \\
&\le \widetilde{M} {\al ^{\al } \over \Ga ( \al )} {e^{- \al / u} \over u^{1 + \al }} \{ 1 + \log (1 + 1 / \ga ) \} ^{1 + c} \non \\
&\quad \times [1 + \log \{ 1 + 1 / \exp ( \ga \x_i^{\top} \bbet ) \} ] (1 + \log [1 + \log \{ 1 + 1 / \exp ( \ga \x_i^{\top} \bbet ) \} ])^{1 + c} \non \\
&\quad \times \{ 1 + \log (1 + 1 / u) \} [1 + \log \{ 1 + \log (1 + 1 / u) \} ]^{1 + c} \non 
\end{align}
for all $u \in (0, \infty )$ for large $t_i > 0$ for all $i = 1, \dots , n$ for some $\widetilde{M} > 0$. 
Therefore, by the dominated convergence theorem, we have 
\begin{align}
&\lim_{t_i \to \infty } f( \al , \ga \x_i^{\top} \bbet , \ga ; t_i ) \non \\
&= \int_{0}^{\infty } \Big[ {\al ^{\al } \ga \over \Ga ( \al )} {e^{- \al / u} \over u^{1 + \al }} \lim_{t_i \to \infty } \Big\{ {1 + \log (1 + t_i ) \over 1 + \log \{ 1 + {t_i}^{\ga } \exp ( \ga \x_i^{\top} \bbet ) u \} } \non \\
&\quad \times {[1 + \log \{ 1 + \log (1 + t_i ) \} ]^{1 + c'} \over (1 + \log [1 + \log \{ 1 + {t_i}^{\ga } \exp ( \ga \x_i^{\top} \bbet ) u \} ])^{1 + c'}} \Big\} \Big] du = 1 \non 
\end{align}
for all $i = 1, \dots , n$. 
\end{proof}

\begin{proof}[Proof of Lemma \ref{lem:L_2_new}]
Let $\ep > 0$, $i = 1, \dots , n$, and $( \al , \bbet , \ga ) \in (0, \infty ) \times \mathbb{R} ^p \times (0, \infty )$. 
By assumption (\ref{eq:a_new_2}), 
\begin{align}
&f( \al , \x_i^{\top} \bbet , \ga ; t_i ) \non \\
&= \int_{0}^{\infty } {\al ^{\al } \ga \over \Ga ( \al )} {e^{- \al / u} \over u^{1 + \al }} {{t_i}^{\ga } \exp ( \ga \x_i^{\top} \bbet ) u \pi ( {t_i}^{\ga } \exp ( \ga \x_i^{\top} \bbet ) u) \over C / ( \{ 1 + \log (1 + t_i ) \} [1 + \log \{ 1 + \log (1 + t_i ) \} ]^{1 + c} )} du \non \\
&\le M' \int_{0}^{\infty } \Big[ {\al ^{\al } \ga \over \Ga ( \al )} {e^{- \al / u} \over u^{1 + \al }} \{ 1 + \log (1 + t_i ) \} [1 + \log \{ 1 + \log (1 + t_i ) \} ]^{1 + c} \non \\
&\quad \times {1 \over 1 + \log [1 + 1 / \{ {t_i}^{\ga } \exp ( \ga \x_i^{\top} \bbet ) u \} ]} {1 \over 1 + \log [1 + \{ {t_i}^{\ga } \exp ( \ga \x_i^{\top} \bbet ) u \} ]} \non \\
&\quad \times {1 \over \{ 1 + \log (1 + \log [1 + 1 / \{ {t_i}^{\ga } \exp ( \ga \x_i^{\top} \bbet ) u \} ]) \} ^{1 + c}} \non \\
&\quad \times {1 \over \{ 1 + \log (1 + \log [1 + \{ {t_i}^{\ga } \exp ( \ga \x_i^{\top} \bbet ) u \} ]) \} ^{1 + c}} \Big] du \non \\
&\le M' \int_{0}^{\infty } \Big\{ {\al ^{\al } \ga \over \Ga ( \al )} {e^{- \al / u} \over u^{1 + \al }} {1 + \log (1 + t_i ) \over 1 + \log \{ 1 + \exp ( \ga | \log t_i + \x_i^{\top} \bbet + \log u^{1 / \ga } |) \} } \non \\
&\quad \times {[1 + \log \{ 1 + \log (1 + t_i ) \} ]^{1 + c} \over (1 + \log [1 + \log \{ 1 + \exp ( \ga | \log t_i + \x_i^{\top} \bbet + \log u^{1 / \ga } |) \} ])^{1 + c}} \Big\} du \non 
\end{align}
for some $M' > 0$. 
Therefore, 
\begin{align}
&f( \al , \x_i^{\top} \bbet , \ga ; t_i ) \non \\
&\le M' \int_{0}^{\infty } \Big\{ {\al ^{\al } \ga \over \Ga ( \al )} {e^{- \al / u} \over u^{1 + \al }} {1 + \log (1 + t_i ) \over 1 + \log \{ 1 + \exp ( \ga | \log t_i + \x_i^{\top} \bbet | - | \log u| ) \} } \non \\
&\quad \times {[1 + \log \{ 1 + \log (1 + t_i ) \} ]^{1 + c} \over (1 + \log [1 + \log \{ 1 + \exp ( \ga | \log t_i + \x_i^{\top} \bbet | - | \log u|) \} ])^{1 + c}} \Big\} du \non \\
&= M' \int_{0}^{\infty } \Big\{ {\al ^{\al } \ga \over \Ga ( \al )} {e^{- \al / u} \over u^{1 + \al }} {1 + \log (1 + t_i ) \over 1 + \log \{ 1 + \exp ( \ga | \log t_i + \x_i^{\top} \bbet |) \min \{ u, 1 / u \} \} } \non \\
&\quad \times {[1 + \log \{ 1 + \log (1 + t_i ) \} ]^{1 + c} \over (1 + \log [1 + \log \{ 1 + \exp ( \ga | \log t_i + \x_i^{\top} \bbet |) \min \{ u, 1 / u \} \} ])^{1 + c}} \Big\} du \non \\
&= M' \int_{0}^{\infty } \Big[ {\al ^{\al } \ga \over \Ga ( \al )} {e^{- \al / u} \over u^{1 + \al }} {1 + \log (1 + t_i ) \over 1 + \log [1 + \{ \exp (| \log t_i + \x_i^{\top} \bbet |) \min \{ u^{1 / \ga } , 1 / u^{1 / \ga } \} \} ^{\ga } ]} \non \\
&\quad \times {[1 + \log \{ 1 + \log (1 + t_i ) \} ]^{1 + c} \over \{ 1 + \log (1 + \log [1 + \{ \exp (| \log t_i + \x_i^{\top} \bbet |) \min \{ u^{1 / \ga } , 1 / u^{1 / \ga } \} \} ^{\ga } ]) \} ^{1 + c}} \Big] du \non \\
&\le M' \int_{0}^{\infty } \Big\{ {\al ^{\al } \ga \over \Ga ( \al )} {e^{- \al / u} \over u^{1 + \al }} {1 + \log (1 + t_i ) \over 1 + \log \{ 1 + ( {t_i}^{\ep } \min \{ u^{1 / \ga } , 1 / u^{1 / \ga } \} )^{\ga } \} } \non \\
&\quad \times {[1 + \log \{ 1 + \log (1 + t_i ) \} ]^{1 + c} \over (1 + \log [1 + \log \{ 1 + ( {t_i}^{\ep } \min \{ u^{1 / \ga } , 1 / u^{1 / \ga } \} )^{\ga } \} ])^{1 + c}} \Big\} du \non 
\end{align}
if $| \log t_i + \x_i^{\top} \bbet | > \ep | \log t_i |$. 
\end{proof}

\begin{proof}[Proof of Lemma \ref{lem:K_new}]
We have 
\begin{align}
&f( \al , \x_i^{\top} \bbet , \ga ; t_i ) \non \\
&= \int_{0}^{\infty } {\al ^{\al } \ga \over \Ga ( \al )} {e^{- \al / u} \over u^{1 + \al }} {{t_i}^{\ga } \exp ( \ga \x_i^{\top} \bbet ) u \pi ( {t_i}^{\ga } \exp ( \ga \x_i^{\top} \bbet ) u) \over C' / [ \{ 1 + \log (1 + t_i ) \} \{ 1 + \log [1 + \log (1 + t_i )] \} ^{1 + c'} ]} du \non \\
&\le {\al ^{\al } e^{- \al } \over \Ga ( \al )} \ga \int_{0}^{\infty } {{t_i}^{\ga } \exp ( \ga \x_i^{\top} \bbet ) \pi ( {t_i}^{\ga } \exp ( \ga \x_i^{\top} \bbet ) u) \over C' / [ \{ 1 + \log (1 + t_i ) \} \{ 1 + \log [1 + \log (1 + t_i )] \} ^{1 + c'} ]} du \non \\
&= {\{ 1 + \log (1 + t_i ) \} \{ 1 + \log [1 + \log (1 + t_i )] \} ^{1 + c'} \over C'} {\al ^{\al } e^{- \al } \over \Ga ( \al )} \ga \non \\
&\le \begin{cases} M ( \min \{ \al , \al ^{1 / 2} \} ) \ga \text{,} & \text{if $i \in \Kc$} \text{,} \\ M \om ( \log \om )^{1 + c'} ( \min \{ \al , \al ^{1 / 2} \} ) \ga \text{,} & \text{if $i \in \Lc$} \text{,} \end{cases} \non 
\end{align}
for some $M > 0$. 
\end{proof}

\begin{proof}[Proof of Lemma \ref{lem:L}]
For part (i), 
\begin{align}
\int_{0}^{1 / (1 + \De )} {\al ^{\al } \over \Ga ( \al )} {e^{- \al / u} \over u^{1 + \al }} du &= \int_{1 + \De }^{\infty } {\al ^{\al } \over \Ga ( \al )} w^{\al - 1} e^{- \al w} dw \non \\
&= \int_{\De }^{\infty } {\al ^{\al } e^{- \al } \over \Ga ( \al )} {1 \over 1 + s} e^{- \al \{ s - \log (1 + s) \} } ds \text{.} \non 
\end{align}
Note that for all $s > 0$, 
\begin{align}
\log (1 + s) &= - \log \Big( 1 - {s \over 1 + s} \Big) = \sum_{k = 1}^{\infty } {1 \over k} \Big( {s \over 1 + s} \Big) ^k \non \\
&\le {s \over 1 + s} + {1 \over 2} \sum_{k = 2}^{\infty } \Big( {s \over 1 + s} \Big) ^k = {s \over 1 + s} + {1 \over 2} {s^2 \over 1 + s} \text{.} \non 
\end{align}
Then 
\begin{align}
\int_{0}^{1 / (1 + \De )} {\al ^{\al } \over \Ga ( \al )} {e^{- \al / u} \over u^{1 + \al }} du &\le \int_{\De }^{\infty } {\al ^{\al } e^{- \al } \over \Ga ( \al )} {1 \over 1 + s} \exp \Big( - {1 \over 2} \al {s^2 \over 1 + s} \Big) ds \non \\
&= \int_{1}^{\infty } {\al ^{\al } e^{- \al } \over \Ga ( \al )} {\De \over 1 + \De t} \exp \Big( - {1 \over 2} \al {\De ^2 t^2 \over 1 + \De t} \Big) dt \non \\
&\le \int_{1}^{\infty } 1( \De t \le 1) {\al ^{\al } e^{- \al } \over \Ga ( \al )} {\De \over 1 + \De t} \exp \Big( - {1 \over 2} \al {\De ^2 t^2 \over 2} \Big) dt \non \\
&\quad + \int_{1}^{\infty } 1( \De t > 1) {\al ^{\al } e^{- \al } \over \Ga ( \al )} {\De \over 2 \De ^{1 / 2} t^{1 / 2}} \exp \Big( - {1 \over 2} \al {\De t \over 2} \Big) dt \non \\
&\le \int_{1}^{\infty } {\al ^{\al - 1 / 2} e^{- \al } \over \Ga ( \al )} \al ^{1 / 2} \De \exp \Big\{ - {t^2 \over 4} ( \al ^{1 / 2} \De )^2 \Big\} dt \non \\
&\quad + \int_{1}^{\infty } {\al ^{\al - 1 / 2} e^{- \al } \over \Ga ( \al )} ( \al \De )^{1 / 2} \exp \Big( - {t \over 4} \al \De \Big) dt \text{.} \non 
\end{align}

For part (ii), 
\begin{align}
\int_{1 + \De }^{K} {\al ^{\al } \over \Ga ( \al )} {e^{- \al / u} \over u^{1 + \al }} du &= \int_{1 / K}^{1 / (1 + \De )} {\al ^{\al } \over \Ga ( \al )} w^{\al - 1} e^{- \al w} dw \non \\
&= \int_{- (K - 1) / K}^{- \De / (1 + \De )} {\al ^{\al } e^{- \al } \over \Ga ( \al )} {1 \over 1 + s} e^{- \al \{ s - \log (1 + s) \} } ds \text{.} \non 
\end{align}
Note that for all $- 1 < s < 0$, 
\begin{align}
- \log (1 + s) &= - \log \{ 1 - (- s) \} = \sum_{k = 1}^{\infty } {1 \over k} |s|^k \ge - s + {1 \over 2} s^2 \text{.} \non 
\end{align}
Then, since 
\begin{align}
\Big( - {K - 1 \over K}, - {\De \over 1 + \De } \Big) &\subset \Big( - \infty , - {\log (1 + \De ) \over 1 + \log (1 + \De )} \Big) \subset \Big( - \infty , - {\log (1 + \De ) \over 1 + \log K} \Big) \text{,} \non 
\end{align}
it follows that 
\begin{align}
&\int_{1 + \De }^{K} {\al ^{\al } \over \Ga ( \al )} {e^{- \al / u} \over u^{1 + \al }} \non \\
&\le K \int_{- (K - 1) / K}^{- \De / (1 + \De )} {\al ^{\al } e^{- \al } \over \Ga ( \al )} \exp \Big( - \al {1 \over 2} s^2 \Big) ds \non \\
&\le K \int_{- \infty }^{- \{ \log (1 + \De ) \} / (1 + \log K)} {\al ^{\al } e^{- \al } \over \Ga ( \al )} \exp \Big( - \al {1 \over 2} s^2 \Big) ds \non \\
&= K \int_{- \infty }^{- 1} {\log (1 + \De ) \over 1 + \log K} {\al ^{\al } e^{- \al } \over \Ga ( \al )} \exp \Big[ - {t^2 \over 2} \al {\{ \log (1 + \De ) \} ^2 \over (1 + \log K)^2} \Big] dt \non \\
&\le K \int_{- \infty }^{- 1} {\al ^{\al - 1 / 2} e^{- \al } \over \Ga ( \al )} |t| \al ^{1 / 2} {\log (1 + \De ) \over 1 + \log K} \exp \Big[ - {|t|^2 \over 2} \al {\{ \log (1 + \De ) \} ^2 \over (1 + \log K)^2} \Big] dt \text{.} \non 
\end{align}

For part (iii), 
\begin{align}
\int_{K}^{\infty } {\al ^{\al } \over \Ga ( \al )} {e^{- \al / u} \over u^{1 + \al }} du &\le \int_{K}^{\infty } {\al ^{\al } \over \Ga ( \al )} {1 \over u^{1 + \al }} du = {\al ^{\al } \over \Ga ( \al )} {1 \over \al } {1 \over K^{\al }} = {\al ^{\al - 1 / 2} e^{- \al } \over \Ga ( \al )} {1 \over \al ^{1 / 2}} {1 \over (K / e)^{\al }} \non \\
&\le {1 \over \al ^{1 / 2}} {1 \over (K / e)^{\al }} \text{.} \non 
\end{align}
Also, 
\begin{align}
\int_{K}^{\infty } {\al ^{\al } \over \Ga ( \al )} {e^{- \al / u} \over u^{1 + \al }} du &\le {\al ^{\al } \over \Ga ( \al )} {1 \over \al } {1 \over K^{\al }} = {\al ^{\al } \over \Ga ( \al + 1)} {1 \over K^{\al }} \text{.} \non 
\end{align}
This completes the proof. 
\end{proof}

\begin{proof}[Proof of Lemma \ref{lem:K_improved}]
If $( \min \{ \al , \al ^{1 / 2} \} ) \ga \le (4 / N) \log (1 + \ga )$, the inequality follows from Lemma \ref{lem:K_new}. 
If $\ga \le 4 / N$, the inequality follows from Lemma \ref{lem:L_2_new}. 
Suppose that $( \min \{ \al , \al ^{1 / 2} \} ) \ga > (4 / N) \log (1 + \ga )$ and that $\ga > 4 / N$. 
Then, by Lemma \ref{lem:L_2_new} and by assumption, 
\begin{align}
&f( \al , \x_i^{\top} \bbet , \ga ; t_i ) \non \\
&\le M_1 \int_{0}^{\infty } {\al ^{\al } \ga \over \Ga ( \al )} {e^{- \al / u} \over u^{1 + \al }} {1 \over 1 + \log \{ 1 + \exp ( \ga | \log t_i + \x_i^{\top} \bbet | ) \min \{ u, 1 / u \} \} } du \non \\
&\le M_1 \int_{0}^{\infty } {\al ^{\al } \ga \over \Ga ( \al )} {e^{- \al / u} \over u^{1 + \al }} {1 \over 1 + \log (1 + e^{N \ga } \min \{ u, 1 / u \} )} du \non \\
&\le M_1 \int_{0}^{1 / e^{N \ga / 2}} {\al ^{\al } \ga \over \Ga ( \al )} {e^{- \al / u} \over u^{1 + \al }} du \non \\
&\quad + M_1 \int_{1 / e^{N \ga / 2}}^{e^{N \ga / 2}} {\al ^{\al } \ga \over \Ga ( \al )} {e^{- \al / u} \over u^{1 + \al }} {1 \over 1 + \log (1 + e^{N \ga / 2} )} du \non \\
&\quad + M_1 \int_{e^{N \ga / 2}}^{\infty } {\al ^{\al } \ga \over \Ga ( \al )} {e^{- \al / u} \over u^{1 + \al }} du \non \\
&\le M_1 \ga \int_{0}^{1 / \{ 1 + (N\ga / 4)^2 \} } {\al ^{\al } \over \Ga ( \al )} {e^{- \al / u} \over u^{1 + \al }} du + M_1 {2 \over N} + M_1 \ga \int_{e^{N \ga / 2}}^{\infty } {\al ^{\al } \over \Ga ( \al )} {e^{- \al / u} \over u^{1 + \al }} du \non 
\end{align}
for some $M_1 > 0$. 
Therefore, by Lemma \ref{lem:L}, 
\begin{align}
&f( \al , \x_i^{\top} \bbet , \ga ; t_i ) \non \\
&\le M_1 \ga \Big[ \int_{1}^{\infty } \al ^{1 / 2} \De \exp \Big\{ - {t^2 \over 4} ( \al ^{1 / 2} \De )^2 \Big\} dt \non \\
&\quad + \int_{1}^{\infty } ( \al \De )^{1 / 2} \exp \Big( - {t \over 4} \al \De \Big) dt \Big] \Big| _{\De = (N
\ga / 4)^2} + M_1 {2 \over N} + M_2 \ga {1 \over e^{N \al \ga / 4}} \non \\
&\le M_3 \Big[ \ga \Big\{ \exp \Big( - {1 \over 12} {\De _1}^2 \Big) \int_{1}^{\infty } \exp \Big( - {t^2 \over 12} {\De _1}^2 \Big) dt \Big\} \Big| _{\De _1 = \al ^{1 / 2} (N \ga / 4)^2} \non \\
&\quad + \ga \Big\{ \exp \Big( - {1 \over 12} \De _2 \Big) \int_{1}^{\infty } \exp \Big( - {t \over 12} \De _2 \Big) dt \Big\} \Big| _{\De _2 = \al (N \ga / 4)^2} + 1 + \ga {1 \over e^{N \al \ga / 4}} \Big] \non \\
&\le M_3 \Big[ \ga \Big\{ \exp \Big( - {1 \over 12} {\De _1}^2 \Big) \int_{1}^{\infty } \exp \Big( - {t^2 \over 12} {\De _1}^2 \Big) dt \Big\} \Big| _{\De _1 = (N \ga / 4) \log (1 + \ga )} + 1 \non \\
&\quad + \ga \Big\{ \exp \Big( - {1 \over 12} \De _2 \Big) \int_{1}^{\infty } \exp \Big( - {t \over 12} \De _2 \Big) dt \Big\} \Big| _{\De _2 = (N \ga / 4) \log (1 + \ga )} + \ga {1 \over e^{\log (1 + \ga )}} \Big] \le M_4 \non 
\end{align}
for some $M_2 , M_3 , M_4 > 0$. 
\end{proof}

\begin{proof}[Proof of Lemma \ref{lem:L_improved}]
By Lemma \ref{lem:L_2_new} and by assumption, 
\begin{align}
&f( \al , \x_i^{\top} \bbet , \ga ; t_i ) \non \\
&\le M_1 \int_{0}^{\infty } \Big\{ {\al ^{\al } \ga \over \Ga ( \al )} {e^{- \al / u} \over u^{1 + \al }} {1 + \log (1 + t_i ) \over 1 + \log \{ 1 + ( {t_i}^{\ep } \min \{ u^{1 / \ga } , 1 / u^{1 / \ga } \} )^{\ga } \} } \non \\
&\quad \times {[1 + \log \{ 1 + \log (1 + t_i ) \} ]^{1 + c} \over (1 + \log [1 + \log \{ 1 + ( {t_i}^{\ep } \min \{ u^{1 / \ga } , 1 / u^{1 / \ga } \} )^{\ga } \} ])^{1 + c}} \Big\} du \non \\
&\le M_1 \Big\{ \ga \int_{0}^{1 / {t_i}^{\ga \ep / 2}} {\al ^{\al } \over \Ga ( \al )} {e^{- \al / u} \over u^{1 + \al }} \{ 1 + \log (1 + t_i ) \} [1 + \log \{ 1 + \log (1 + t_i ) \} ]^{1 + c} du \non \\
&\quad + \int_{1 / {t_i}^{\ga \ep / 2}}^{{t_i}^{\ga \ep / 2}} {\al ^{\al } \over \Ga ( \al )} {e^{- \al / u} \over u^{1 + \al }} {\ga \{ 1 + \log (1 + t_i ) \} \over 1 + \log \{ 1 + ( {t_i}^{\ep / 2} )^{\ga } \} } {[1 + \log \{ 1 + \log (1 + t_i ) \} ]^{1 + c} \over (1 + \log [1 + \log \{ 1 + ( {t_i}^{\ep / 2} )^{\ga } \} ])^{1 + c}} du \non \\
&\quad + \ga \int_{{t_i}^{\ga \ep / 2}}^{\infty } {\al ^{\al } \over \Ga ( \al )} {e^{- \al / u} \over u^{1 + \al }} \{ 1 + \log (1 + t_i ) \} [1 + \log \{ 1 + \log (1 + t_i ) \} ]^{1 + c} du \Big\} \non \\
&\le M_2 \Big[ \{ 1 + \log (1 + 1 / \ga ) \} ^{1 + c} \int_{1 / {t_i}^{\ga \ep / 2}}^{{t_i}^{\ga \ep / 2}} {\al ^{\al } \over \Ga ( \al )} {e^{- \al / u} \over u^{1 + \al }} du \non \\
&\quad + \om ( \log \om )^{1 + c} \ga \int_{0}^{1 / {t_i}^{\ga \ep / 2}} {\al ^{\al } \over \Ga ( \al )} {e^{- \al / u} \over u^{1 + \al }} du + \om ( \log \om )^{1 + c} \ga \int_{{t_i}^{\ga \ep / 2}}^{\infty } {\al ^{\al } \over \Ga ( \al )} {e^{- \al / u} \over u^{1 + \al }} du \Big] \non 
\end{align}
for some $M_1 , M_2 > 0$. 
Clearly, 
\begin{align}
\{ 1 + \log (1 + 1 / \ga ) \} ^{1 + c} \int_{1 / {t_i}^{\ga \ep / 2}}^{{t_i}^{\ga \ep / 2}} {\al ^{\al } \over \Ga ( \al )} {e^{- \al / u} \over u^{1 + \al }} du \le \{ 1 + \log (1 + 1 / \ga ) \} ^{1 + c} \text{.} \non 
\end{align}

By part (i) of Lemma \ref{lem:L}, 
\begin{align}
&\int_{0}^{1 / {t_i}^{\ga \ep / 2}} {\al ^{\al } \over \Ga ( \al )} {e^{- \al / u} \over u^{1 + \al }} du \le \int_{0}^{1 / \{ 1 + ( \ga \ep / 2) \log t_i \} } {\al ^{\al } \over \Ga ( \al )} {e^{- \al / u} \over u^{1 + \al }} du \non \\
&\le \Big[ \int_{1}^{\infty } \al ^{1 / 2} \De \exp \Big\{ - {t^2 \over 4} ( \al ^{1 / 2} \De )^2 \Big\} dt + \int_{1}^{\infty } ( \al \De )^{1 / 2} \exp \Big( - {t \over 4} \al \De \Big) dt \Big] \Big| _{\De = ( \ga \ep / 2) \log t_i} \non \\
&\le M_3 \Big[ \exp \Big\{ - {1 \over 12} ( \al ^{1 / 2} \De )^2 \Big\} \int_{1}^{\infty } \exp \Big\{ - {t^2 \over 12} ( \al ^{1 / 2} \De )^2 \Big\} dt \non \\
&\quad + \exp \Big( - {1 \over 12} \al \De \Big) \int_{1}^{\infty } \exp \Big( - {t \over 12} \al \De \Big) dt \Big] \Big| _{\De = ( \ga \ep / 2) \log t_i} \non \\
&\le M_3 \Big\{ \exp \Big( - {\De ^2 \over 12} \Big) \int_{1}^{\infty } \exp \Big( - {t^2 \over 12} \De ^2 \Big) dt \non \\
&\quad + \exp \Big( - {\De \over 12} \Big) \int_{1}^{\infty } \exp \Big( - {t \over 12} \De \Big) dt \Big\} \Big| _{\De = ( \min \{ \al , \al ^{1 / 2} \} ) ( \ga \ep / 2) \log t_i} \non \\
&\le M_3 \Big\{ \exp \Big( - {\De ^2 \over 12} \Big) \int_{1}^{\infty } \exp \Big( - {t^2 \over 12} \De ^2 \Big) dt \non \\
&\quad + \exp \Big( - {\De \over 12} \Big) \int_{1}^{\infty } \exp \Big( - {t \over 12} \De \Big) dt \Big\} \Big| _{\De = 2 \om ^{\eta }} \non \\
&\le M_4 \Big\{ \exp \Big( - {\om ^{2 \eta } \over 3} \Big) + \exp \Big( - {\om ^{\eta } \over 6} \Big) \Big\} \non 
\end{align}
for some $M_3 , M_4 > 0$. 
Similarly, 
\begin{align}
&\int_{0}^{1 / {t_i}^{\ga \ep / 2}} {\al ^{\al } \over \Ga ( \al )} {e^{- \al / u} \over u^{1 + \al }} du \le \int_{0}^{1 / \{ 1 + ( \ga \ep / 4)^2 ( \log t_i )^2 \} } {\al ^{\al } \over \Ga ( \al )} {e^{- \al / u} \over u^{1 + \al }} du \non \\
&\le \Big[ \int_{1}^{\infty } \al ^{1 / 2} \De \exp \Big\{ - {t^2 \over 4} ( \al ^{1 / 2} \De )^2 \Big\} dt + \int_{1}^{\infty } ( \al \De )^{1 / 2} \exp \Big( - {t \over 4} \al \De \Big) dt \Big] \Big| _{\De = ( \ga \ep / 4)^2 ( \log t_i )^2} \non \\
&\le M_5 \Big[ \exp \Big\{ - {1 \over 12} ( \al ^{1 / 2} \De )^2 \Big\} \int_{1}^{\infty } \exp \Big\{ - {t^2 \over 12} ( \al ^{1 / 2} \De )^2 \Big\} dt \non \\
&\quad + \exp \Big( - {1 \over 12} \al \De \Big) \int_{1}^{\infty } \exp \Big( - {t \over 12} \al \De \Big) dt \Big] \Big| _{\De = ( \ga \ep / 4)^2 ( \log t_i )^2} \non \\
&\le M_5 \Big\{ \exp \Big( - {\De ^2 \over 12} \Big) \int_{1}^{\infty } \exp \Big( - {t^2 \over 12} \De ^2 \Big) dt \non \\
&\quad + \exp \Big( - {\De \over 12} \Big) \int_{1}^{\infty } \exp \Big( - {t \over 12} \De \Big) dt \Big\} \Big| _{\De = ( \min \{ \al , \al ^{1 / 2} \} ) ( \ga \ep / 4)^2 ( \log t_i )^2} \non \\
&\le M_5 \Big\{ \exp \Big( - {\De ^2 \over 12} \Big) \int_{1}^{\infty } \exp \Big( - {t^2 \over 12} \De ^2 \Big) dt \non \\
&\quad + \exp \Big( - {\De \over 12} \Big) \int_{1}^{\infty } \exp \Big( - {t \over 12} \De \Big) dt \Big\} \Big| _{\De = \om ^{\eta } ( \ga \ep / 4) \log t_i} \non \\
&\le M_6 \Big[ \exp \Big\{ - {\om ^{2 \eta } \over 12} ( \ga \ep / 4)^2 ( \log t_i )^2 \Big\} + \exp \Big\{ - {\om ^{\eta } \over 12} ( \ga \ep / 4) \log t_i \Big\} \Big] \non 
\end{align}
for some $M_5 , M_6 > 0$ if $\ga \ge 1$. 
Therefore, 
\begin{align}
&\om ( \log \om )^{1 + c} \ga \int_{0}^{1 / {t_i}^{\ga \ep / 2}} {\al ^{\al } \over \Ga ( \al )} {e^{- \al / u} \over u^{1 + \al }} du \le M_7 \non 
\end{align}
for some $M_7 > 0$. 

Suppose first that $\ga \le 1$. 
Then, by parts (ii) and (iii) of Lemma \ref{lem:L}, 
\begin{align}
&\int_{{t_i}^{\ga \ep / 2}}^{\infty } {\al ^{\al } \over \Ga ( \al )} {e^{- \al / u} \over u^{1 + \al }} du \non \\
&\le 1( {t_i}^{\ga \ep / 2} \le e + 1) \int_{{t_i}^{\ga \ep / 2}}^{e + 1} {\al ^{\al } \over \Ga ( \al )} {e^{- \al / u} \over u^{1 + \al }} du + \int_{\max \{ {t_i}^{\ga \ep / 2} , e + 1 \} }^{\infty } {\al ^{\al } \over \Ga ( \al )} {e^{- \al / u} \over u^{1 + \al }} du \non \\
&\le \Big( 1( {t_i}^{\ga \ep / 2} \le e + 1) \non \\
&\quad \times (e + 1) \int_{- \infty }^{- 1} |t| \al ^{1 / 2} {( \log t_i ) \ga \ep / 2 \over 1 + \log (e + 1)} \exp \Big[ - {|t|^2 \over 2} \al {\{ ( \log t_i ) \ga \ep / 2 \} ^2 \over \{ 1 + \log (e + 1) \} ^2} \Big] dt \non \\
&\quad + \min \Big\{ {\al ^{\al } \over \Ga ( \al + 1)} {1 \over ( \max \{ {t_i}^{\ga \ep / 2} , e + 1 \} )^{\al }}, {1 \over \al ^{1 / 2}} {1 \over \{ ( \max \{ {t_i}^{\ga \ep / 2} , e + 1 \} ) / e \} ^{\al }} \Big\} \Big) \non \\
&\le M_8 \Big( (e + 1) \exp \Big[ - {1 \over 6} \al {\{ ( \log t_i ) \ga \ep / 2 \} ^2 \over \{ 1 + \log (e + 1) \} ^2} \Big] \int_{- \infty }^{- 1} \exp \Big[ - {|t|^2 \over 6} \al {\{ ( \log t_i ) \ga \ep / 2 \} ^2 \over \{ 1 + \log (e + 1) \} ^2} \Big] dt \non \\
&\quad + 1( \al \le 1) {1 \over ( {t_i}^{\ga \ep / 2} )^{\al }} + 1( \al > 1) {1 \over \max \{ \exp [ \al \{ ( \log t_i ) \ga \ep / 2 - 1 \} ], (1 + 1 / e)^{\al } \} } \Big) \non 
\end{align}
for some $M_8 > 0$. 
Since 
\begin{align}
&\max \{ \exp [ \al \{ ( \log t_i ) \ga \ep / 2 - 1 \} ], (1 + 1 / e)^{\al } \} \non \\
&\ge \begin{cases} \exp \{ \al ( \log t_i ) \ga \ep / 4 \} \ge \exp ( \om ^{\eta } ) \text{,} & \text{if $( \log t_i ) \ga \ep / 4 \ge 1$} \text{,} \\ (1 + 1 / e)^{\om ^{\eta } / \{ ( \log t_i ) \ga \ep / 4 \} } \ge (1 + 1 / e)^{\om ^{\eta }} \text{,} & \text{if $( \log t_i ) \ga \ep / 4 < 1$} \text{,} \end{cases} \non 
\end{align}
it follows that 
\begin{align}
&\om ( \log \om )^{1 + c} \ga \int_{{t_i}^{\ga \ep / 2}}^{\infty } {\al ^{\al } \over \Ga ( \al )} {e^{- \al / u} \over u^{1 + \al }} du \non \\
&\le M_8 \om ( \log \om )^{1 + c} \ga \non \\
&\quad \times \Big( (e + 1) \exp \Big[ - {1 \over 6} \al {\{ ( \log t_i ) \ga \ep / 2 \} ^2 \over \{ 1 + \log (e + 1) \} ^2} \Big] \int_{- \infty }^{- 1} \exp \Big[ - {|t|^2 \over 6} \al {\{ ( \log t_i ) \ga \ep / 2 \} ^2 \over \{ 1 + \log (e + 1) \} ^2} \Big] dt \non \\
&\quad + 1( \al \le 1) {1 \over ( {t_i}^{\ga \ep / 2} )^{\al }} + 1( \al > 1) {1 \over (1 + 1 / e)^{\om ^{\eta }}} \Big) \non \\
&\le M_9 \om ( \log \om )^{1 + c} \ga \Big( \exp \Big[ - {1 \over 6} {(2 \om ^{\eta } )^2 \over \{ 1 + \log (e + 1) \} ^2} \Big] + {1 \over \exp (2 \om ^{\eta } )} + {1 \over (1 + 1 / e)^{\om ^{\eta }}} \Big) \le M_{10} \non 
\end{align}
for some $M_9 , M_{10} > 0$. 
Next, suppose that $\ga > 1$. 
Then, by part (iii) of Lemma \ref{lem:L}, 
\begin{align}
\int_{{t_i}^{\ga \ep / 2}}^{\infty } {\al ^{\al } \over \Ga ( \al )} {e^{- \al / u} \over u^{1 + \al }} du &\le M_{11} \Big\{ 1( \al \le 1) {1 \over ( {t_i}^{\ga \ep / 2} )^{\al }} + 1( \al > 1) {1 \over ( {t_i}^{\ga \ep / 2} / e)^{\al }} \Big\} \non \\
&\le M_{11} {1 \over {t_i}^{\al \ga \ep / 4}} \le M_{11} {1 \over \exp ( \om ^{\eta } / 2)} {1 \over \exp \{ \al \ga ( \ep / 8) \log t_i \} } \text{.} \non 
\end{align}
Therefore, if $\al \ga \ge \log (1 + \ga )$ or $\ga \le \exp ( \om ^{\eta } / 4)$, 
\begin{align}
&\om ( \log \om )^{1 + c} \ga \int_{{t_i}^{\ga \ep / 2}}^{\infty } {\al ^{\al } \over \Ga ( \al )} {e^{- \al / u} \over u^{1 + \al }} du \non \\
&\le M_{11} {\om ( \log \om )^{1 + c} \over \exp ( \om ^{\eta } / 4)} {\ga \over \exp ( \om ^{\eta } / 4) \exp \{ \al \ga ( \ep / 8) \log t_i \} } \le M_{12} \non 
\end{align}
for some $M_{12} > 0$. 
This completes the proof. 
\end{proof}

\section{Importance of doubly log-adjustment: non-robustness of singly log-adjusted priors}
\label{subsec:supp-nonrobust}

Here we discuss the non-robustness of the prior without doubly log-adjustment. 
Under the setting of Section \ref{sec:3} of the main text, instead of using the DLH distribution, we consider the following local prior: 
\begin{align}
\pi _{(1)} ( \la _i ) \equiv  \pi _{(1)} ( \la _i | d) 
= {d \over 1 + \la _i} {1 \over \{ 1 + \log (1 + \la _i ) \} ^{1 + d}} \non 
\end{align}
for all $i = 1, \dots , n$ for $d > 0$.
The difference from the DLH distribution is that the above density holds only a single log-adjustment term while the tail decay is in the same order as DLH without log-factors. 
We show that under the above local prior, the posterior robustness will not hold in general. 
More specifically, the posterior distribution is expected to be approximately proportional to 
\begin{align}
&{1 \over \ga ^{| \Lc | d}} \pi ( \al , \bbe , \ga ) \prod_{i \in \Kc } p( t_i | \al , \bbe , \ga ) \text{,} \non 
\end{align}
which is not proportional to $p( \al , \bbe , \ga | \{ t_i | i \in \Kc \} )$. 

To see this, fix $i = 1, \dots , n$. 
Then 
\begin{align}
{p( t_i | \al , \bbe , \ga ) \over \pi _{(1)} ( t_i )} %
&= \int_{0}^{\infty } {\al ^{\al } \ga \over \Ga ( \al )} {\{ \exp ( \x_i^{\top} \bbe ) \} ^{\al } \over {\la _i}^{\al }} {t_i}^{\ga \al } e^{- \al \{ \exp ( \x_i^{\top} \bbe ) / \la _i \} {t_i}^{\ga }} {\pi _{(1)} ( \la _i ) \over t_i \pi _{(1)} ( t_i )} d{\la _i} \non \\
&= \int_{0}^{\infty } {\al ^{\al } \ga \over \Ga ( \al )} {e^{- \al / u} \over u^{1 + \al }} {\pi _{(1)} ( {t_i}^{\ga } \exp ( \x_i^{\top} \bbe ) u ) \over t_i \pi _{(1)} ( t_i ) / \{ {t_i}^{\ga } \exp ( \x_i^{\top} \bbe ) u \} } du \text{.} \non 
\end{align}
Note that if $t_i > 1$, then 
\begin{align}
&{\pi _{(1)} ( {t_i}^{\ga } \exp ( \x_i^{\top} \bbe ) u ) \over t_i \pi _{(1)} ( t_i ) / \{ {t_i}^{\ga } \exp ( \x_i^{\top} \bbe ) u \} } \non \\
&= {(1 + t_i ) {t_i}^{\ga } \exp ( \x_i^{\top} \bbe ) u \over t_i \{ 1 + {t_i}^{\ga } \exp ( \x_i^{\top} \bbe ) u \} } { \{ 1 + \log (1 + t_i ) \} ^{1 + d} \over [1 + \log \{ 1 + {t_i}^{\ga } \exp ( \x_i^{\top} \bbe ) u \} ]^{1 + d}} \non \\
&\le {( t_i + t_i ) {t_i}^{\ga } \exp ( \x_i^{\top} \bbe ) u \over t_i \{ 1 + {t_i}^{\ga } \exp ( \x_i^{\top} \bbe ) u \} } { \{ 1 + \log ( t_i + t_i ) \} ^{1 + d} \over \{1 + \log (0 + {t_i}^{\ga } ) \} ^{1 + d}} {\{1 + \log (1 + {t_i}^{\ga } ) \} ^{1 + d} \over [1 + \log \{ 1 + {t_i}^{\ga } \exp ( \x_i^{\top} \bbe ) u \} ]^{1 + d}} \non \\
&\le 2 \Big\{ (1 + \log 2) + {1 \over \ga } \Big\} ^{1 + d} \Big[ 1 + \log \Big\{ 1 + {1 \over \exp ( \x_i^{\top} \bbe ) u} \Big\} \Big]^{1 + d} \text{,} \non 
\end{align}
where the second inequality follows from Lemma \ref{lem:log}, for all $u \in (0, \infty )$. 
Also, note that by Lemma \ref{lem:log}, 
\begin{align}
&{1 + \log (1 + t_i ) \over 1 + \log \{ 1 + {t_i}^{\ga } \exp ( \x_i^{\top} \bbe ) u \} } \non \\
&= {1 + \log t_i \over 1 + \log ( {t_i}^{\ga _i} )} {\{ 1 + \log (1 + t_i ) \} / (1 + \log t_i ) \over \{ 1 + \log (1 + {t_i}^{\ga _i} ) \} / \{ 1 + \log ( {t_i}^{\ga _i} ) \} } {1 + \log (1 + {t_i}^{\ga _i} ) \over 1 + \log \{ 1 + {t_i}^{\ga } \exp ( \x_i^{\top} \bbe ) u \} } \to {1 \over \ga } \non 
\end{align}
as $t_i \to \infty $ for all $u \in (0, \infty )$. 
Then 
\begin{align}
\lim_{t_i \to \infty } {p( t_i | \al , \bbe , \ga ) \over \pi _{(1)} ( t_i )} &= \int_{0}^{\infty } {\al ^{\al } \ga \over \Ga ( \al )} {e^{- \al / u} \over u^{1 + \al }} \Big[ \lim_{t_i \to \infty } {\pi _{(1)} ( {t_i}^{\ga } \exp ( \x_i^{\top} \bbe ) u ; d) \over t_i \pi _{(1)} ( t_i ; d) / \{ {t_i}^{\ga } \exp ( \x_i^{\top} \bbe ) u \} } \Big] du = {1 \over \ga ^d} \text{.} \non 
\end{align}
by the dominated convergence theorem.

\section{On a log-Student $t$ model}
\label{subsec:supp-logT} 
We discuss non-robustness of the log-Student $t$ model \citep[e.g.][]{VallejosSteel2015}.
We consider the log-Student $t$ model for survival outcome $t_i$, where the density is given by 
\begin{align}
&{\Ga (1 / 2 + \nu / 2) ( \nu / 2)^{\nu / 2} \over \sqrt{2 \pi } \Ga ( \nu / 2)} {1 \over \si } {1 \over t_i} {1 \over \{ ( \log t_i - \x_i^{\top} \bbe )^2 / (2 \si ^2 ) + \nu / 2 \} ^{1 / 2 + \nu / 2}},
\end{align}
for $i=1,\ldots,n$ and $\nu>0$. 
Let $p( \bbe , \si )$ be a joint prior for $\bbe$ and $\si$. 
Then for all $i = 1, \dots , n$, we have 
\begin{align}
{p( t_i | \bbe , \si ) \over \{ p( t_i | \bbe , \si ) \} |_{( \bbe , \si ) = ( \bm{0} ^{(p)} , 1)}} 
&= {1 \over \si } {\{ 1 + ( \log t_i )^2 / \nu \} ^{1 / 2 + \nu / 2} \over \{ 1 + ( \log t_i - \x_i^{\top} \bbe )^2 / ( \nu \si ^2 ) \} ^{1 / 2 + \nu / 2}} \to 
\si ^{\nu } \non 
\end{align}
as $t_i \to \infty $. 
Therefore, if $\Kc $ and $\Lc $ are the sets of indices of non-outliers and outliers, the posterior distribution of $( \bbe , \si )$ will be approximately proportional to 
\begin{align}
&\si ^{| \Lc | \nu } p( \bbe , \si ) \prod_{i \in \Kc } p( t_i | \bbe , \si ) \text{,} \non 
\end{align}
which is not proportional to the target density in the present setting, namely, $p( \bbe , \si | \{ t_i | i \in \Kc \} )$. 
This is parallel to the result in the Student $t$ case of \cite{gagnon2023theoretical}. 

\end{document}